\renewcommand{\arraystretch}{0.6} % Strech matrices and tables back to regular spacing
\newcommand{\ind}{\perp\!\!\!\!\perp} % Independence symbol
\newcolumntype{d}[1]{@{}D{.}{.}{#1}@{}} % Käytä columneina d{-1} (esim c:n sijasta)
\newtheorem{assumption}{Assumption}
\newtheorem{proposition}{Proposition}
\newtheorem{lemma}{Lemma}
\numberwithin{equation}{section}
\newcommand\blfootnote[1]{
  \begingroup
  \renewcommand\thefootnote{}\footnote{#1}
  \addtocounter{footnote}{-1}
  \endgroup
}
\begin{document}

\begin{titlepage}
\begin{center}
\linespread{1.2}
%\vspace*{0.5cm}
\Large{\textbf{Identification by non-Gaussianity in structural smooth transition vector autoregressive models}}\\

\vspace{0.5cm} 
\Large{Savi Virolainen}\\
\large{University of Helsinki}\\
\vspace{1.0cm}

% HUOM!!! MUISTIINPANOT ON Worknotes-tiedostossa työpöydällä! sstvars_panoja myös melkein, mutta worknotes tuntui jo luontemmalta nimeltä niin laitoin erikseen siihen.

\begin{abstract}
\noindent We show that structural smooth transition vector autoregressive models are statistically identified if the shocks are mutually independent and at most one of them is Gaussian. This extends a known identification result for linear structural vector autoregressions to a time-varying impact matrix. %We also propose an estimation method and establish a sufficient condition for ergodic stationary.
We also propose an estimation method, show how a blended identification strategy can be adopted to address weak identification, and establish a sufficient condition for ergodic stationarity. The introduced methods are implemented in the accompanying R package sstvars. Our empirical application finds that a positive climate policy uncertainty shock reduces production and raises inflation under both low and high economic policy uncertainty, but its effects, particularly on inflation, are stronger during the latter. %with particularly stronger inflation effects in the latter case. %but its effects, particularly on inflation, are stronger during the latter. % 100 sanaa jos laittaa blendedistä maininnan pois ja tuon inflation effectsin laittaa lyhyempäään muoton.

\noindent\textbf{Keywords:} smooth transition vector autoregression, nonlinear SVAR, identification by non-Gaussianity, non-Gaussian SVAR, climate policy uncertainty\\%[2.0cm]
\end{abstract}

% HUOM! ER-juttuja:
% -acknowledgementsit ja declaration of interest statement ennen referenssejä
% -tablet ja figuret ja figure captionit erikseen loppuun TAI EI, theory and methodsin alla neuvoon että tekstin sekaan; voi laittaa cover letteriin jotain asiasta
% -American spelling style kaikessa (gpt:llä voi korjata lopuksi)

\vfill

\blfootnote{The author thanks Markku Lanne for the helpful discussions and comments as well as the Research Council of Finland for the financial support (Grant 347986).}
\blfootnote{Contact address: Savi Virolainen, Faculty of Social Sciences, University of Helsinki, P. O. Box 17, FI–00014 University of Helsinki, Finland; e-mail: savi.virolainen@helsinki.fi. ORCiD ID: 0000-0002-5075-6821.}
\blfootnote{The author does not have conflicts of interest to declare.}

\end{center}
\end{titlepage}

\section{Introduction}
Linear structural vector autoregressive (SVAR) models have become a standard tool in empirical macroeconomics, as they can effectively capture constant dynamic relationships among the variables and facilitate tracing out the causal effects of economic shocks. Macroeconomic systems may, however, exhibit variation in their dynamics, induced by crises, policy shifts, or business cycle fluctuations, for example. Hence, also the effects of the shocks may depend on the state of the economy. Such features cannot be accommodated by linear SVAR models, and therefore nonlinear SVAR models that are able to capture such features are often employed \citep[see, e.g.,][Chapter~18]{Kilian+Lutkepohl:2017}. In nonlinear SVAR models, the structural shocks are typically identified with conventional methods that impose economically interpretable but restrictive assumptions, such as zero contemporaneous interactions among some of the variables, which are not always plausible in practice. To address this issue, statistical identification methods relying on the statistical properties of the data can be used \citep[see][Chapter~14]{Kilian+Lutkepohl:2017}. %, as is common in linear SVAR models \citep[see][Chapter~14]{Kilian+Lutkepohl:2017}.  

% This paper considers a major class of nonlinear VAR models referred to as the smooth transition vector autoregressive (STVAR) models \citep[e.g.,][]{Anderson+Vahid:1998}, which can flexibly capture nonlinear data generating dynamics and gradual shifts in the parameter values.

%Smooth transition vector autoregressive (STVAR) models \citep[e.g.,][]{Anderson+Vahid:1998}, which have been widely adopted in empirical macroeconomics due to their ability to flexibly capture nonlinear data generating dynamics and gradual shifts in parameter values. Such variation may arise due to wars, business cycle fluctuations, or policy shifts, for example. Structural STVAR models are particularly useful, as they facilitate tracing out the causal effects of economic shocks, which may vary in time depending on the initial state of the economy as well as on the sign and size of the shock. In order to estimate their effects, the economic shocks need to be identified. Conventional identification methods typically rely on restrictive assumptions such as zero contemporaneous interactions among some of the variables or long-term neutrality of the shocks. Such restrictions with a clear interpretation are advantageous when based on economic reasoning, but often they are economically implausible and imposed just to achieve sufficient identification. To overcome the latter issue, statistical identification methods relying on the statistical properties of the data can be used. % Long term neutrality ei taida epälineaarisessa onnistua ainakaan kovin helposti?

There are two main branches in the statistical identification literature: identification by heteroskedasticity
(\citealp{Rigobon:2003}, \citealp{Lanne+Lutkepohl:2010}, \citealp{Bacchiocchi+Fanelli:2015}, \citealp{Lutkepohl+Netsunajev:2017}, \citealp{Lewis:2021}, \citealp{Virolainen:2025}, and others) and identification by non-Gaussianity (\citealp{Lanne+Meitz+Saikkonen:2017},  \citealp{Lanne+Luoto:2021}, \citealp{Lanne+Liu+Luoto:2023}, and others). To the best of our knowledge, this paper is the first to examine identification by non-Gaussianity in nonlinear SVAR models. Under certain statistical conditions, both types of statistical information typically suffice to identify the shocks in a linear SVAR model. As pointed out by \cite{Lutkepohl+Netsunajev:2017}, among others, identification by heteroskedasticity without additional restrictions, however, has the major drawback in nonlinear SVAR models that it imposes time-invariant (relative) impact effects of the shocks. We show that identification by non-Gaussianity, in turn, facilitates recovering the shocks without such undesirable restrictions. 
% restricts the (relative) impact effects of the shocks to stay constant over time. 

 %\cite{Bacchiocchi+Fanelli:2015} do propose an alternative framework for identification by heteroskedasticity that allows for variation in the impact effects of the shocks across the volatility regimes, but their method requires imposing a number of untestable zero impact effect restrictions that can be difficult to justify economically. 

This paper contributes to the literature on identification by non-Gaussianity by extending the framework of \cite{Lanne+Meitz+Saikkonen:2017} to structural smooth transition vector autoregressive (STVAR) models, which is a major class of nonlinear SVAR models \citep[see, e.g.,][]{Hubrich+Terasvirta:2013}. The STVAR model can flexibly capture nonlinear data generating dynamics by accommodating multiple regimes and gradual as well as abrupt shifts between them, governed by the transition weights. In contrast to its linear counterpart, the impact matrix of the structural STVAR model should generally allow for time-variation in the impact responses of the variables to the shocks, which complicates identification. Nevertheless, similarly to \cite{Lanne+Meitz+Saikkonen:2017}, it turns out that identification is achieved when the shocks are mutually independent and at most one of them is Gaussian. We show that under this condition, the shocks are readily identified up to ordering and signs when the impact matrix of the STVAR model is defined as a weighted sum of the impact matrices of the regimes. While we focus on exogenous, logistic, and threshold transition weights, our results can be extended to other suitable weight functions as well. 
%The weights of the impact matrices are the transition weights of the regimes, which we assume to be either exogenous, logistic, or of the threshold form, but our results can be extended to other suitable weight functions as well. 

%In line with the statistical identification literature, external information is required to label the identified structural shocks as economic shocks. Our nonlinear setup has the additional complication that the same shock should be assigned to the same column of the impact matrix across all regimes. Therefore, since our experience shows that the identification can be weak with respect to the ordering and signs of the columns of the regime-specific impact matrices, we recommend adopting a blended identification strategy that combines identification by non-Gaussianity with additional identifying information \citep[cf.][]{Carriero+Marcellino+Tornese:2024}.
%Also, since the identification can, in our experience, be weak with respect to the ordering and signs of the columns of the regime-specific impact matrices, we recommend adopting a blended identification strategy that combines identification by non-Gaussianity with additional information \citep[cf.][]{Carriero+Marcellino+Tornese:2024}. % or, for a $p$th order model, functions of the preceding $p$ observations.

In line with the statistical identification literature, external information is required to label the identified structural shocks as economic shocks. Our nonlinear setup has the additional complication that the same shock should be assigned to the same column of the impact matrix across all regimes. Moreover, our experience shows that identification is often weak in the sense that there are often multiple local solutions with a fit close to the global solution. We find such local solutions to be often largely quite similar to each other, with the differences frequently but not exclusively related to different ordering and signs of the columns of the regime-specific impact matrices. Nonetheless, since different local solutions may produce different results in structural analysis, we recommend addressing weak identification by adopting a blended identification strategy that combines identification by non-Gaussianity with supplementary identifying information \citep[cf.][]{Carriero+Marcellino+Tornese:2024}.

Our methods are demonstrated in an empirical application to the macroeconomic effects of climate policy uncertainty (shocks) that considers monthly U.S. data from 1987:4 to 2024:12. %, consisting of macroeconomic variables and variables quantifying policy uncertainty. 
Following \cite{Khalil+Strobel:2023} and \cite{Huang+Punzi:2024}, we measure climate policy uncertainty (CPU) with the CPU index of \cite{Gavriilidis:2021}, which is constructed based on the amount of newspaper coverage on topics related to CPU. We are interested in studying how the effects of the CPU shock vary depending on the level of economic policy uncertainty (EPU). Therefore, we fit a two-regime structural logistic STVAR model using the first lag of the EPU index \citep{Baker+Bloom+Davis:2016} as the switching variable. %, and allowing for smooth transitions in the intercept parameters as well as in the impact matrix. 
We find that a positive CPU shock reduces production and raises inflation in times of both low and high EPU, but its effects, particularly on inflation, are stronger in the periods of high EPU. Our results are, hence, in line with the previous literature suggesting that a positive CPU shock reduces production and raises inflation (\citealp{Khalil+Strobel:2023} and \citealp{Huang+Punzi:2024}), while \cite{Fried+Novan+Peterman:2022} found that it decreases production.

The rest of this paper is organized as follows. Section~\ref{sec:stvar} presents the framework of reduced form STVAR models. %and provides examples of the covered models. 
Section~\ref{sec:structstvar} discusses identification of the shocks in structural STVAR models, presents our identification results, and discusses the problem of labeling the shocks as well as how to address weak identification by adopting a blended identification strategy. % in the presence of weak identification. %Section~\ref{sec:stat} presents sufficient conditions for establishing stationarity, ergodicity, and mixing properties of our structural STVAR model. 
Section~\ref{sec:estimation} discusses stationarity of the model and proposes estimating its parameters with a penalized likelihood-based estimator using a three-step procedure. Section~\ref{sec:empapp} presents the empirical application and Section~\ref{sec:conclusions} concludes. Further details can be found in the appendices, and the introduced methods have been implemented to the accompanying R package sstvars \citep{sstvars}, which is available via the CRAN repository.

%Appendices provide identification results for STVAR models with discrete regime-switches; detailed discussion on the stationarity condition; proofs for the stated lemmas, propositions, and theorem; details related to the proposed estimation procedure; a Monte Carlo study assessing the performance of the proposed estimator; and details related to the empirical application. Finally, the introduced methods have been implemented to the accompanying R package sstvars \citep{sstvars}, which is available via the CRAN repository.

\section{Reduced form STVAR models}\label{sec:stvar}

Let $y_t$, $t=1,2,...$, be the $d$-dimensional time series of interest and $\mathcal{F}_{t-1}$ denote the $\sigma$-algebra generated by the random vectors $\lbrace y_{t-j}, j>0 \rbrace$ (i.e., $\mathcal{F}_{t-1}$ contains the information about the history of the process). We consider STVAR models with $M$ regimes and autoregressive order $p$ assumed to satisfy
\begin{align}
%y_t &=\sum_{m=1}^M \alpha_{m,t}\mu_{m,t} + B_te_t, \quad e_{t} \sim IID(0, I_d),\label{eq:stvar1} \\
y_t &=\sum_{m=1}^M \alpha_{m,t}\mu_{m,t} + u_t, \quad u_{t} \sim MD(0, \Omega_{y,t},\nu),\label{eq:stvar1}\\
\mu_{m,t} &= \phi_{m} + \sum_{i=1}^{p}A_{m,i}y_{t-i}, \quad m=1,...,M,\label{eq:stvar2}
\end{align}
where $\phi_{1},...,\phi_{M}\in\mathbb{R}^{d}$ are the intercept parameters; $A_{1,i},...,A_{M,i}\in\mathbb{R}^{d\times d}$, $i=1,...,p$, are the autoregression matrices; $u_t$ is a martingale difference sequence of reduced form innovations; $\Omega_{y,t}$ is the positive definite conditional covariance matrix of $u_t$ (conditional on $\mathcal{F}_{t-1}$), which may depend on $\alpha_{1,t},...,\alpha_{M,t}$ and $y_{t-1},...,y_{t-p}$; and $\nu$ collects any other parameters that the distribution of $u_t$ depends on. 

The transition weights $\alpha_{m,t}$ are assumed to be either exogenous (nonrandom) or $\mathcal{F}_{t-1}$-measurable functions of $\lbrace y_{t-j}, j=1,...,p \rbrace$, and to satisfy $\sum_{m=1}^{M}\alpha_{m,t}=1$ at all $t$. Thus, we accommodate the types of transition weights typically used in macroeconomic applications, including exogenous, logistic, and threshold weights discussed below. The transition weights express the proportions of the regimes the process is in at each point of time and determine how the process shifts between them. %Observe that the linear VAR model is nested to the STVAR model and obtained by assuming constant autoregressive (AR) dynamics across the regimes, i.e., $A_{m,i}=A_i$ and $\phi_{m}=\phi$ for all $i=1,...,p$ and $M=1,...,M$.

It is easy to see that, conditional on $\mathcal{F}_{t-1}$, the conditional mean of the above-described process is $\mu_{y,t} \equiv E[y_t|\mathcal{F}_{t-1}] = \sum_{m=1}^M \alpha_{m,t}\mu_{m,t}$, a weighted sum the regime-specific means $\mu_{m,t}$ with the weights given by the transition weights $\alpha_{m,t}$. The discussion on the specification of the conditional covariance matrix $\Omega_{y,t}$ is postponed to Section~\ref{sec:structstvar}. The linear vector autoregressive (VAR) model is obtained as a special case by assuming constant autoregressive dynamics across the regimes, i.e., $A_{m,i}=A_i$ and $\phi_{m}=\phi$ for all $i=1,...,p$ and $m=1,...,M$.\footnote{See \cite{Hubrich+Terasvirta:2013} for a survey on STVAR literature, including also specifications more general than ours.}

A popular specification is obtained by assuming two regimes ($M=2$) and logistic transition weights. In the logistic STVAR model \citep{Anderson+Vahid:1998}, the transition weights vary according to a logistic function as
\begin{equation}\label{eq:alpha_mt_logistic}
\alpha_{1,t} = 1 - \alpha_{2,t} \ \ \text{and} \ \ \alpha_{2,t} = [1 + \exp\lbrace -\gamma(z_t - c)\rbrace ]^{-1},
\end{equation}
where $c\in\mathbb{R}$ is the location parameter, $\gamma > 0$ is the scale parameter, and $z_t$ is the switching variable, which we assume to be a lagged endogenous variable up to the lag $p$ (that is, $z_t\in\{y_{it-j}, i=1,...,d, j=1,...,p \}$). %which can be an exogenous variable, a lagged endogenous variable, or a linear combination of multiple such variables, for example. HUOM! Logistisessa identifioituvuustodistuksessa käytetään että se on vain joku yksittäinen lagged variable
 The location parameter $c$ determines the midpoint of the transition function, i.e., the value of the switching variable when the weights are equal. The scale parameter $\gamma$, in turn, determines the smoothness of the transitions (the smaller $\gamma$ is, the smoother the transition is), and it is assumed strictly positive so that $\alpha_{2,t}$ is increasing in $z_t$. In the special case when $\gamma\rightarrow \infty$, regime-switches are discrete and the logistic weights reduce to the threshold weights of \cite{Tsay:1998} (see Appendix~\ref{sec:identtvar} for details).

\section{Structural STVAR models identified by non-Gaussianity}\label{sec:structstvar}
\subsection{Structural STVAR models}\label{sec:structstvarsub}

A structural STVAR model is obtained from the reduced form model defined in Section~\ref{sec:stvar} by identifying the serially and mutually uncorrelated structural shocks $e_{t}=(e_{1t},...,e_{dt})$ $(d\times 1)$ from the reduced form innovations $u_t$. Specifically, the structural shocks are recovered from the reduced form innovations with the transformation 
\begin{equation}
e_t=B_{y,t}^{-1}u_t,
\end{equation}
where $B_{y,t}$ is an invertible ($d\times d$) impact matrix that governs the contemporaneous relationships of the shocks and may depend on $\alpha_{1,t},...,\alpha_{M,t}$ and $y_{t-1},...,y_{t-p}$. In other words, assuming a unit variance normalization for the structural shocks, the identification problem amounts to finding an impact matrix $B_{y,t}$ such that the conditional covariance matrix $\Omega_{y,t}=B_{y,t}B_{y,t}'$. However, $B_{y,t}$ is not generally uniquely identified without further restrictions, as many such decompositions exist. Various identification methods have been proposed to resolve this issue \citep[see, e.g.,][]{Kilian+Lutkepohl:2017}.

\iffalse
-Different decompositions are NOT observationally equivalent in the presence of independent non-Gaussian shocks.\\
-Different decompositions can still be used with independent non-Gaussian shocks.\\
-Since independence and non-Gaussianity together identify the model when it is parametrized directly with the impact matrix, the other decompositions based on only $d(d+1)/2$ parameters should be interpreted as underparametrized the models.\\
-Such underparametrization not necessary under independence and non-Gaussianity.\\
-The model does not identify when parametrized with the covariance matrix, but the model identifies when parametrized with the impact matrix.\\
-One covariance matrix produces multiple impact matrices, whereas one impact matrix produces one covariance matrix. Therefore, parametrizing with the covariance matrix is not overidentifying but underparametrizing. 
\fi

%Following the literature on statistical identification by non-Gaussianity in the framework proposed by  \cite{Lanne+Meitz+Saikkonen:2017}, our identifying restriction is that the shocks are mutually independent and at most one of them is Gaussian. Under this assumption, unlike in the Gaussian case, different decompositions of the conditional covariance matrix $\Omega_{y,t}=B_{y,t}B_{y,t}'$ generally lead to observationally distinct models, effectively identifying a static impact matrix and the shocks \citep{Lanne+Meitz+Saikkonen:2017}. 

\cite{Lanne+Meitz+Saikkonen:2017}, among others, have shown that if the shocks are mutually independent and at most one of them is Gaussian, a static impact matrix is readily identified (up to ordering and signs of its columns) without additional restrictions. We extend this result to a time-varying impact matrix, and therefore, it is useful to parametrize the structural model directly with $B_{y,t}$. 
%
%In order to parametrize the model directly with the impact matrix, 
Hence, we make the identity $u_t=B_{y,t}e_t$ explicit in Equation~(\ref{eq:stvar1}) as
\begin{equation}\label{eq:stvarstruct}
y_t =\sum_{m=1}^M \alpha_{m,t}\left(\phi_{m} + \sum_{i=1}^{p}A_{m,i}y_{t-i} \right) + B_{y,t}e_t, \quad e_t\sim IID(0,I_d,\nu),
\end{equation}
where $e_t$ are independent and identically distributed structural errors with identity covariance matrix and a distribution that may depend on the parameter $\nu$. 

To incorporate time-variation in the impact matrix, we specify its functional form. A natural specification for structural STVAR models is to assume a constant impact matrix for each of the regimes and define the impact matrix of the process to be their weighted sum as
\begin{equation}\label{eq:Bt}
B_{y,t} = \sum_{m=1}^M \alpha_{m,t}B_m,
\end{equation}
where $B_1,...,B_M$ are invertible $(d\times d)$ impact matrices of the regimes. Also the impact matrix $B_{y,t}$ needs to be invertible to ensure positive definiteness of the conditional covariance matrix, which does not automatically follow from the invertibility of $B_1,...,B_M$. Nevertheless, it turns out that $B_{y,t}$ defined in~(\ref{eq:Bt}) is invertible for all $t$ almost everywhere in $[B_1:...:B_M] \in\mathbb{R}^{d\times dM}$, as is stated in the following lemma (which is proven in Appendix~\ref{sec:proofinvertibility}).

\begin{lemma}\label{lemma:invertibility}
Suppose $B_1,...,B_M$ are $(d\times d)$ matrices and $\alpha_{1,t},...,\alpha_{M,t}$ scalars such that $\sum_{m=1}^M\alpha_{m,t}=1$ and $\alpha_{m,t}\geq 0$ for all $t$ and $m=1,...,M$. Then, the matrix $B_{y,t}=\sum_{m=1}^M \alpha_{m,t}B_m$ is invertible for all $t$ almost everywhere in $[B_1:...:B_M] \in\mathbb{R}^{d\times dM}$.
\end{lemma}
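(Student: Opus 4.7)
The plan is to reduce the claim to the standard fact that the zero set of a nonzero real polynomial on $\mathbb{R}^{N}$ has Lebesgue measure zero, together with a countable-union argument over the time index $t\in\{1,2,\ldots\}$. For each fixed $t$, consider the map $p_t:\mathbb{R}^{d\times dM}\to\mathbb{R}$ defined by
\[
p_t(B_1,\ldots,B_M) = \det\!\Bigl(\sum_{m=1}^{M}\alpha_{m,t}B_m\Bigr).
\]
Since the entries of $B_{y,t}=\sum_{m=1}^{M}\alpha_{m,t}B_m$ are linear combinations (with coefficients $\alpha_{m,t}$) of the entries of $B_1,\ldots,B_M$, and the determinant of a matrix is a polynomial in its entries, $p_t$ is a polynomial in the $d^{2}M$ coordinates of $[B_1:\cdots:B_M]$.

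The key step, and essentially the only nonroutine one, is to verify that $p_t$ is not identically zero. This is immediate by evaluating at the test point $B_1=\cdots=B_M=I_d$:
\[
p_t(I_d,\ldots,I_d) = \det\!\Bigl(\Bigl(\sum_{m=1}^{M}\alpha_{m,t}\Bigr)I_d\Bigr) = \det(I_d) = 1,
\]
where the constraint $\sum_{m=1}^{M}\alpha_{m,t}=1$ is used crucially (the non-negativity of the weights plays no role here). Hence $p_t\not\equiv 0$, so by the standard fact cited above, its zero set
\[
Z_t := \bigl\{[B_1:\cdots:B_M]\in\mathbb{R}^{d\times dM} : p_t(B_1,\ldots,B_M)=0\bigr\}
\]
has Lebesgue measure zero.

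To finish, observe that the set of tuples $[B_1:\cdots:B_M]$ for which $B_{y,t}$ fails to be invertible for at least one $t$ equals the countable union $\bigcup_{t=1}^{\infty}Z_t$, which is Lebesgue null as a countable union of null sets. Its complement therefore has full Lebesgue measure, delivering invertibility of $B_{y,t}$ for every $t$ outside a null set in $[B_1:\cdots:B_M]$, which is exactly the claim. There is no real obstacle beyond confirming that $p_t$ is a nonzero polynomial, and this is dispatched at once by the identity-matrix test point.
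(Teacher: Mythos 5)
Your proposal is correct and follows essentially the same route as the paper: identify $\det\bigl(\sum_{m}\alpha_{m,t}B_m\bigr)$ as a polynomial in the entries of $B_1,\ldots,B_M$, show it is nonzero via the test point $B_1=\cdots=B_M=I_d$ (using $\sum_m\alpha_{m,t}=1$), invoke the measure-zero property of the zero set of a nontrivial polynomial, and take a countable union over $t$. The only difference is that the paper proves the polynomial-zero-set fact from scratch (by induction and Fubini) as an auxiliary lemma, whereas you cite it as standard, which is acceptable.
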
 % Note: invertibility of B_1,...,B_M not needed to specify in this lemma.
The result of Lemma~\ref{lemma:invertibility} holds almost everywhere in $[B_1:...:B_M] \in\mathbb{R}^{d\times dM}$, which means that the set where the matrices $B_1,...,B_M$ are such that $B_{y,t}$ is singular for some $t$ has Lebesgue measure zero. In other words, invertible matrices $B_1,...,B_M$ generally lead to an invertible impact matrix $B_{y,t}$, excluding some special cases of $B_1,...,B_M$ for which this result does not hold (e.g., if $B_2=-B_1$, then $B_{y,t}=0$ when $\alpha_{1,t}=\alpha_{2,t}=0.5$).
 %\footnote{Note that the set of $[B_1:...:B_M]$ where singularities occur depends both on the matrices $B_1,...,B_M$ and the transition weights $\alpha_{1,t},...,\alpha_{M,t}$.} 
To support intuition, note that the result of Lemma~\ref{lemma:invertibility} implies that if the matrices $B_1,...,B_M$ are drawn by random from a continuous distribution, the probability of obtaining matrices such that $B_{y,t}$ is singular for some $t$ is zero. %In practice, the parameter space can be simply restricted so that parameter values leading to singular $B_{y,t}$ are excluded. 

Assuming an impact matrix of the form~(\ref{eq:Bt}), the conditional covariance matrix of $y_t$, conditional on $\mathcal{F}_{t-1}$, is obtained as \begin{equation}\label{eq:condcovmat}
\Omega_{y,t}=\left(\sum_{m=1}^M \alpha_{m,t}B_m \right) \left(\sum_{m=1}^M \alpha_{m,t}B_m \right)' = \sum_{m=1}^M \alpha_{m,t}^2\Omega_m +  \sum_{m=1}^M\sum_{n=1,n\neq m}^M\alpha_{m,t}\alpha_{n,t}\Omega_{m,n},
\end{equation}
where $\Omega_m\equiv B_mB_m'$ and $\Omega_{m,n}\equiv B_mB_n'$. The conditional covariance matrix of $y_t$ can thus be described as a weighted sum of $M^2$ matrices with the weights varying in time according to the transition weights $\alpha_{m,t}$, $m=1,...,M$. If the process is completely in one of the regimes, i.e., $\alpha_{m,t}=1$ for some $m\in\lbrace 1,...,M\rbrace$, then $\Omega_{y,t}$ reduces to $\Omega_{m}$, implying that this constitutes the conditional covariance matrix of Regime~$m$. When the process is not completely in any of the regimes, $\Omega_{y,t}$ depends on both the regime-specific covariance matrices and the cross terms. Hence, this specification of the structural model is different to the conventional reduced form STVAR specification in which the conditional covariance matrix of $y_t$ is a weighted sum of the covariance matrices of the regimes. Nevertheless, our model specification facilitates exploiting non-Gaussianity of the shocks in their identification as is shown in the next section. %has the form $B_{y,t}B_{y,t}' = \sum_{m=1}^M\alpha_{m,t}\Omega_m$ for the covariance matrices $\Omega_1,...,\Omega_M$ of the regimes.

\subsection{Identification of the shocks by non-Gaussianity}\label{sec:ident_nongaus}

The key identification assumption is that the structural shocks $e_t=(e_{1t},...,e_{dt})$ are mutually independent and at most one of them is Gaussian. We also assume that $e_t$ is an IID sequence with zero mean and identity covariance matrix as is formally stated in the following assumption. % . These conditions are formally stated in the following assumption. 
\begin{assumption}\label{as:shocks}
\begin{enumerate}[label=(\roman*)]
\item The structural error process $e_t=(e_{1t},...,e_{dt})$ is a sequence of independent and identically distributed random vectors such that $e_t$ is independent of $\mathcal{F}_{t-1}$ and each component $e_{it}$, $i=1,...,d$, follows a continuous distribution with zero mean, unit variance, and a density that is strictly positive almost everywhere in $\mathbb{R}$.\label{cond:shockiid}
\item The components of $e_t=(e_{1t},...,e_{dt})$ are mutually independent and at most one of them has a Gaussian marginal distribution.
\end{enumerate}
\end{assumption}
The assumption of a unit variance is merely a normalization since the variance of the shocks is captured by the impact matrix $B_{y,t}$. Also, due to the time-variation of the impact matrix, Assumption~\ref{as:shocks} does not imply that the reduced form innovations $u_t$ are independent nor that they are identically distributed. The assumption of strictly positive density on sets of positive (Lebesgue) measure %is innocuous, but
mainly rules out bounded shock distributions. %, for example. %, but allows distributions typically used in time series econometrics. % It is made the ensure that each component of y has support on sets of positive measure on R. 

%First we present the following lemma \citep[][Lemma~A.1, commonly known as Skitovich–Darmois theorem]{Lanne+Meitz+Saikkonen:2017} that is used in the proof: % Comon Lemma 9 on samankaltainen, mutta eri!
%\begin{lemma}\label{lemma:Gausind}
%Let $X_1,...,X_n$ be independent random variables, and define $Y_1 = \sum_{i=1}^na_iX_i$ and $Y_2 = \sum_{i=1}^nb_iX_i$, where $a_i$ and $b_i$ are constants. If $Y_1$ and $Y_2$ are independent, then the random variables $X_j$ for which $a_jb_j\neq 0$ are all normally distributed.
%\end{lemma}
%In other words, if a random variable is in both of the sums and the sums are still independent, this random variable must be Gaussian. 

It is helpful to distinguish two layers of identification. First, for any fixed $t$ and conditional on $\mathcal{F}_{t-1}$, the impact matrix $B_{y,t}$ in~\eqref{eq:stvarstruct} is identified up to column permutations and sign changes by independent component analysis (ICA) under Assumption~\ref{as:shocks} \citep[cf.][]{Lanne+Meitz+Saikkonen:2017}. Second, under the structural parametrization $B_{y,t}=\sum_{m=1}^M \alpha_{m,t} B_m$ in~\eqref{eq:Bt}, identification of the regime-specific matrices $B_1,...,B_M$ (up to column ordering and signs) additionally requires identification of the transition weight parameters, or in the case of exogenous weights, sufficient variation in the known weights. The next lemma, partly similar to Proposition~1 in \cite{Lanne+Meitz+Saikkonen:2017}, therefore formalizes the per-$t$ ICA result, and, in the case of logistic weights~(\ref{eq:alpha_mt_logistic}) establishes identification of the weight function parameters (and, for completeness, identification of the AR parameters for logistic and exogenous weights). Specific weight functions are assumed here for simplicity, but the result can be extended to other suitable weight functions as well. Identification of $B_1,...,B_M$ then follows from these ingredients.
%
%The next lemma establishes that at each $t$, conditionally on $\mathcal{F}_{t-1}$, the impact matrix $B_{y,t}$ is uniquely identified up to ordering and signs of its columns. %This means that only the impact matrices that are obtained by reordering or changing the signs of the columns of $B_{y,t}$ are observationally equivalent. 
%This lemma, which is partly similar to Proposition~1 in \cite{Lanne+Meitz+Saikkonen:2017}, is then used to show that the impact matrices $B_1,...,B_M$ of the regimes in (\ref{eq:Bt}) are identified. Moreover, Lemma~\ref{lemma:Bt_at_each_t} also shows that if the transition weights are either logistic~(\ref{eq:alpha_mt_logistic}) or exogenous (nonrandom), then also the autoregressive (AR) and weight function parameters are identified. Specific weight functions are assumed here for simplicity, but the result can be extended to other suitable weight functions as well.
%
\begin{lemma}\label{lemma:Bt_at_each_t}
Consider the STVAR model defined in Equations~(\ref{eq:stvarstruct}) and (\ref{eq:Bt}) with Assumption~\ref{as:shocks} satisfied. Then, conditionally on $\mathcal{F}_{t-1}$, at each $t$, $B_{y,t}$ is uniquely identified up to ordering and signs of its columns almost everywhere in $[B_1:...:B_M] \in \mathbb{R}^{d\times dM}$.

Moreover, suppose the transition weights are either exogenous (nonrandom) or follow the logistic process defined in~(\ref{eq:alpha_mt_logistic}) (with $M=2$ assumed), %(with $M=2$ and a lagged endogenous switching variable) 
%, the observed time series is $y_{-p+1},...,y_{0},y_{1},...,y_{T}$, 
and that the following conditions hold:
\begin{enumerate}[label=(\arabic*)]
\item $(\phi_{m},\text{vec}(A_{m,1}),...,\text{vec}(A_{m,p})) \neq (\phi_{n},\text{vec}(A_{n,1}),...,\text{vec}(A_{n,p}))$ for all $m\neq n \in \lbrace 1,...,M\rbrace$, and\label{cond:different_arpars}
\item (for exogenous weights) there exists $M$ indices $t_1,...,t_M\in\lbrace 1,....,T\rbrace$ such that the corresponding coefficient vectors $(\alpha_{1,t_i},...,\alpha_{M,t_i})$, $i=1,...,M$, are linearly independent.\label{cond:linind}
\end{enumerate}
Then, the parameters $\phi_{m},A_{m,1},...A_{m,p}$, $m=1,...,M$, and for logistic models $c$ and $\gamma$, are uniquely identified. % almost everywhere in $(y_{-p+1},...,y_0,y_{1},...,y_{T}) \in \mathbb{R}^{d(T+p)}$ and, if the switching variable $z_t$ defined in Section~\ref{sec:stvar} is exogenous, $(z_1,...,z_T)\in \mathbb{R}^{T}$.
\end{lemma}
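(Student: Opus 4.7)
The plan is to handle the two parts of the lemma separately. The first part would be a per-$t$ independent component analysis (ICA) argument that essentially reduces to Proposition~1 of \cite{Lanne+Meitz+Saikkonen:2017}. The second part would require a distinct argument that equates conditional means across parametrizations and exploits either the linear-independence condition on the weight vectors (exogenous case) or the limiting behavior of the logistic function (logistic case).

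For the first claim, I would fix $t$ and condition on $\mathcal{F}_{t-1}$. By Lemma~\ref{lemma:invertibility}, $B_{y,t}$ is invertible for almost every $[B_1{:}\cdots{:}B_M]\in\mathbb{R}^{d\times dM}$, so $u_t=B_{y,t}e_t$ is a standard linear ICA representation whose components are mutually independent, at most one Gaussian, and of unit variance by Assumption~\ref{as:shocks}. The classical ICA identification theorem (see the discussion preceding Proposition~1 in \cite{Lanne+Meitz+Saikkonen:2017}) then yields uniqueness of $B_{y,t}$ up to column permutation and sign changes.

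For the second claim, I would suppose two parametrizations $\theta$ and $\tilde\theta$ yield the same conditional distribution of $y_t$ given $\mathcal{F}_{t-1}$ for every $t$. In particular, they yield equal conditional means $\mu_{y,t}$ at every realization of $(y_{t-1},\dots,y_{t-p})$ on the support of the joint distribution of the lags, which is all of $\mathbb{R}^{dp}$ under Assumption~\ref{as:shocks}\ref{cond:shockiid}. Writing $x_t=(1,y_{t-1}',\dots,y_{t-p}')'$ and collecting the regime-$m$ parameters into a matrix $\Phi_m$ so that $\mu_{m,t}=\Phi_m x_t$, the requirement becomes $\sum_{m=1}^M \alpha_{m,t}(\Phi_m-\tilde\Phi_m)x_t=0$ in the exogenous case (where $\alpha_{m,t}=\tilde\alpha_{m,t}$), and $[\Phi_1+\alpha_{2,t}(c,\gamma)(\Phi_2-\Phi_1)]x_t=[\tilde\Phi_1+\alpha_{2,t}(\tilde c,\tilde\gamma)(\tilde\Phi_2-\tilde\Phi_1)]x_t$ in the logistic case. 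In the exogenous case, letting $x_t$ range over a basis of $\mathbb{R}^{1+dp}$ yields the matrix equation $\sum_m \alpha_{m,t}(\Phi_m-\tilde\Phi_m)=0$; evaluating at the indices $t_1,\dots,t_M$ from condition~\ref{cond:linind} gives a linear system in the entries of $\Phi_m-\tilde\Phi_m$ whose coefficient matrix $(\alpha_{m,t_i})_{m,i}$ is invertible, forcing $\Phi_m=\tilde\Phi_m$ for all $m$. In the logistic case, I would first vary the coordinates of $x_t$ other than $z_t$ to collapse the vector identity into a matrix equation in the scalar $z_t$; sending $z_t\to\pm\infty$ drives both $\alpha_{2,t}(c,\gamma)$ and $\alpha_{2,t}(\tilde c,\tilde\gamma)$ to $0$ and then $1$, pinning down $\Phi_1=\tilde\Phi_1$ and $\Phi_2=\tilde\Phi_2$; condition~\ref{cond:different_arpars} then guarantees $\Phi_2-\Phi_1\neq 0$, so $\alpha_{2,t}(c,\gamma)=\alpha_{2,t}(\tilde c,\tilde\gamma)$ as functions of $z_t$, and injectivity of the logistic family in $(c,\gamma)$ (its midpoint in $z$ determines $c$ and its slope at the midpoint determines $\gamma$) yields $c=\tilde c$ and $\gamma=\tilde\gamma$.

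The main obstacle I anticipate is rigorously justifying that the equality of $\mu_{y,t}$ holds for every $(y_{t-1},\dots,y_{t-p})\in\mathbb{R}^{dp}$; this propagation of the full-support property of $e_t$ from Assumption~\ref{as:shocks}\ref{cond:shockiid} through the nonlinear recursion to the joint distribution of the lags is standard but not automatic and deserves explicit verification. A lesser technical nuisance in the logistic case is that $z_t$ is itself a coordinate of $x_t$, so separating the variation in $z_t$ from that of the other coordinates must be handled with some care — but the full-support property legitimizes this step as well.
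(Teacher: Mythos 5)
Your overall strategy coincides with the paper's: the first claim is the per-$t$ ICA argument (conditional on $\mathcal{F}_{t-1}$, using Lemma~\ref{lemma:invertibility} for invertibility and the Darmois--Skitovich/Comon theorem as in \cite{Lanne+Meitz+Saikkonen:2017}), and the second claim is proved by equating conditional means on the full support of the lag vector, inverting the weight matrix $W$ in the exogenous case, and exploiting the tail behavior of the logistic function in the endogenous case. Two steps in your logistic argument are, however, asserted too quickly, and they are exactly where the paper does nontrivial extra work. First, after separating the $z_t$-coordinate from the others, the column of the AR-difference matrix attached to $z_t$ enters the identity multiplied by $z_t$ itself, so recovering the intercepts requires $\alpha_{2,t}\,z_t\to 0$ as $z_t\to-\infty$ (and $\alpha_{1,t}\,z_t\to 0$ as $z_t\to+\infty$), i.e., that the exponential decay of the logistic tail dominates the linear growth; the paper verifies this explicitly via L'H\^{o}pital. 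Second, the inference ``Condition~\ref{cond:different_arpars} gives $\Phi_2\neq\Phi_1$, hence $\alpha_{2,t}(c,\gamma)=\alpha_{2,t}(\tilde c,\tilde\gamma)$'' overlooks that for a fixed level $v$ of $z_t$ the admissible regressors form only a $(dp-1)$-dimensional affine slice, on which $(\Phi_2-\Phi_1)x$ can vanish identically even though $\Phi_2\neq\Phi_1$; the paper shows this degeneracy forces the AR difference into the rank-one form $\beta a'$ and hence can occur at no more than one level $v_0$, and then invokes real-analyticity of the logistic weights to conclude equality everywhere. Relatedly, since the conditional-mean identity holds only almost everywhere in the lag vector, the level-set identity is available only for almost every $v$, and the paper upgrades this to all $v$ by an Identity Theorem argument before taking limits. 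None of these points breaks your plan --- each is fixable along the paper's lines --- but the sketch would not go through as written without them.
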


Lemma~\ref{lemma:Bt_at_each_t} is proven in Appendix~\ref{sec:prooflemmaBteach}. Condition~\ref{cond:different_arpars} implies distinguishable regimes, whereas Condition~\ref{cond:linind} guarantees a certain small amount of variation in exogenous weights. Lemma~\ref{lemma:Bt_at_each_t} concludes that at each $t$, conditionally on $\mathcal{F}_{t-1}$, the impact matrix $B_{y,t}$ is unique up to ordering and signs of its columns under Assumption~\ref{as:shocks}. That is, at each $t$‚ changing the ordering or signs of the columns of $B_{y,t}$ would lead to an observationally equivalent model, but changing $B_{y,t}$ in any other way would lead to an observationally distinct model.
It follows that if the impact matrix is time-invariant as in \cite{Lanne+Meitz+Saikkonen:2017}, i.e., $B_{y,t}=B$ for some constant matrix $B$, the structural shocks are identified up to ordering and signs. %Such statistically identified shocks can then often be labelled as economic shocks based on external information \citep[see][]{Lanne+Meitz+Saikkonen:2017}. 
However, when the impact matrix varies over time, two complications arise. 

First, because $B_{y,t}$~(\ref{eq:Bt}) is identified only up to column ordering and signs at each $t$, its unique identification requires constraints on the regime-specific impact matrices $B_1,...,B_M$ such that any reordering or sign changes in their columns lead to observationally distinct $B_{y,t}$ at some $t$. Second, since $B_{y,t}$ is not a matrix of constant parameters but a function of parameters, it needs to be shown that the parameters in its functional form are identified. Also, in addition to $B_1,...,B_M$, the impact matrix $B_{y,t}$ depends on the transition weights $\alpha_{m,t}$, $m=1,...,M$, which must therefore be identified as well. To that end, we assume either the logistic or exogenous transition weights considered in Lemma~\ref{lemma:Bt_at_each_t}.

The following proposition, proven in Appendix~\ref{sec:proofstvarident}, establishes unique identification of $B_1,...,B_M$.
\begin{proposition}\label{prop:stvar_ident}
Consider the STVAR model defined in Equations~(\ref{eq:stvarstruct}) and (\ref{eq:Bt}) with Assumption~\ref{as:shocks} and the conditions of Lemma~\ref{lemma:Bt_at_each_t} satisfied, and suppose the transition weights are either exogenous (nonrandom) or follow the logistic process defined in~(\ref{eq:alpha_mt_logistic}) with $\gamma\in(0,\infty)$ (and $M=2$ assumed). Moreover, suppose %the observed time series is $y_{-p+1},...,y_{0},y_{1},...,y_{T}$, and assume
the following conditions hold:
\begin{enumerate}[label=(\alph*)]
\item the ordering and signs of the columns of $B_1$ are fixed, and\label{cond:stvar_B_1}
\item (for exogenous weights) the vector $(\alpha_{1,t},...,\alpha_{M,t})$ takes a value for some $t$ such that none of its entries is zero.\label{cond:stvar_alpha}
\end{enumerate}
Then, the matrices $B_1,...,B_M$ are uniquely identified almost everywhere in $[B_1:...:B_M] \in\mathbb{R}^{d\times dM}$.
\end{proposition}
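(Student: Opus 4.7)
The plan is to combine the per-$t$ identification result of Lemma~\ref{lemma:Bt_at_each_t} with the structural constraint $B_{y,t}=\sum_m \alpha_{m,t}B_m$ so that the residual signed-permutation ambiguity becomes global and is then pinned down by condition~\ref{cond:stvar_B_1}.

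Suppose $(\tilde{B}_1,\dots,\tilde{B}_M)$ is any parametrization observationally equivalent to $(B_1,\dots,B_M)$. Since the AR coefficients and the weight-function parameters are already identified under the hypotheses of Lemma~\ref{lemma:Bt_at_each_t}, $u_t=y_t-\mu_{y,t}$ has the same conditional distribution under both parametrizations. The per-$t$ ICA argument of Lemma~\ref{lemma:Bt_at_each_t} therefore produces, for each $t$ and for a.e.\ $[B_1:\cdots:B_M]\in\mathbb{R}^{d\times dM}$, a signed-permutation matrix $P_t$ with $\tilde{B}_{y,t}=B_{y,t}P_t$, where both $B_{y,t}$ and $\tilde{B}_{y,t}$ are invertible by Lemma~\ref{lemma:invertibility}. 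The core of the proof is to show that $P_t$ does not depend on $t$; once this is established, $\tilde{B}_m=B_mP$ will follow, and condition~\ref{cond:stvar_B_1} forces $P=I_d$, hence $\tilde{B}_m=B_m$.

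In the logistic case ($M=2$) the argument is analytic. Writing $B_{y,t}=B_1+\alpha_{2,t}(B_2-B_1)$ and similarly for $\tilde{B}_{y,t}$, note that with $\gamma\in(0,\infty)$ the map $z_t\mapsto\alpha_{2,t}$ is continuous and surjective onto $(0,1)$, so the conditioning history sweeps $\alpha_{2,t}$ through a set of positive Lebesgue measure in $(0,1)$. The matrix-valued rational function
\[
\varphi(\alpha)=\bigl[B_1+\alpha(B_2-B_1)\bigr]^{-1}\bigl[\tilde{B}_1+\alpha(\tilde{B}_2-\tilde{B}_1)\bigr]
\]
must then belong to the finite set of signed permutations for every such $\alpha$. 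Hence some specific signed permutation $P$ is attained on an infinite set with an accumulation point, and the identity theorem for rational functions gives $\varphi\equiv P$ on $(0,1)$. Matching coefficients of $\alpha$ in $\tilde{B}_1+\alpha(\tilde{B}_2-\tilde{B}_1)=[B_1+\alpha(B_2-B_1)]P$ gives $\tilde{B}_1=B_1P$ and $\tilde{B}_2=B_2P$.

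In the exogenous case, Condition~\ref{cond:linind} of Lemma~\ref{lemma:Bt_at_each_t} supplies $M$ indices $t_1,\dots,t_M$ with linearly independent weight vectors, so the system $\sum_n\alpha_{n,t_i}\tilde{B}_n=B_{y,t_i}P_{t_i}$, $i=1,\dots,M$, can be inverted to express each $\tilde{B}_m$ as a linear combination of the matrices $B_{y,t_i}P_{t_i}$. Substituting this expression into the relation $\tilde{B}_{y,t}=B_{y,t}P_t$ at any other time $t$ and using the same linear system to expand $B_{y,t}=\sum_i\lambda_i(t)B_{y,t_i}$ yields
\[
\sum_{i=1}^{M}\lambda_i(t)\,B_{y,t_i}\,\bigl(P_{t_i}-P_t\bigr)=0.
\]
Evaluating this at $t=t^*$ from condition~\ref{cond:stvar_alpha} (where $\alpha_{m,t^*}>0$ for every $m$) makes the barycentric coefficients $\lambda_i(t^*)$ generically nonzero, so the equation truly mixes all $M$ matrices $B_{y,t_i}$. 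For any fixed tuple of signed permutations $(P_{t_1},\dots,P_{t_M},P_{t^*})$ with at least one $P_{t_i}\neq P_{t^*}$, the above identity defines a proper algebraic subvariety of $[B_1:\cdots:B_M]$-space; taking the union over the finitely many such tuples still leaves a Lebesgue-null set, outside of which we must have $P_{t_1}=\cdots=P_{t_M}=P_{t^*}=:P$. Inserting $P_{t_i}\equiv P$ into the inverted system then delivers $\tilde{B}_m=B_mP$ for all $m$, and applying the same reasoning at every remaining $t$ confirms $P_t\equiv P$ globally.

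The final step is immediate: condition~\ref{cond:stvar_B_1} fixes the column ordering and signs of $B_1$, so $\tilde{B}_1=B_1$, and since $B_1$ is invertible (a.e.), $B_1P=B_1$ forces $P=I_d$, giving $\tilde{B}_m=B_m$ for every $m$. I expect the exogenous case to be the main obstacle: showing that, for any nontrivial choice of signed permutations $(P_{t_i}-P_{t^*})$, the resulting linear dependence among $\{B_{y,t_i}(P_{t_i}-P_{t^*})\}_{i}$ defines a variety of positive codimension in $[B_1:\cdots:B_M]$ requires a careful enumeration over the finite set of signed permutations and is exactly where the ``almost everywhere'' qualifier in the proposition does its work.
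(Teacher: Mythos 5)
Your overall architecture---reduce everything to showing that the per-$t$ signed permutation $P_t$ in $\tilde{B}_{y,t}=B_{y,t}P_t$ is constant in $t$, then kill the common $P$ with condition~(a)---is sound, and your logistic branch is correct while taking a genuinely different route from the paper. The paper instead enumerates per-regime signed column permutations of $B_2,\dots,B_M$, shows at the date $t^*$ with all weights positive that each nontrivial choice confines $[B_1:\cdots:B_M]$ to a measure-zero set, and only then solves the linear system $\sum_m\alpha_{m,t}\Delta B_m=0$ at two values of $\alpha_{2,t}$. Your rational-function argument ($\varphi(\alpha)$ takes values in the finite set of signed permutations on a positive-measure set of $\alpha$, hence is a constant signed permutation by the identity theorem, and matching coefficients in $\alpha$ gives $\tilde{B}_m=B_mP$) accomplishes both steps at once, is arguably cleaner, and makes transparent why $\gamma<\infty$ matters.

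The exogenous branch, however, has a genuine gap. The coefficients $\lambda_i(t^*)$ in your identity $\sum_{i}\lambda_i(t^*)B_{y,t_i}(P_{t_i}-P_{t^*})=0$ are the coordinates of $(\alpha_{1,t^*},\dots,\alpha_{M,t^*})$ in the basis of the $M$ linearly independent weight vectors; they are determined entirely by the fixed exogenous weights and do not depend on $[B_1:\cdots:B_M]$, so ``generically nonzero'' has no force here and the proposition's almost-everywhere qualifier (which ranges over the $B$'s) cannot rescue it. Concretely, take $M=3$ with weight vectors $(1,0,0)$, $(0,1,0)$, $(1/2,0,1/2)$ at $t_1,t_2,t_3$ and $(0.3,0.4,0.3)$ at $t^*$: condition~(b) holds, yet $\lambda_1(t^*)=0$, so your displayed identity at $t^*$ is satisfied identically in the $B$'s whatever $P_{t_1}$ is and defines no proper subvariety. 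The competitor $\tilde{B}_1=B_1P_{t_1}$, $\tilde{B}_2=B_2$, $\tilde{B}_3=B_1(I_d-P_{t_1})+B_3$ then reproduces $B_{y,t}$ up to a signed column permutation at all four dates, so your intermediate step fails to force $P_{t_1}=P_{t^*}$ (here condition~(a) happens to exclude it because $\tilde{B}_1$ is a signed permutation of $B_1$, but in general $\tilde{B}_1=\sum_i[W^{-1}]_{1i}B_{y,t_i}P_{t_i}$ mixes several distinct $P_{t_i}$ and is not a signed permutation of $B_1$, so that rescue is not available). The paper avoids this by keeping the constraint at $t^*$ in terms of the original weights: every $\alpha_{m,t^*}$ is strictly positive by condition~(b), so a nontrivial signed permutation applied to any single $B_m$, $m\geq 2$, contributes a term with a nonzero coefficient and the resulting homogeneous system in the entries of $B_2,\dots,B_M$ is nontrivial. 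To repair your argument, work with $\sum_{m\geq 2}\alpha_{m,t^*}\bigl(B_m(\boldsymbol{\tau}_m,\boldsymbol{\iota}_m)-B_m(\tilde{\boldsymbol{\tau}}_m,\tilde{\boldsymbol{\iota}}_m)D_{t^*}P_{t^*}\bigr)$ directly, as the paper does, rather than re-expanding in the basis $\{B_{y,t_i}\}$.
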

Proposition~\ref{prop:stvar_ident} essentially states that if the structural shocks are independent and at most one of them is Gaussian (Assumption~\ref{as:shocks}), the impact matrices $B_1,...,B_M$, and hence, the structural shocks are identified. The result holds for almost every $[B_1:...:B_M] \in\mathbb{R}^{d\times dM}$ because the identification may fail in some special cases, but this set has Lebesgue measure zero.\footnote{Beyond possible singularity of the impact matrix $B_{y,t}$ (see Lemma~\ref{lemma:Bt_at_each_t}), the failure of the identification in a measure zero set only concerns the possibility that fixing the ordering and signs of the columns of $B_1$ does not fix the ordering and signs of the columns of $B_2,...,B_M$. %However, the blended identification strategy proposed in Section~\ref{sec:labellingshocks} 
} 
%The result holds almost everywhere, as the identification may fail in some special cases that constitute a set with Lebesgue measure zero (see also Lemma~\ref{lemma:Bt_at_each_t}).

Condition~\ref{cond:stvar_B_1} of Proposition~\ref{prop:stvar_ident} states that the ordering and signs of the columns of $B_1$ should be fixed (e.g., by assuming that the first nonzero entry in each column is positive and the first nonzero entries are in a decreasing order, given that none of them are equal), which fixes the ordering and signs of the columns of $B_{y,t}$ for all $t$. If the shocks follow, for example, the skewed $t$-distributions defined in Equation~(\ref{eq:skew_t_dens}) (in Section~\ref{sec:estimation} discussing estimation), any fixed ordering and signs of the columns of $B_1$ can be assumed without loss of generality. This is because reordering the columns would just reorder the shocks and swapping a sign of a column corresponds to swapping the sign of the related shock and skewness parameter. % value. %(e.g., by assuming that the first nonzero entry in each column is positive and the first nonzero entries are in a decreasing order, given that none of them are equal).
Condition~\ref{cond:stvar_alpha} states that exogenous transition weights should, for some $t$, be strictly positive for all the regimes, which is used to establish that Condition~\ref{cond:stvar_B_1} fixes the ordering and signs of the columns of $B_{y,t}$ for all $t$. For logistic weights, this is achieved via the assumption $\gamma<\infty$, guaranteeing that the regime-switches are not discrete.\footnote{To accommodate discrete regime-switches, which are obtained as a special case of the logistic weights~(\ref{eq:alpha_mt_logistic}) with the smoothness parameter tending to infinity, we establish the identification of the shocks in the threshold VAR model of \cite{Tsay:1998} in Appendix~\ref{sec:identtvar}.}

%Somewhat surprisingly, our experience shows that identification often appears to be weak with respect to the ordering and signs of the columns of $B_2,...,B_M$, at least when the parameters are estimated by the method of (penalized) maximum likelihood (discussed in Section~\ref{sec:estimation}). Specifically, there are frequently a number of local maxima of the (penalized) log-likelihood function with the (penalized) log-likelihoods close to the global maximum, corresponding to otherwise largely similar estimates but with different the ordering and signs of the columns of $B_2,...,B_M$. As discussed in the next section, this weak identification should be appropriately dealt with when labelling the shocks, for instance, by combining Proposition~\ref{prop:stvar_ident} with additional identifying information.

Somewhat surprisingly, our experience shows that identification appears to be often weak in the sense that there are multiple local maxima of the (penalized) log-likelihood function (discussed in Section~\ref{sec:estimation}) with (penalized) log-likelihoods close to each other and to the global maximum. The parameter values related to each such local maximum seem to be often largely quite similar to each other, but have some differences, frequently but not exclusively related to different ordering and signs of the columns of $B_2,...,B_M$. Since such different local solutions may produce different results in structural analysis, this weak identification should be appropriately addressed, for instance, by combining Proposition~\ref{prop:stvar_ident} with supplementary identifying information as discussed in the next section.

% Listaa vielä alle huomiot obs equivalenssista
\iffalse
-Jos yhden B_m:n sarakkeiden järjestys vaihtuu, mutta muiden ei vaihdu samalla tavalla, tulee eri malli.
-Jos yhden B_m:n sarakkaiden merkit vaihtuu, mutta muiden ei vaihdu samalla tavalla, tulee eri malli. 
-Jos yhden B_m:n sarakkaiden järjestys vaihtuu, ja muiden vaihtuu samalla tavalla, tulee sama malli, jos skewness parametrien merkit vaihtuu.
-Jos yhden B_m:n sarakkaiden merkit vaihtuu, ja muiden vaihtuu samalla tavalla, tulee sama malli.
-Siten, fiksaamalla B_1:n sarakkaiden järjestys ja merkit fiksataan myös muiden B_m:ien, koska muutoin niiden painotettu summa olisi eri kuin alkuperäisten painotettujen summa up to ordering and sign. 
-> Todista formaalisti. 
\fi

\subsection{Labeling the shocks and blended identification}\label{sec:labellingshocks}
As in the linear SVAR model of \cite{Lanne+Meitz+Saikkonen:2017}, the statistically identified structural shocks do not necessarily have economic interpretations, and labeling them as economic shocks requires external information. However, labeling the shocks based on the estimates of the impact matrices $B_1,...,B_M$ might not always be straightforward, as the same shock must be associated with the same column of the impact matrix $B_m$ in all regimes. For example, if a positive supply shock should increase output and decrease prices on impact in all regimes, labeling the $i$th shock as the supply shock requires the $i$th column of all $B_1,...,B_M$ to satisfy such signs. Moreover, as discussed in Section~\ref{sec:ident_nongaus}, our experience shows that identification is often weak in the sense that there are multiple local solutions with fit close to the global optimum. While we find such local solutions to be often largely quite similar to each other, they may produce different impulse response functions even when the shock of interest is associated with the same column of all $B_1,...,B_M$. 

To formally address weak identification and the problem of labeling the shocks, we recommend employing a "blended identification" strategy that combines identification by non-Gaussianity with supplementary identifying information \citep[cf.][]{Carriero+Marcellino+Tornese:2024}. Specifically, we propose imposing overidentifying restrictions that are sufficient to yield a unique local solution (among the ones with fit close to the global solution) and facilitate labeling the shocks of interest. Since different local solutions may yield different results, the restrictions should be economically reasonable and serve to exclude less plausible alternative solutions. 

% Tai consider a bivariate system ... . It may ...
As a simple example, in a bivariate system of output and prices, it may be reasonable to assume that for each variable, one of the shocks has greater impact effect on that variable than the other shock, and that this effect has the same sign across regimes. If one of the shocks is to be interpreted as a demand shock, it might be useful to further assume that the shock with the largest effect on output also moves prices in the same direction in all regimes. Similarly, if the other shock is intended to represent a supply shock, it could be useful to assume that it moves output and prices in the opposite directions in all regimes.\footnote{
Also other forms of information can be incorporated into the blended identification strategy, such as narrative restrictions and zero impact effect restrictions (the latter of which are testable due to statistical identification), for instance.
} 
See our empirical application in Section~\ref{sec:empapp} for an example that involves more variables.

As a practical consideration, note that our estimation method, described in Section~\ref{sec:estimation} and implemented to the accompanying R package sstvars \cite{sstvars}, produces a set of alternative local solutions. %from which the presence of weak identification can be checked in practice.
By filtering these local solutions based on whether they satisfy the imposed overidentifying restrictions, it is straightforward to determine which restrictions are sufficient in the considered specification to exclude all but one of the local maximums (among the ones with a fit close to the presumed global optimum). Finally, to label the shocks of interest, the imposed restrictions can often be used to uniquely associate each of them to the same column of the impact matrix across all regimes. 
%Finally, the imposed constraints often facilitate associating each shock of interest to the same column of the impact matrix across all regimes. ...allowing labelling them tms 
%In the obtained local solution, each shock of interest needs to be associated to the same column of the impact matrix across all regimes, possibly based on the imposed constraints.

\section{Estimation}\label{sec:estimation}

\subsection{Maximum likelihood estimation and ergodic stationarity}\label{sec:ml_stat}
The parameters of the structural STVAR model discussed in Section~\ref{sec:structstvar} can be estimated by the method of maximum likelihood (ML). %, which amounts to maximizing the log-likelihood function over the parameter space. 
To obtain a well-behaving estimator with desirable asymptotic properties such as consistency, the parameter space is often restricted to the region where the model is ergodic stationary. Building on the results of \cite{Saikkonen:2008} \citep[see also][]{Kheifets+Saikkonen:2020}, it can be shown that when the transition weights $\alpha_{m,t}$ are logistic~\eqref{eq:alpha_mt_logistic} or of the threshold form, a sufficient condition for ergodic stationarity is that the joint spectral radius of the companion form AR matrices of the regimes is strictly less than one (see Theorem~\ref{thm:stat} in Appendix~\ref{sec:stat}).\footnote{\label{footnote:B1B2cond} For logistic models (with $M=2$ assumed), we additionally require that the matrix $B_1^{-1}B_2$ has no negative real eigenvalues to rule out singular convex combinations of $B_1$ and $B_2$ (see Appendices~\ref{sec:stat} and~\ref{sec:proofthmstat} for details). See Appendix~\ref{sec:identtvar} for the definition of the threshold weights.} Because this condition is computationally demanding to verify \citep[e.g.,][]{Chang+Blondel:2013}, it is not particularly useful for restricting the parameter space during estimation. Therefore, we instead impose the usual stability condition for each regime (Condition~\ref{cond:necessary} in Appendix~\ref{sec:stat}), as it is a necessary condition for the sufficient one. The sufficient condition can then be checked after estimation for the solutions of interest. %, and the best solution whose stationarity can be verified may be selected. % Tässä se, että tässä kohtaa ei vielä tiedetä, että tehdään monta estimointikierrosta, mutta jos käytetään local only ratkaisua, niin sitten pitää lisätä tuo. - tulee three-step lukuun! As our estimation method often produces a set of local solutions)

Maximizing the log-likelihood function can be challenging in practice due to its high multimodality, induced by the nonlinear dynamics.\footnote{The log-likelihood function is presented in Section~\ref{sec:loglik} for shocks that follow skewed $t$-distributions.} Imposing the stability condition can make estimation particularly difficult when the data is persistent, as in such cases numerical optimization algorithms frequently gravitate to the boundary of the parameter space, where they perform poorly. %and are not often even able to converge to a local maximum. 
%Estimation can be particularly difficult when the data is persistent, as in such cases numerical optimization algorithms frequently gravitate to the boundary of the stability region, where they perform poorly and are not often even able to converge to a local maximum. 
To overcome this issue, we propose to allow for unstable estimates and to maximize the penalized log-likelihood function, obtained by adding a penalty term to the log-likelihood function that penalizes parameter values falling outside or close to the boundary of the stability region. In this way, the optimization algorithm can explore the parameter space also outside the stability region, facilitating improved performance, while sufficient penalization eventually steers the algorithm back to the stability region.\footnote{Penalized likelihood-based estimation has been previously applied to time series models by \cite{Nielsen+Rahbek:2024}. They impose nonnegativity constraints in the autoregressive conditional heteroskedasticity model, allowing the estimation algorithm to explore the boundary of the parameter space. \cite{Nielsen+Rahbek:2024} use penalization also to determine which parameters should be set to zero.} %In contrast, we use penalization only to impose the stability condition to our model. 

\subsection{The penalized log-likelihood function}\label{sec:loglik}

Before introducing the penalized log-likelihood function, the log-likelihood function is presented. We assume that each shock $e_{it}$, $i=1,...,d$, follows the skewed $t$-distribution introduced by \cite{Hansen:1994} with zero mean, unit variance, $\nu_i>2$ degrees of freedom, and skewness controlled by the parameter $\lambda_i\in (-1, 1)$. The skewed $t$-distribution can flexibly capture fat tails and skewness, and as the Gaussian distribution is obtained as a special case, with $\nu_i=\infty$ and $\lambda_i=0$, the plausibility of the identifying Assumption~\ref{as:shocks} can be (informally) assessed based on the estimates of these parameters.

The density function of the skewed $t$-distribution, $st(\cdot; \nu_i, \lambda_i)$, is given as \citep[][Equations~(10)-(13)]{Hansen:1994}:
\begin{equation}\label{eq:skew_t_dens}
st(e_{it}; \nu_i, \lambda_i) = b_ic_i\left(1 + \frac{1}{\nu_i - 2}\left(\frac{b_ie_{it} + a_i}{1 - \mathbbm{1}\lbrace e_{it}<-a_i/b_i  \rbrace\lambda_i + \mathbbm{1}\lbrace e_{it}\geq -a_i/b_i \rbrace\lambda_i} \right)^2 \right)^{-(\nu_i + 1)/2}
\end{equation}
where $\mathbbm{1} \lbrace e_{it}<-a_i/b_i \rbrace$ is an indicator function that takes the value one if $e_{it}<-a_i/b_i$ and zero otherwise, and $\mathbbm{1}\lbrace e_{it}\geq -a_i/b_i\rbrace$ is an indicator function that takes the value one if $e_{it}\geq -a_i/b_i$ and zero otherwise. The constants $a_i, b_i$, and $c_i$ are defined as $a_i = 4\lambda_i c_i\left(\frac{\nu_i - 2}{\nu_i - 1}\right)$, $b_i = (1 + 3\lambda_i^2 - a_i^2)^{1/2}$, and $c_i = \frac{\Gamma \left(\frac{\nu_i + 1}{2}\right)}{(\pi(\nu_i - 2))^{1/2}\Gamma\left(\frac{\nu_i}{2}\right)}$, where $\Gamma(\cdot)$ is the Gamma function.

Indexing the observed data as $y_{-p+1},...,y_0,y_1,...,y_T$, the conditional log-likelihood function, conditional on the initial values $\boldsymbol{y}_0=(y_0,...,y_{-p+1})$, is given as
\begin{equation}\label{eq:loglik}
L(\boldsymbol{\theta})=\sum_{t=1}^T\log\left(|\det(B_{y,t})|^{-1}\prod_{i=1}^d st(I_{d,i}'B_{y,t}^{-1}(y_t - \mu_{y,t});\nu_i,\lambda_i)\right),
\end{equation}
where $I_{d,i}$ is the $i$th column of the $d$-dimensional identity matrix, $\mu_{y,t}$ is the conditional mean defined in Section~\ref{sec:stvar} and $B_{y,t}$ is the impact matrix defined in Equation~(\ref{eq:Bt}). The parameters of the model are collected to the vector $\boldsymbol{\theta}=(\phi_{1},...,\phi_{M},\varphi_1,...,\varphi_M,\sigma,\alpha,\nu)$, where $\varphi_m=(\text{vec}(A_{m,1}),...,$ $\text{vec}(A_{m,p}))$, $m=1,...,M$, $\sigma=(\text{vec}(B_1),...,\text{vec}(B_M))$, $\alpha$ contains the parameters governing the transition weights, and $\nu=(\nu_1,...,\nu_d,\lambda_1,...,\lambda_d)$ contains the degrees-of-freedom and skewness parameters. With logistic weights~(\ref{eq:alpha_mt_logistic}), $\alpha=(c,\gamma)$, whereas with exogenous weights, $\alpha$ is omitted from the parameter vector.
%With threshold weights~(\ref{eq:alpha_mt_threshold}), $\alpha=(r_1,...,r_{M-1})$, with logistic weights~(\ref{eq:alpha_mt_logistic}), $\alpha=(c,\gamma)$, and with exogenous weights, $\alpha$ is dropped from the parameter vector.

The penalized log-likelihood function is then defined as
\begin{equation}\label{eq:penloglik}
PL(\boldsymbol{\theta}) = L(\boldsymbol{\theta}) - P(\boldsymbol{\theta}),
\end{equation}
where $L(\boldsymbol{\theta})$ is defined in~(\ref{eq:loglik}) and $P(\boldsymbol{\theta})\geq 0$ is a penalization term that penalizes the log-likelihood function from parameter values that are close to entering or are in an uninteresting region of the parameter space. To focus on avoiding unstable estimates, we define the penalization term as
\begin{equation}\label{eq:penterm}
P(\boldsymbol{\theta}) = \kappa Td \sum_{m=1}^M\sum_{i=1}^{dp} \text{max}\lbrace 0, |\rho(\boldsymbol{A}_m(\boldsymbol{\theta}))_i| - (1 - \eta) \rbrace^2,
\end{equation}
where $|\rho(\boldsymbol{A}_m(\boldsymbol{\theta}))_i|$ is the modulus of the $i$th eigenvalue of the companion form AR matrix of Regime~$m$, the tuning parameter $\eta\in (0, 1)$ determines how close to the boundary of the stability region the penalization starts, and $\kappa>0$ determines the strength of the penalization. %The PML estimate of $\boldsymbol{\theta}$ maximizes the log-likelihood function~(\ref{eq:penloglik}) with respect to $\boldsymbol{\theta}$ over the parameter space FILL IN PARAMETER SPACE.

Whenever the companion form AR matrix of a regime has eigenvalues greater than $1 - \eta$ in modulus, the penalization term~(\ref{eq:penterm}) is greater than zero, and it increases in the modulus of these eigenvalues. Our penalization term incorporates the multiplicative coefficient $Td$ in an attempt to standardize the strength of penalization with respect to the number of observations $T$ and variables $d$, as the magnitude of the log-likelihood typically increases with these quantities. In our empirical application (Section~\ref{sec:empapp}), we use the tuning parameter values $\eta = 0.05$ and $\kappa = 0.2$, thus, penalizing parameter values outside the stability region significantly, while maintaining flexibility near the boundary. 

\subsection{Three-step estimation procedure}\label{sec:three-step}
\cite{Lanne+Meitz+Saikkonen:2017} propose a three-step estimation procedure for their linear SVAR model identified by non-Gaussianity. In the first step, least squares (LS) estimation produces preliminary estimates of certain parameters, which serve as initial values in numerical optimization algorithms used in the subsequent steps to maximize the log-likelihood function. The nonlinear dynamics of our structural STVAR model, however, add substantial challenges to the estimation problem by inducing a large number of modes to the (penalized) log-likelihood function. To address these challenges, we propose a modified version of the three-step estimation procedure in which multimodality is explicitly taken into account. 

Our three-step estimation procedure proceeds with the following steps:
\begin{enumerate}
  \item Estimate the autoregressive and weight function parameters $\boldsymbol{\theta}_1\equiv (\phi_{1},...,\phi_{M},\varphi_1,...,\varphi_M,\alpha)$ by (penalized) nonlinear least squares (NLS) (see Appendix~\ref{sec:nls} for details). Denote this estimate of $\boldsymbol{\theta}_1$ as $\hat{\boldsymbol{\theta}}_1^{NLS}$.\label{step:LS}
  
  \item Estimate the error distribution parameters $\boldsymbol{\theta}_2\equiv (\sigma,\nu)$ by (penalized) ML with a genetic algorithm conditionally on the (penalized) NLS estimate $\hat{\boldsymbol{\theta}}_1^{NLS}$. Denote the obtained estimate of $\boldsymbol{\theta}_2$ as $\hat{\boldsymbol{\theta}}_2^{GA}$.\label{step:GA}
  
  \item Estimate the full parameter vector $\boldsymbol{\theta}=(\boldsymbol{\theta}_1,\boldsymbol{\theta}_2)$ by (penalized) ML by initializing a gradient based optimization algorithm %\citep[e.g.,][Algorithm~21]{Nash:1990}
   from $(\hat{\boldsymbol{\theta}}_1^{NLS},\hat{\boldsymbol{\theta}}_2^{GA})$.\label{step:VA}
 %\item Estimate the full parameter vector $\boldsymbol{\theta}=(\boldsymbol{\theta}_1,\boldsymbol{\theta}_2)$ by (penalized) ML by initializing the gradient based variable metric algorithm \citep[e.g.,][Algorithm~21]{Nash:1990} from $(\hat{\boldsymbol{\theta}}_1^{NLS},\hat{\boldsymbol{\theta}}_2^{GA})$.\label{step:VA}
\end{enumerate}

Step~\ref{step:LS} obtains initial estimates for the AR and weight function parameters, and it is comparable to Step~1 of \cite{Lanne+Meitz+Saikkonen:2017} but with the nonlinearity of the model explicitly accounted for. Notably, the NLS estimation often produces estimates that do not satisfy the usual stability condition for each of the regimes, and thus allowing for instability (but penalizing it) facilitates utilization of the NLS estimates in our three-step procedure. Step~\ref{step:GA} is comparable to Step~2 of \cite{Lanne+Meitz+Saikkonen:2017}, but the multimodality of the (penalized) log-likelihood function is taken into account by making use of a robust estimation algorithm that is able to escape from local maxima. %Step~\ref{step:GA} is also comparable to Step~1 estimation of \cite{Lanne+Virolainen:2024}, but unlike in \cite{Lanne+Virolainen:2024}, here we only estimate the error distribution parameters, simplifying the challenging optimization problem, as initial estimates are already obtained from Step~\ref{step:LS} for the rest of the parameters. 
Step~\ref{step:VA} finalizes the estimation by initializing a standard gradient based optimization algorithm from the initial estimates obtained from the previous steps. %, often converging to a nearby local maximum of the (penalized) log-likelihood function. 

Due to the high multimodality of the (penalized) log-likelihood function, we recommend running Steps~\ref{step:GA} and~\ref{step:VA} a large number of times to improve the reliability of the results. Since a large number of estimation rounds are run, the estimation procedure produces a set of estimates, some of which presumably corresponding to different local maximums of the (penalized) log-likelihood function. %Then, the best solution whose ergodic stationary can be verified (by Condition~\ref{cond:jsr} in Appendix~\ref{sec:stat}) may be selected. 
If there are multiple solutions with (penalized) log-likelihoods close to the greatest found (penalized) log-likelihood, the blended identification strategy discussed in Section~\ref{sec:labellingshocks} can be employed to address weak identification. 
% obtain the best local solution that facilitates labelling the shocks of interest. %The proposed three-step estimation procedure is implemented in the accompanying R package sstvars \citep{sstvars}, which is available via the CRAN repository.

To assess the performance of the penalized maximum likelihood (PML) estimator as well as our three-step estimation procedure, we conduct a small-scale Monte Carlo study, which is discussed in detail in Appendix~\ref{sec:montecarlo}. Since estimation is computationally demanding, we consider the simple bivariate (structural) LSTVAR model with $p=1$ and $M=2$. To evaluate how the PML estimator performs when the AR matrices are in the penalization region, we consider two specifications of both of LSTVAR models: one where the AR matrices are well inside the stability region and one where they lie clearly in the region where the penalization term is strictly positive. 
According to the results (see Table~\ref{tab:monteresults_lstvar} in Appendix~\ref{sec:montecarlo} for details), the estimation accuracy is slightly better when the AR matrices are inside the stability region. Both specifications exhibit some bias in small samples. However, since both the bias and the standard deviations of the estimates diminish as the sample size increases, our results align with (possible) consistency of the PML estimator. 

\iffalse
According to the results (see Table~\ref{tab:monteresults_lstvar} in Appendix~\ref{sec:montecarlo} for details), estimation accuracy is slightly better when the AR matrices are inside the stability region, but even when they are in the penalization region, the estimators of the AR matrices exhibit only minor small-sample bias. 
%the estimation accuracy is slightly better for the former, but even in small samples, the estimators of the AR matrices suffer from only a minor bias in the latter. Additionally, estimation accuracy appears to be slightly higher for the TVAR models compared to the LSTVAR models. 
While for most of the parameters there is only very slight bias, we find a notable upward small-sample bias for the scale parameter, for the degrees-of-freedom parameter whose true value is not small, and for the first entry in each impact matrix. The upward biases in the scale and not-small degrees-of-freedom parameters can both be explained by weak discrimination between large and very large estimates, as large values already imply nearly discrete regime-switches and near-Gaussian distributions, respectively. The upward bias in the first entry of each impact matrix, in turn, could be explained by its sensitivity to extreme values induced by the related particularly heavy tailed skewed $t$-distribution.
%, explainable by the statistical properties of the true data generating process. 
%There is a notable small-sample bias for certain parameters, explainable by the statistical properties of the true data generating process, while most of the estimates are only very slightly biased. Namely, we find upward bias for the scale parameter, for a degrees-of-freedom parameter, and for certain an entry in each impact matrix. 
%Namely, there is an upward bias for the scale parameter, and also for the degrees-of-freedom parameter when the true parameter value is not small, as well as for an entry of the impact matrices $B_1$ and $B_2$. (corresponding to a highly skewed shock with a high kurtosis).
Nonetheless, the bias diminishes for all of the parameters as the sample size increases. Since also the standard deviations of the estimates diminish as the sample size increases, our results are in line with consistency of the PML estimator. Overall, the performance of the proposed PML estimator and the three-step estimation procedure seems reasonable.
\fi

\section{Empirical application}\label{sec:empapp}
Our empirical application studies the macroeconomic effects of climate policy uncertainty (CPU). In the related literature, \cite{Fried+Novan+Peterman:2022} develop a dynamic general equilibrium model incorporating beliefs about future climate policy, and find that an increased climate policy transition risk shifts investments toward cleaner capital while reducing total investments and output. \cite{Khalil+Strobel:2023} reach similar conclusions in their dynamic stochastic general equilibrium (DSGE) model, attributing the effects to financial institutions' aversion to uncertainty. \cite{Huang+Punzi:2024} also use a DSGE model and find that CPU shocks cause firms to delay investments, which reduces output and raises inflation. These findings are supported by their recursively identified Bayesian SVAR model.

We are interested in studying how the effects of the CPU shock vary depending on the level of general economic policy uncertainty (EPU), as the level of uncertainty may affect the behavior of economic agents. Such variation can be accommodated in our structural STVAR model by specifying a logistic transition weight function with the (lagged) level of EPU as the switching variable. This approach allows us to capture potential state-dependent effects, providing more detailed results %on the macroeconomic effects of the CPU shock 
than linear SVAR analysis.

% Jos käyttää CPUI:ta, niin ylemmän voisi ehkä poistaa ja tyyliin vikan maininnan nonlienar mallista jotenkin integroida edeltävään kappaleeseen? Tyyliin We aim to find out about the state-dependence of the effects of the CPU shock by employing a structural STVAR model with logistic transition weights.
% The choice of using CPUI as the switching variable is also motivated by the observation that both the level and variance of CPUI are clearly elevated towards the end of our sample period, which can be accommodated by allowing for the dynamics of the model to depending on its level. 
% To avoid overfitting, particularly due to the limited number of observations in the high CPUI regime, we select the autoregressive order $p=2$ based in Bayesian information criterion. Based on Hannan-Quin (tms) information criterion. 

\subsection{Data}

We consider a monthly U.S. dataset consisting of five variables and covering the time period from 1987:4 to 2024:12. Following \cite{Khalil+Strobel:2023} and \cite{Huang+Punzi:2024}, we measure CPU with the climate policy uncertainty index (CPUI) of \cite{Gavriilidis:2021}, which is constructed based on the amount of newspaper coverage on climate policy uncertainty topics. %The availability of the CPUI data determines the beginning of our sample period. 
In order to control for general economic policy uncertainty, we also include the economic policy uncertainty index (EPUI) of \cite{Baker+Bloom+Davis:2016}, which is based on newspaper coverage on topics related to economic policy uncertainty as well as on components quantifying the present value of future scheduled tax code expirations and disagreement among professional forecasters over future government purchases and consumer prices. 

%To measure real economic activity, we include the log-difference of the industrial production index (IPI). %extracted from its trend using the linear projections filter proposed by \cite{Hamilton:2018}. 
As a measure of real economic activity, we include the log of industrial production index (IPI), which is detrended by taking first differences. 
For measuring the price level, we use the log of consumer price index (CPI), likewise detrended by taking first differences. %As a forward looking variable, the PPI should also help to distinguish the effects of climate policy uncertainty shocks from climate policy actions \citep[see][]{Gavriilidis+Kanzig+Stock:2024}.
Finally, the monetary policy stance is measured with the effective Federal funds rate that is replaced by the \cite{Wu+Xia:2016} shadow rate for the zero-lower-bound periods (RATE). The series of the included variables are presented in Figure~\ref{fig:seriesplot}.\footnote{The CPUI and EPUI data are retrieved from \url{https://www.policyuncertainty.com}, whereas IPI, CPI, and the Federal funds rate are retrieved from the Federal Reserve Bank of St. Louis database and the \cite{Wu+Xia:2016} shadow rate from the Federal Reserve Bank of Atlanta's website. The availability of the CPUI data determines the beginning of our sample period.}

\begin{figure}[t]
    \centerline{\includegraphics[width=\textwidth - 2cm]{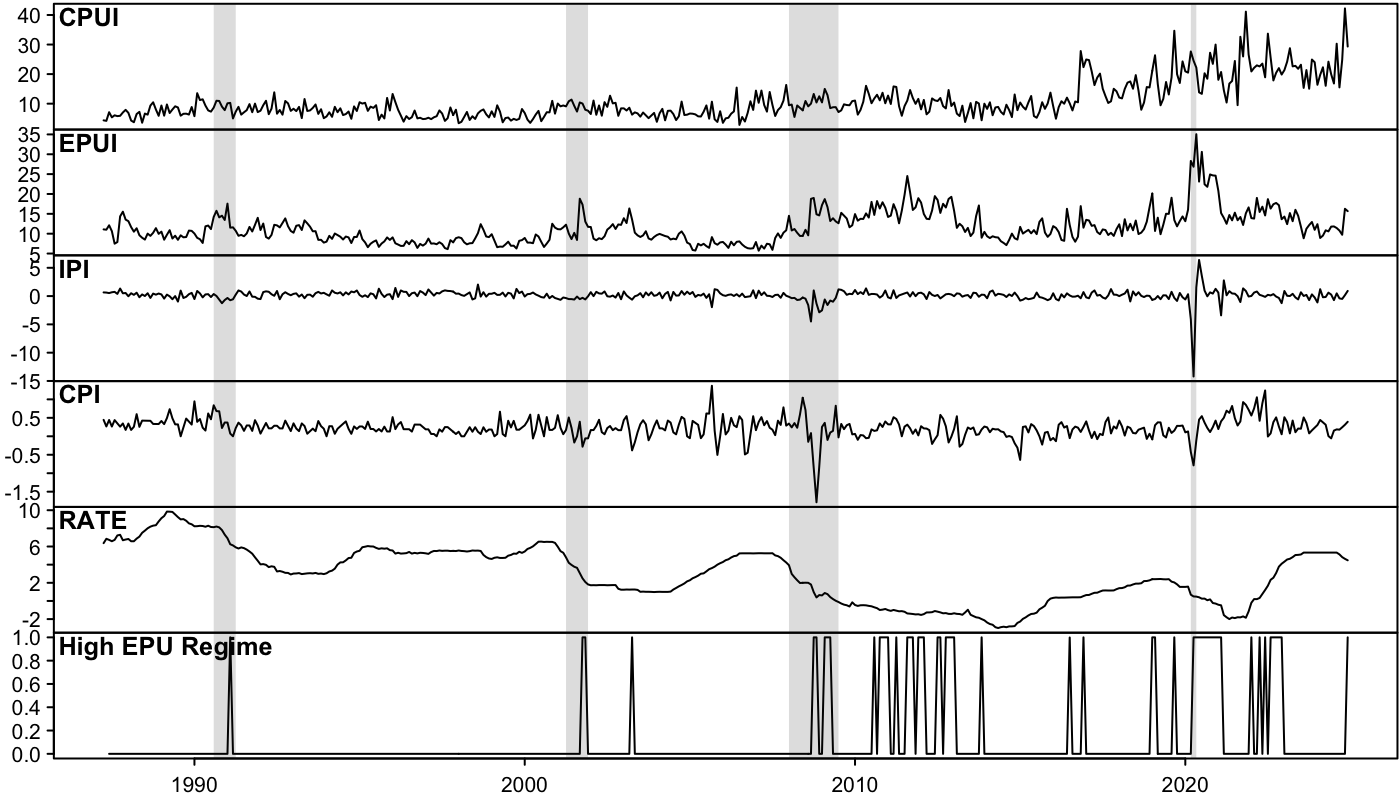}}
    \caption{Monthly U.S. time series covering the period from 1987:4 to 2024:12. From top to bottom, the variables are the climate policy uncertainty index (divided by $10$); economic policy uncertainty index (divided by $10$); the log-differences of industrial production index and consumer price index (multiplied by $100$); and the effective federal funds rate replaced by the \cite{Wu+Xia:2016} shadow rate for the zero lower bound periods. The bottom panel shows the transition weights of the High EPU Regime (Regime~2) of the fitted two-regime second-order structural LSTVAR model. The shaded areas indicate the U.S. recessions defined by the NBER.}
\label{fig:seriesplot}
\end{figure}

\subsection{The structural LSTVAR model}\label{sec:LSTVAR}
The structural STVAR model is specified from~(\ref{eq:stvarstruct})-(\ref{eq:Bt}) by defining the transition weights and distributions of the shocks. We employ the logistic transition weights with two regimes given in Equation~(\ref{eq:alpha_mt_logistic}). They facilitate smooth transitions between the regimes and enable analyzing how the effects of the CPU shock vary depending on the level of EPU, whose first lag is specified as the switching variable. %The logistic weight function also has the convenient property that is nests the threshold weight function \citep{Tsay:1998} as a special case when the scale parameter tends to infinity, effectively incorporating the selection between these two weight functions within estimation. 
%, implying that if threshold weight function is more suitable for the data, the smoothness parameter is expected to get large estimates. Consequently, model selection between these two weight functions is embedded in estimation. 
%. , and they also nest the TVAR model \citep{Tsay:1998} as a special case when the scale parameter $\gamma$ tends to infinity.
%To study how the effects of the CPU shock vary depending on the output gap, we specify the first lag of IPI as the switching variable. %This approach accommodates shifts in the mean and volatility depending on the level of EPUI, and hence allows the effects of the CPU shock to vary with the level of economic policy uncertainty. 
% This choice of the switching variable can also accommodate the elevated level and volatility of policy uncertainty towards the end our sample period. 
%
%Logistic weight function has the convenient property that is nests the TVAR model as the case $\gamma\rightarrow\infty$, implying that if threshold weight function is more suitable for the data, the smoothness parameter is expected to get large estimates. Consequently, model selection between these two weight functions is embedded in estimation. 
%
% To avoid overfitting, we assume that the autoregression matrices are constant across the regimes. Due to the regime-dependent intercepts and impact matrices, the model is still fairly general, allowing for shift in the mean, volatility, and impact responses of the variables to the shocks. 
%
We estimate the structural LSTVAR model by PML with the three-step procedure proposed in Section~\ref{sec:estimation}, and the shocks are assumed to follow skewed $t$-distributions. %The skewed $t$-distribution can flexibly capture fat tails and skewness,  and as the Gaussian distribution is obtained as a special case, with $\nu_i=\infty$ and $\lambda_i=0$, the validity of the identifying Assumption~\ref{as:shocks} can be assessed based on the estimates of these parameters. %To deal with the problem of weak identification with respect to the ordering and signs of the columns of $B_2$, discussed in Section~\ref{sec:identtvar}, \ref{sec:identstvar}, and~\ref{sec:labellingshocks}, we implement the following, simplistic blended identification strategy that avoids making restrictive assumptions about the effects of the CPU shock. We assume that, in both regimes, one of the shocks, deemed as the CPU shock, has greater impact effect on the CPUI than the other shocks and that this impact effect has the same sign in both regimes. 
The autoregressive order $p=2$ is selected based on the Hannan-Quinn information criterion. The relatively low autoregressive order avoids overfitting, particularly in the regime accommodating fewer observations.

To deal with the problem of weak identification discussed in Section~\ref{sec:ident_nongaus} and~\ref{sec:labellingshocks}, we implement a blended identification strategy that imposes the following overidentifying restrictions on the impact matrices $B_1$ and $B_2$. First, for each variable, one of the shocks has a greater impact effect on that variable in both regimes than the other shocks, this shock is unique to that variable, and the impact effect has the same sign in both regimes. Second, the shock that has the greatest impact effect on output moves inflation in the same direction in both regimes (consistent with a demand shock). Third, the shock that has the greatest impact effect on inflation moves output in the opposite direction in both regimes (consistent with a supply shock). Fourth, the shock that has the greatest impact effect on interest rate moves output in the opposite direction in both regimes (consistent with a monetary policy shock). The shock that has the greatest impact effect on CPUI is deemed as the CPU shock, and to support our identification, we check that the recovered CPU shock aligns with several major historical events affecting CPU. In particular, we find that President Trump's election is accompanied by large positive CPU shocks in November 2016 and 2024. %($17$ and $26$, respectively). 
Moreover, there is a large negative CPU shock in January 2021, %($-7$),
when President Biden signed executive orders to re-enter the Paris Agreement and launch a broad federal climate strategy.

The location parameter estimate $\hat{c}=16.16$, roughly implying that Regime~1 dominates when the level of EPU is not high, while Regime~2 prevails in times of high EPU. Hence, we label Regime~1 as the Low EPU Regime and Regime~2 as the High EPU Regime. The scale parameter estimate $\hat{\gamma}=64.23$, indicating that transitions between the regimes are quite fast, which can be observed also from the bottom panel of Figure~\ref{fig:seriesplot}, where the evolution of the fitted transition weight of the High EPU Regime is depicted.

The estimated vectors of the degrees-of-freedom and skewness parameters are $(2.77, 2.71, 3.00, 3.15, 2.03)$ and $(0.38, 0.41, -0.16, 0.15, 0.11)$, respectively, in the order of the variables CPUI, EPUI, IPI, CPI, and RATE. Clearly, none of the shocks is close to Gaussian, suggesting that Assumption~\ref{as:shocks} is plausible. Then, we check whether our model satisfies the stationarity condition by computing an upper bound for the joint spectral radius of the companion form AR matrices of the regimes (see Theorem~\ref{thm:stat} in Appendix~\ref{sec:stat}). %\footnote{We compute the upper bound using the JSR toolbox \citep{Jungers:2023} in MATLAB.} 
The obtained upper bound is strictly less than one ($0.98$), implying that our model %satisfies Assumption~\ref{as:stat} (in Appendix~\ref{sec:stat}) and 
is ergodic stationary.\footnote{In addition, we checked that the estimated $B_1^{-1}B_2$ does not have real negative eigenvalues, so the additional condition mentioned in Footnote~\ref{footnote:B1B2cond} for logistic weights is satisfied.} %by Theorem~\ref{thm:stat}. 
Finally, based on graphical residual diagnostics, the overall adequacy of our model %to capture the autocorrelation structure, conditional heteroskedasticity, and distribution of the data 
seems reasonable (for details, see Appendix~\ref{sec:adequacy}).

%To assess the adequacy of our model, we utilize residual based model diagnostics to examine how well our model captures the autocorrelation structure, conditional heteroskedasticity, and distribution of the data. To that end, we plot several diagnostic figures, which are presented and discussed in more detail in Appendix~\ref{sec:adequacy}. They show that our model captures the autocorrelation structure of the data reasonably well, but there is a large correlation coefficient in the crosscorrelation functions between two of the residual series. Also conditional heteroskedasticity is somewhat reasonably well captured, although there are a few relatively large correlation coefficients in the auto- and crosscorrelation functions of squared standardized residuals. The marginal distributions of the series are also reasonably well captured, but the distributions of two of the residual series are somewhat asymmetric. Nonetheless, in our view, the overall adequacy of the model is reasonable enough for impulse responses analysis. %The estimates and model diagnostics are based on the final selected model after labelling the CPU shock as described in Section~\ref{sec:labellingcpu}.

\iffalse
\subsection{Labeling the CPU shock}\label{sec:labellingcpu}
%% ALLA VANHA:
%In terms of labelling the CPU shock based on the impact responses of the variables in the estimates of $B_1$ and $B_2$, there are several properties that are reasonable to expect from it. First, since the CPU shock is the shock to CPUI, the impact response of CPUI should be substantial, and it should have the same sign in $B_1$ and $B_2$. Second, as CPU contributes to the overall economic policy uncertainty, EPUI should increase or at least not significantly decrease in response to a positive CPU shock. %(particularly since both types of policy uncertainty in our SVAR model are measured based on newspaper coverage).
%The effects of the CPU shock on IPI are more ambiguous. Increased uncertainty may postpone investments, decreasing production, while some firms may accelerate production to avoid future costs or invest in future technologies to gain a competitive edge. The CPU shock's impact on the price level, in turn, may be indirect, translated through the effects on the aggregate demand, production costs, or production levels. Similarly, the Fed may not directly respond to increased uncertainty but rather to its economic impacts.

%The effects of the CPU shock on IPI can be more ambiguous. On one hand, increased uncertainty about future climate policy can postpone investments and thereby decrease production. On the other hand, some of the firms anticipating future restrictive policies may accelerate production to preempt higher future costs. Anticipation of stricter regulations may also drive some of the firms investing towards future technologies, as proactive adaptation to anticipated climate policies may gain a competitive advantage. The effects of the CPU shock on the price level can be indirect or less immediate, translated through the effects on the aggregate demand, the production costs, or the level of production. The relationship between the CPU shock and the interest rate may also not be a direct one, as the Fed may not directly react to increased uncertainty but counteracts the economic impacts of the shock when they are observed.
%Also, because uncertainty about future climate policy can post-pone investments \citep[see][]{Fried+Novan+Peterman:2021}, it is reasonable to expect that a positive CPU shock decreases or at least does not substantially increase IPI in the long run. The short run response is, however, more ambiguous, as some of the firms anticipating future restrictive policies may accelerate production to preempt higher future costs.
%Based on the evidence by \cite{Gavriilidis+Kanzig+Stock:2024}, CPU shock affects the economy more like a supply shock than a demand shock, suggesting a positive CPU shock should not significantly decrease prices at impact. 

%Based on the estimates of $B_1$ in Equation~(\ref{eq:Best}), the only variable that moves substantially at impact in response to the first shock is CPUI, and the first shock has also clearly the highest impact effect on CPUI. The first shock also moves EPUI in the same direction with CPUI, in line with increased CPU contributing to the level of general economic policy uncertainty. Also the second shock moves CPUI considerably on impact, but it moves EPUI much more, so since CPUI and EPUI vary in the same scale, it does not appear to be a plausible CPU shock candidate. Therefore, the first shock seems to be the most reasonable candidate as the CPU shock. %The response of EPUI is insignificant, implying that EPUI might respond with a lag in this regime, which is used as an identifying assumption in \cite{Gavriilidis+Kanzig+Stock:2024}.
%The second shock also moves CPU substantially and EPUI in the same direction, but it also substantially increases IPI, so it is not as plausible CPU shock candidate as the first shock. 
%The estimates of $B_2$ in Equation~(\ref{eq:Best}) show that the first shock substantially moves CPUI at impact, while the response of CPUI to the other shocks is clearly smaller. Importantly, the sign of the response of CPUI to the first shock is the same as in $\hat{B}_1$. Moreover, the first shock moves EPUI in the same direction with CPUI, thus, in line with what can be expected from a CPU shock. %IPI has a small positive impact response, which could be explained by some of the firms accelerating production to preempt possibly higher future costs. / toisteista aiempaan
%Overall, the first shock seems to be the most plausible CPU shock candidate, and hence, we deem it as the CPU shock. Since we labelled the CPUI shock in the local solution that obtained the highest likelihood, there is no need to consider other local solutions incorporating practically equal fit and mostly very similar estimates but different ordering and signs of the columns of $B_2$ (see Section~\ref{sec:labellingshocks}). 
\fi

\subsection{Impulse response analysis}\label{sec:girfs}
The effects of the shocks in our structural LSTVAR model may depend on the initial state of the economy as well as on the sign and size of the shock, making the conventional way of calculating impulse responses unsuitable. Therefore, we consider the generalized impulse response function (GIRF) \citep{Koop+Pesaran+Potter:1996} that accommodates such features, defined as 
\begin{equation}\label{eq:girf}
\text{GIRF}(h,\delta_i,\mathcal{F}_{t-1}) = \text{E}[y_{t+h}|\delta_i,\mathcal{F}_{t-1}] - \text{E}[y_{t+h}|\mathcal{F}_{t-1}],
\end{equation}
where $h$ is the horizon. The first term on the right side of (\ref{eq:girf}) is the expected realization of the process at time $t+h$ conditionally on a structural shock of sign and size $\delta_i \in\mathbb{R}$ in the $i$th element of $e_t$ at time $t$ and the previous observations. The latter term on the right side is the expected realization of the process conditionally on the previous observations only. The GIRF thus expresses the expected difference in the future outcomes when the $i$th structural shock of sign and size $\delta_i$ arrives at time $t$ as opposed to all shocks being random. Since our model has the $p$-step Markov property, the conditioning set $\mathcal{F}_{t-1}$ can be replaced by the lag vector $\boldsymbol{y}_{t-1}=(y_{t-1},...,y_{t-p})$. The GIRF also facilitates tracing out the effects of the shocks on the transition weights $\alpha_{m,t}$, $m=1,...,M$, by replacing $y_{t+h}$ with $\alpha_{m,t+h}$ on the right side of Equation~(\ref{eq:girf}). 

%Due to the $p$-step Markov property of our structural LSTVAR model, conditioning on the previous $p$ observations, $\boldsymbol{y}_{t-1}=(y_{t-1},...,y_{t-p})$, is equivalent to conditioning on $\mathcal{F}_{t-1}$. 
To study the state-dependent effects of the CPU shock, we follow a procedure similar to \cite{Lanne+Virolainen:2025} and calculate the GIRFs conditional on a given regime dominating when the shock arrives. Specifically, for each Regime~$m\in \lbrace 1, 2\rbrace$, we take all the length $p$ histories $\boldsymbol{y}_{t-1}$ from the data for which the corresponding transition weight $\alpha_{m,t}$ is greater than $0.75$, indicating that Regime~$m$ is clearly dominant. %(there are $366$ such histories for the Low EPU Regime and $80$ for the High EPU Regime). 
For each such history, we take the corresponding CPU shock recovered from the data, and compute the GIRF using the Monte Carlo algorithm described in \cite{Lanne+Virolainen:2025}, Appendix~B. Finally, GIRFs corresponding to shocks of different sign and size are made comparable by scaling them to correspond to a $5$ point instantaneous increase of CPUI.

\begin{figure}[!t]
    \centerline{\includegraphics[width=\textwidth - 2cm]{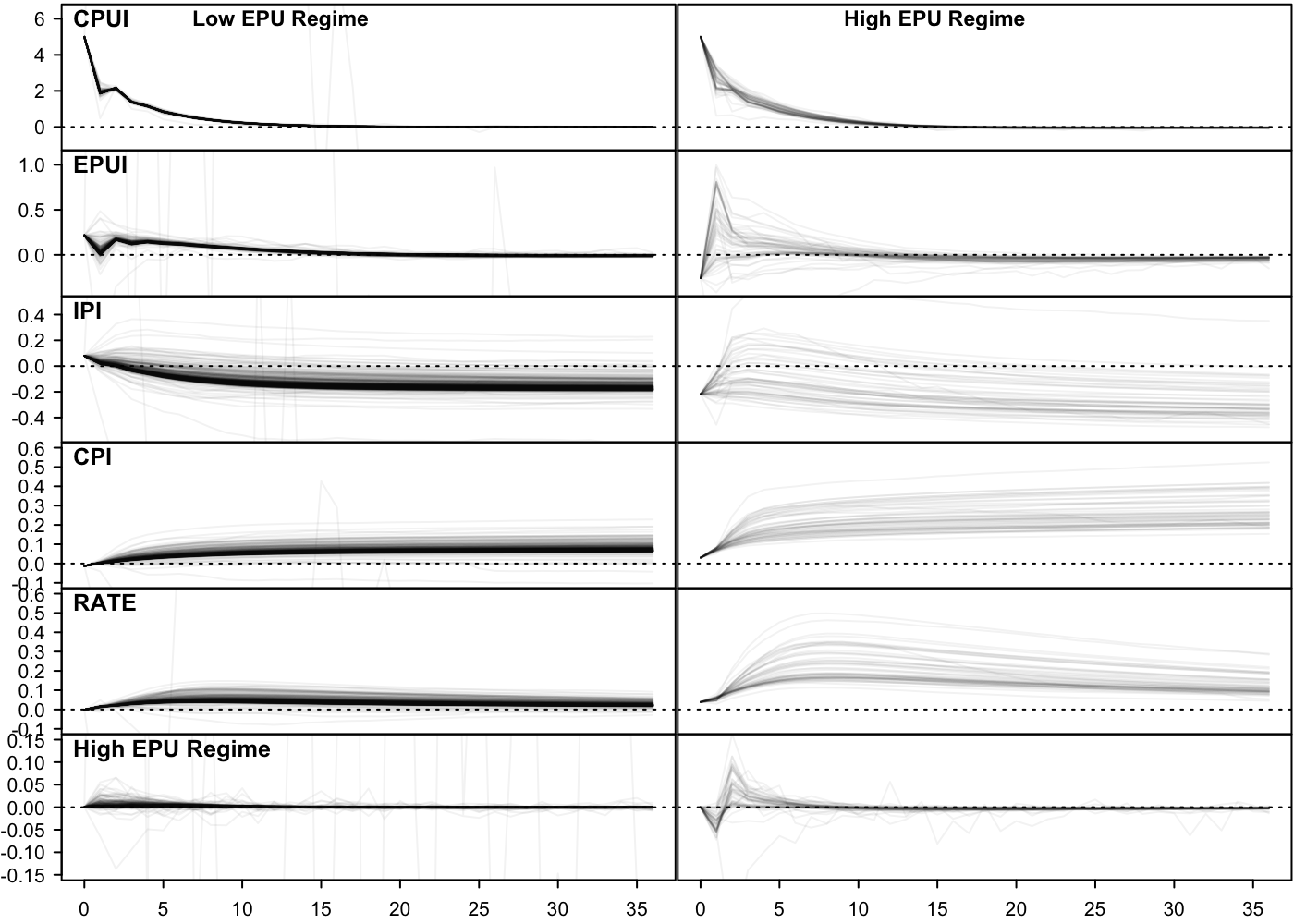}}
    \caption{Generalized impulse response functions to the identified CPU shock $h=0,1,....,36$ months ahead. From top to bottom, the responses of CPUI, EPUI, IPI, CPI, RATE, and the transition weights of the High EPU Regime are depicted in each row. The responses of IPI and CPI are accumulated to ($100\times$)log-levels. The left column shows the responses conditionally on the Low EPU Regime dominating when the shock arrives, whereas the right column shows the responses conditionally on the High EPU Regime dominating. All the GIRFs have been scaled so that the instantaneous increase in CPUI is $5$ points. Each GIRF is computed with a Monte Carlo algorithm similar to the one described in \citet[][Appendix~B]{Lanne+Virolainen:2025}.}
\label{fig:girfplot}
\end{figure}
% Results are based on 50 000 MC replications

Figure~\ref{fig:girfplot} illustrates the distribution of the GIRFs in each regime in a so-called "shotgun plot", which depicts each GIRF using a level of opacity such that the darkness of a region displays the concentration of GIRFs in it \citep[cf.][]{Lanne+Virolainen:2025}. 
The response of CPUI to a positive CPU shock follows a similar pattern in both regimes in the vast majority of the GIRFs. EPUI increases in both regimes, but response is stronger in the High EPU Regime. 

%EPUI temporarily increases in the Low EPU Regime and decreases in the High EPU Regime. A possible explanation for the decrease in the EPUI in the High EPU Regime could be that under already heightened EPU, an unexpected increase in CPU may induce a shift in focus from broader economic policy issues to climate policy. 

Consistent with the previous literature (\citealp{Fried+Novan+Peterman:2022}, \citealp{Khalil+Strobel:2023}, and \citealp{Huang+Punzi:2024}), a positive CPU shock decreases production in both regimes. \cite{Huang+Punzi:2024} attribute the decline in output to firms postponing investment, whereas \cite{Fried+Novan+Peterman:2022} and \cite{Khalil+Strobel:2023} emphasize reallocation of capital away from carbon-intensive sectors. %consider mechanisms that shift investments toward cleaner capital. 
We find that the contraction is slightly stronger in the High EPU Regime than in the Low EPU Regime, possibly reflecting greater risk aversion among firms and financial intermediaries during periods of elevated economic policy uncertainty.

%Although many of the GIRFs exhibit a short-term increase of production in the High EPU Regime, the negative long-term response of IPI seems, on average, roughly similar in both of the regimes. %The GIRFs are, however, more spread out in High EPU Regime depending the history they are associated to. % Näkeekö more spereadouttia kuvasta lopulta tuota spread outtia edes niin selkeästi?

In both regimes, a positive CPU shock increases prices, in line with \cite{Huang+Punzi:2024} and \cite{Khalil+Strobel:2023}. \cite{Huang+Punzi:2024} find that a positive CPU shock increases inflation by acting as a negative supply shock, whereas according to \cite{Khalil+Strobel:2023}, a positive CPU shock may decrease aggregate demand due to risk-averse households raising savings, while firms raise their prices preemptively to lower the probability of being stuck with a negative markup. However, we find that the inflationary effects are stronger in the High EPU Regime than in the Low EPU Regime. One possible explanation is that the firms might be more risk averse in the periods of high EPU and thereby react more sensitively. %thereby reacting more sensitively. and passing the increased costs or markups onto consumers. 

The response of the interest rate variable seems consistent with the inflationary effects of the CPU shock: it increases in both regimes, but clearly more so in the High EPU Regime. Finally, consistent with the GIRFs of EPUI, a positive CPU shock increases the transition weights of the High EPU Regime, more so in the High EPU Regime, where there is first a short-term decline in the weights. In other words, a positive CPU shock generally drives the economy towards the High EPU Regime. Overall, our results suggest that a positive CPU shock reduces production and raises inflation in both regimes, but its effects, particularly on inflation, are stronger under high economic policy uncertainty. %during the periods of high EPU than during the times of low EPU. 

%%%%%%%%%%%%%%%%%%%%%%%%%%%%%%%%%%%%%%%%%%%%%%%%%%%%%%%%%%%%%%%%%%%%%%%%%%%%%%%%%%%%%%%%%%%%%%%%%%%%%%%%%%%%%%%%
%% Tähän väliin muistiinpanoja kirjallisuus-viitteistä

%% \cite{Baker+Bloom+Davis:2016}: (tästä miten EPU voi laskea positive CPU sokin vasteena); ei löytynyt mitään ihmeellistä pikaisella lukaisulla

%% \cite{Fried+Novan+Peterman:2022}: (täällä lähinnä outputtiin keskittyvää)

%% \cite{Khalil+Strobel:2023}:
% "In the case of CPU shocks to climate-related tax rates, the increase in uncertainty can weigh on aggregate demand as risk-averse households raise their precautionary savings and hold back on spending. At the same time, firms, which may not be able to adjust their prices in future periods due to price stickiness, raise their prices when uncertainty increases in order to lower the probability of being stuck with a negative markup."

%% \cite{Huang+Punzi:2024}:
% Both empirical and theoretical analysis show that heightened policy uncertainty reduces physical capital as the firms choose to postpone investment, and even trigger production by lower production, thereby generating an inflationary effect. 
% The heightened uncertainty about environmental policies leads to a sizable disinvestment in capital, followed by decreased working hours (i.e. labor) and output. The uncertainty shock acts as a negative supply shock, causing inflation to rise in the short run due to lower production in intermediate goods

%% Gavriidilits et allia voi myös vähän lukea:

\section{Conclusion}\label{sec:conclusions}
Linear structural vector autoregressive models can be identified statistically by non-Gaussianity, provided that the shocks are mutually independent and at most one of them is Gaussian \citep{Lanne+Meitz+Saikkonen:2017}. This paper is the first to extend identification by non-Gaussianity to nonlinear SVAR models. Specifically, we have shown that structural smooth transition vector autoregressive models, featuring a time-varying impact matrix defined as a weighted sum of regime-specific impact matrices, are likewise identified under these assumptions. While we have focused on exogenous, logistic, and threshold transition weights typically used in macroeconomic applications, our results can be extended to other suitable weight functions as well.

%The weights of the impact matrices are the transition weights of the regimes, which we assume to be either exogenous, logistic, or of the threshold form, but our results can be generalized to other suitable weight functions as well. Thus, the types of transition weights typically used in macroeconomic applications are accommodated. 

%The identification is often weak with respect to the ordering and signs of the columns of the impact matrices in some regimes, and an incorrect ordering or signs of the columns results in impulse response functions that do not estimate the effects of the intended shock but of a linear combination of the structural shocks. In other words, there is often multiple local maxima of the log-likelihood function with almost equal log-likelihood that are very similar to each other but incorporate different ordering and signs of the columns of the impact matrices in some regimes. As a solution, we proposed a procedure that utilizes the almost equally well fitting local solutions obtainable by our estimation  method. If necessary, the identification can also be strengthened by imposing economic overidentifying short-run restrictions, which are testable. 

In addition to establishing the identification of the shocks, we have discussed the problem of labeling them. We have also employed blended identification \citep[cf.][]{Carriero+Marcellino+Tornese:2024} to address weak identification, which in our experience is often but not exclusively related to the ordering and signs of the columns of the regime-specific impact matrices. We have proposed estimating the model parameters by the method of penalized maximum likelihood, and we introduced a three-step estimation procedure by adapting the approach of \cite{Lanne+Meitz+Saikkonen:2017} to nonlinear SVAR models. Building on the results of \cite{Saikkonen:2008}, we have also provided a sufficient condition for ergodic stationarity of our structural STVAR model. The introduced methods are implemented in the accompanying R package sstvars \citep{sstvars}, which is available via the CRAN repository. %Note that while this paper focuses on TVAR and STVAR models, our identification results directly extend to many other nonlinear SVAR models as well. This includes models incorporating discrete regime switches, such Markov-Switching VAR models \citep{Krolzig:1997} and mixture VAR models \citep[e.g.,][]{Fong+Li+Yau+Wong:2007}, and more generally models whose impact matrix can be defined as a weighted sum of the impact matrices of the regimes and the time $t$ weights are known at the time $t$. 

In an empirical application to U.S. data from 1987:4 to 2024:12, we have studied the macroeconomic effects of the climate policy uncertainty shock using a two-regime logistic STVAR model. As a measure of climate policy uncertainty, we employ the CPU index \citep{Gavriilidis:2021} constructed based on the amount of newspaper coverage on climate policy uncertainty topics. To distinguish between periods of low and high economic policy uncertainty, we use the first lag of the EPU index \citep{Baker+Bloom+Davis:2016} as the switching variable. Our results are in line with the previous literature suggesting that a positive CPU shock reduces production and raises inflation (\citealp{Khalil+Strobel:2023} and \citealp{Huang+Punzi:2024}), while \cite{Fried+Novan+Peterman:2022} found that it decreases production. However, we found that while a positive CPU shock reduces production and raises inflation in times of both low and high EPU, its effects, particularly on inflation, are stronger during the periods of high EPU. 

% that employs the first lag of the economic policy uncertainty index \citep{Baker+Bloom+Davis:2016} as the switching variable.
% The first lag of the economic policy uncertainty index \citep{Baker+Bloom+Davis:2016}, in turn, is employed as the switching variable to distinguish between the periods of low and high EPU.
%  The first lag of the economic policy uncertainty index \citep{Baker+Bloom+Davis:2016} is employed as the switching variable to distinguish between periods of low and high EPU.
% Our two-regime logistic STVAR model uses the first lag of the economic policy uncertainty index \citep{Baker+Bloom+Davis:2016} as the switching variable

%\section*{Acknowledgements}
%The author thanks Markku Lanne for the helpful discussions and comments. This work was supported by the Research Council of Finland (Grant 347986).

%\section*{Declaration of generative AI and AI-assisted technologies in the writing process}
%During the preparation of this work the author used chatGPT in order to improve the text, aid in programming the accompanying sstvars \citep{sstvars} software, and to brainstorm ideas, including discussions on approaches to mathematical proofs. After using this tool, the author reviewed and edited the content as needed and takes full responsibility for the content of the publication.

%\section*{Declaration of interest statement}
%The author has no conflict of interest to declare.

%\section*{Appendices}
%The appendices are included as online supplementary material. 

\bibliography{/Users/savi/Documents/texrefs/masterrefs.bib}

\pagebreak
\begin{appendices}
% Adjust numbering for figures and tables in appendices
\renewcommand{\thefigure}{\thesection.\arabic{figure}}
\renewcommand{\thetable}{\thesection.\arabic{table}}
% Reset figure and table counter at the beginning of each appendix
\setcounter{figure}{0}    
\setcounter{table}{0}

% Define theorem environments again so that the appendix letter is included in them. 
\newtheorem{appendixlemma}{Lemma}[section]
\newtheorem{appendixproposition}{Proposition}[section]
\newtheorem{appendixtheorem}{Theorem}[section]
\newtheorem{appendixassumption}{Assumption}[section]
\newtheorem{appendixcondition}{Condition}[section]
\newtheorem{appendixcorollary}{Corollary}[section]

\section{Identification of the shocks in structural TVAR models}\label{sec:identtvar}
The threshold VAR (TVAR) model of \cite{Tsay:1998} is obtained from~(\ref{eq:stvar1}) and (\ref{eq:stvar2}) by assuming that the transition weights are defined as 
\begin{equation}\label{eq:alpha_mt_threshold}
\alpha_{m,t} =
\left\lbrace\begin{matrix}
1 & \text{if} \ \ r_{m-1} < z_t \leq r_{m}, \\
0 & \text{otherwise}, \phantom{aaaaaaa}
\end{matrix}\right.
\end{equation}
where $r_m\in\mathbb{R}$, $m=1,...,M-1$, are the threshold parameters that satisfy $-\infty<r_1< ... <r_{M-1}<\infty$, $r_0\equiv-\infty$, $r_M\equiv\infty$, and $z_t\in\{y_{it-j}, i=1,...,d,j=1,...,p \}$ is the switching variable, which we assume the be a lagged endogenous variable (up to the lag $p$). At each $t$‚ the model defined in Equations~(\ref{eq:stvar1}), (\ref{eq:stvar2}) and (\ref{eq:alpha_mt_threshold}) reduces to a linear VAR corresponding to one of the regimes that is determined according to the level of the switching variable $z_t$. %, which we assume to be either an exogenous or a lagged endogenous variable (up to the lag $p$).
Notably, the TVAR model can be obtained as special case of the LSTVAR model when the smoothing parameter tends to infinity, i.e., $\gamma=\infty$ in~(\ref{eq:alpha_mt_logistic}). Also, exogenous transition weights may sometimes be binary or close-to-binary. Therefore, it is useful to establish the identification result also in such special cases of the STVAR model. 

%Consequently, it is possible to obtain estimated transition weights that are close to binary and conditions of Proposition~\ref{prop:stvar_ident} are close to failing so that (for logistic weights) the condition $\gamma<\infty$ or (for exogenous weights) Condition~\ref{cond:stvar_alpha}of Proposition~\ref{prop:stvar_ident} is close to failing. Therefore, it is useful to establish the identification result also in such special cases where $\gamma=\infty$ or Condition~\ref{cond:stvar_alpha} of Proposition~\ref{prop:stvar_ident} does not hold. %In particular, if the transition weights are binary, the regime-specific impact matrices $B_1,...,B_M$ are identified up to ordering and signs of their columns, as is stated in the following proposition (which is proven in Appendix~\ref{sec:proofpropotvar}).

Suppose the transition weights are binary, $\alpha_{m,t}\in\lbrace 0, 1\rbrace$, for all $t$ and $m=1,...,M$, specifically, either exogenous or of the threshold form~(\ref{eq:alpha_mt_threshold}). Then, at each $t$, the process is completely in one of the regimes, making the impact matrix $B_{y,t}=B_m$ for the active regime~$m$ with $\alpha_{m,t}=1$. If the allocation of the time periods to the regimes is uniquely identified, it then follows from Lemma~\ref{lemma:Bt_at_each_t} that $B_1,...,B_M$ are identified up to ordering and signs of their columns. % within each regime, and identification requires fixing these orderings and signs for each regime-specific impact matrix $B_1,...,B_M$.
Thus, to obtain identification up to ordering and signs, identification of the threshold and AR parameters is established in the following proposition, which is proven in Appendix~\ref{sec:proofpropotvar}.

\begin{appendixproposition}\label{prop:tvar_ident}
Consider the STVAR model defined in Equations~(\ref{eq:stvarstruct}) and (\ref{eq:Bt}) with $B_1,...,B_M$ invertible; Assumption~\ref{as:shocks} satisfied; and $\alpha_{m,t}$ either of the threshold form~(\ref{eq:alpha_mt_threshold}) or exogenous (nonrandom) with $\alpha_{m,t}\in\lbrace 0, 1\rbrace$ for all $t$ and $m=1,...,M$. Suppose the following conditions hold:
\begin{enumerate}[label=(\Alph*)]
\item $(\phi_{m},\text{vec}(A_{m,1}),...,\text{vec}(A_{m,p})) \neq (\phi_{n},\text{vec}(A_{n,1}),...,\text{vec}(A_{n,p}))$ for all $m\neq n \in \lbrace 1,...,M\rbrace$,\label{cond:tvar_different_arpars}
\item (for exogenous weights) there exists $M$ indices $t_1,...,t_M\in\lbrace 1,....,T\rbrace$ such that the corresponding coefficient vectors $(\alpha_{1,t_i},...,\alpha_{M,t_i})$, $i=1,...,M$, are linearly independent.\label{cond:tvar_linind}
\end{enumerate}
Then, the parameters $\phi_{m},A_{m,1},...A_{m,p}$, $m=1,...,M$, and (for threshold models) $r_1,...,r_{M-1}$ are uniquely identified. Moreover, $B_1,...,B_M$ are identified up to ordering and signs of their columns. 
\end{appendixproposition}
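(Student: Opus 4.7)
The plan is to argue in three stages: first, use Lemma~\ref{lemma:Bt_at_each_t} applied at each $t$ to identify the time-varying impact matrix $B_{y,t}$ and the conditional mean $\mu_{y,t}=E[y_t\mid\mathcal F_{t-1}]$ from observables; second, recover the regime assignment (directly from the known weights in the exogenous case, and from discontinuities of $\mu_{y,t}$ in the threshold case); third, deduce the regime-specific AR parameters and each $B_m$ up to column ordering and signs. The key structural observation is that because the weights are binary and sum to one, exactly one regime is active at each $t$, so $B_{y,t}=B_m$ for that unique active regime, and Lemma~\ref{lemma:Bt_at_each_t} immediately gives per-regime identification of $B_m$ (up to ordering and signs) as soon as the regime assignment is known.

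For the exogenous case, Condition~\ref{cond:tvar_linind} supplies $M$ time points $t_1,\dots,t_M$ whose weight vectors are linearly independent. Because each such vector is binary, nonnegative, and sums to one, linear independence forces them to be the $M$ distinct standard basis vectors of $\mathbb R^M$, so each regime is active at one of these times. Since the exogenous weights are known, the regime label at every $t$ is observed directly. On the subset of time points where regime $m$ is active, the identified conditional mean $\mu_{y,t}=\phi_m+\sum_{j=1}^p A_{m,j}y_{t-j}$ is an affine function of the lags with coefficients $(\phi_m,A_{m,1},\dots,A_{m,p})$; since Assumption~\ref{as:shocks}\ref{cond:shockiid} gives the innovations a strictly positive density on $\mathbb R^d$, the support of $(y_{t-1},\dots,y_{t-p})$ has nonempty interior and the coefficients are pinned down, while Condition~\ref{cond:tvar_different_arpars} rules out aliasing across regimes.

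For the threshold case, the thresholds must be identified first. View the identified conditional mean $\mu_{y,t}$ as a function of the switching variable $z_t$, holding the remaining components of $\mathcal F_{t-1}$ fixed. On each interval $(r_{m-1},r_m]$, the map returns the affine form $\phi_m+\sum_j A_{m,j}y_{t-j}$ evaluated at the fixed lags; by Condition~\ref{cond:tvar_different_arpars}, adjacent regimes have distinct affine forms, and their coincidence set in the remaining history is a proper affine subspace of $\mathbb R^{dp}$ of Lebesgue measure zero. Hence for almost every choice of the fixed lags, $\mu_{y,t}$ exhibits jumps precisely at $z_t\in\{r_1,\dots,r_{M-1}\}$. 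Any observationally equivalent parameterization $\tilde\theta$ must reproduce the same jump set, so $\{\tilde r_m\}=\{r_m\}$, and the imposed ordering $r_1<\dots<r_{M-1}$ yields $\tilde r_m=r_m$ for all $m$. With thresholds in hand, the regime assignment is read off from $z_t$, and the AR parameters and $B_m$ are identified exactly as in the exogenous case.

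The main obstacle is the threshold-identification step: one must ensure that the discontinuity genuinely reveals the threshold and is not obscured by a simultaneous change in $B_m$ or by cancellations in the density. Working with $\mu_{y,t}$ rather than with the full conditional density sidesteps this, because $\mu_{y,t}$ depends only on the AR coefficients and regime assignment, and Condition~\ref{cond:tvar_different_arpars} makes the jump generic in the remaining history. A secondary, bookkeeping subtlety is that $\theta$ and $\tilde\theta$ could permute the regime labels, but the imposed ordering of the thresholds (together with the standard-basis structure in the exogenous case) forces the labels to match, so the final identification of $B_1,\dots,B_M$ up to column ordering and signs follows from Lemma~\ref{lemma:Bt_at_each_t} applied inside each regime.
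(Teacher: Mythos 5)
Your overall architecture matches the paper's: identify the weight/threshold and AR parameters first, then observe that binary weights make $B_{y,t}=B_m$ for the active regime, so Lemma~\ref{lemma:Bt_at_each_t} delivers each $B_m$ up to column ordering and signs once the regime assignment is pinned down. Your exogenous-weights argument (linear independence of binary, sum-to-one vectors forces the $M$ distinct standard basis vectors, and a single time point per regime identifies its affine conditional mean because the lag vector has a.e.-positive density) is sound and equivalent to the paper's.

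The gap is in your threshold-identification step. You claim that, by Condition~\ref{cond:tvar_different_arpars}, the coincidence set of two adjacent affine forms is a measure-zero affine subspace of $\mathbb{R}^{dp}$, and "hence" for almost every choice of the remaining lags $\mu_{y,t}$ jumps at $z_t=r_m$. That inference fails: the relevant question is whether the two affine forms coincide on the specific hyperplane $H_{r_m}=\{\boldsymbol{y}:a'\boldsymbol{y}=r_m\}$, and a proper affine subspace of $\mathbb{R}^{dp}$ can contain that entire hyperplane. Concretely, if $\boldsymbol{A}_{m+1}-\boldsymbol{A}_m=\beta a'$ and $\phi_{m+1}-\phi_m=-\beta r_m$ with $\beta\neq 0$, Condition~\ref{cond:tvar_different_arpars} holds, yet the difference of the two regime means equals $\beta(a'\boldsymbol{y}-r_m)$, which vanishes identically on $H_{r_m}$: the conditional mean is continuous (a kink, not a jump) at the threshold for \emph{every} choice of the remaining lags, the jump set is empty, and your detection argument yields no information about $r_m$. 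Since the proposition asserts identification for all parameter values satisfying its conditions, not almost all, this is a genuine hole. The paper avoids it by never restricting to the threshold hyperplane: it intersects the regime intervals $J_m=(r_{m-1},r_m]$ of one parameterization with the intervals $\tilde{J}_n$ of the other, notes that each positive-measure overlap forces the two affine forms to agree on a positive-measure subset of $\mathbb{R}^{dp}$ and hence to be identical, uses Condition~\ref{cond:tvar_different_arpars} (applied to the \emph{tilde} parameters) to rule out two such overlaps for the same $m$, and concludes $J_m=\tilde{J}_{n(m)}$ up to null sets, whence the endpoints and thus the thresholds coincide. Your argument could be repaired by additionally detecting kinks (directional-derivative discontinuities in $z$), but as written the jump-based step does not cover the continuous-threshold case.
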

%
%Conditions~\ref{cond:tvar_different_arpars} and~\ref{cond:tvar_linind} of Proposition~\ref{prop:tvar_ident} are used to establish identification of the AR and threshold parameters.
%, which are identified regardless of the identification of $B_1,...,B_M$. Given the identification of the AR and threshold parameters, 
Condition~\ref{cond:tvar_different_arpars} of Proposition~\ref{prop:tvar_ident} implies distinguishable regimes, whereas Condition~\ref{cond:tvar_linind} guarantees (for exogenous weights) that each regime prevails in at least one time period. Under these conditions, the AR parameters and the partition of the time periods into the regimes are uniquely identified. Moreover, the regime-specific impact matrices $B_1,...,B_M$ are identified up to ordering and signs of their columns.\footnote{The identification is for every invertible $B_1,...,B_M$ and not just for almost every, as $B_{y,t}=B_m$ for some $m$ for all $t$, implying that $B_{y,t}$ is invertible for all invertible $B_m$. Thus, the result of Lemma~\ref{lemma:Bt_at_each_t} applies for all invertible $B_1,...,B_M$.}

%The rest of the conditions are concerned with obtaining just-identifying constraints that identify the impact matrix within each regime. 

By Proposition~\ref{prop:tvar_ident}, the shocks are identified by fixing the ordering and signs of the columns of $B_1,...,B_M$. In general, fixing them in one regime does not necessarily determine them in the others, and whether such constraints are purely normalizations or overidentifying depends on the distributions of the shocks. In practice, we recommend fixing the ordering and signs in all $B_1,...,B_M$ as a part of blended identification in Section~\ref{sec:labellingshocks}, which combines non-Gaussianity with supplementary information to address weak identification discussed in Sections~\ref{sec:ident_nongaus} and~\ref{sec:labellingshocks}.

%
%\footnote{An alternative way to utilize identification by non-Gaussianity in TVAR models is to allow the shock distributions to depend on the regime. Then, it follows directly from Lemma FILL IN that the shocks are identified in each regime up to ordering and sign, which can be fixed to any ordering and signs in each regime separately without loss of generality. However, such procedure is not applicable for the STVAR models, as the impact matrices are linear combinations of the impact matrices of the regimes, and since the TVAR model is often obtained as special case of a STVAR model (e.g., with logistic weights), it is useful to establish results that are readily applicable for such cases, so that one does not need to worry about “close-to-discrete” transition weighs. Also, regime-dependent shock distributions would break the IID assumption of structural shocks, which is used in obtaining the sufficient condition for stationarity and ergodicity of the model (see Appendix FILL IN).}
 %Weak identification with respect to the ordering and signs of the columns of $B_2,...,B_M$ is mostly innocent, as long as it is appropriately dealt with when labelling the shocks by combining Proposition~\ref{prop:tvar_ident} with additional identifying information (see Section~\ref{sec:labellingshocks}).

\section{Stationarity and ergodicity}\label{sec:stat}
Establishing ergodic stationarity of the model is a common practice in time series econometrics, as it facilitates obtaining desirable asymptotic properties such as consistency of the estimator. \cite{Saikkonen:2008} derives a sufficient condition for ergodic stationarity of the smooth transition vector error correction (STVEC) model. \cite{Kheifets+Saikkonen:2020} make use of the results of \cite{Saikkonen:2008} to show that the stationarity condition readily applies for a STVAR model that can be obtained as a special case of the STVEC model of \cite{Saikkonen:2008}. However, in both \cite{Saikkonen:2008} and \cite{Kheifets+Saikkonen:2020}, the conditional covariance matrix of the process is defined as a weighted sum of positive definite error term covariance matrices of the regimes. As discussed in Section~\ref{sec:structstvarsub}, the conditional covariance matrix~(\ref{eq:condcovmat}) of our structural STVAR model has a different form, and therefore, our model is not nested to the model of \cite{Saikkonen:2008}. %The applicability of the results of \cite{Saikkonen:2008} also requires that the weights are not exogenous but functions of the preceding $p$ observations (the time $t$ weights are allowed to depend on the time $t$ observation of an IID process independent of $e_t$, but we do not discuss this option further, as it violates our assumption of $\mathcal{F}_{t-1}$-measurable or nonrandom weights). 
Nonetheless, we show in this section that the stationarity condition of \cite{Saikkonen:2008} applies to our model as well, given that the transition weights are either logistic or of the threshold form, with an extra assumption required for the logistic weights.

The sufficient stationarity condition is expressed in terms of the joint spectral radius (JSR) of certain matrices. The JSR of a finite set of square matrices $\mathcal{A}$ is defined by 
\begin{equation}
\rho(\mathcal{A}) = \underset{j\rightarrow \infty}{\limsup}\left(\underset{A\in \mathcal{A}^j}{\sup}\rho(A) \right)^{1/j},
\end{equation}
where $\mathcal{A}^j=\lbrace A_1A_2...A_j:A_i\in\mathcal{A}\rbrace$ and $\rho(A)$ is the spectral radius of the square matrix $A$.

Consider the companion form AR matrices of the regimes defined as
\begin{equation}\label{eq:boldA}
\boldsymbol{A}_m = 
\underset{(dp\times dp)}{\begin{bmatrix}
A_{m,1} & A_{m,2} & \cdots & A_{m,p-1} & A_{m,p} \\
I_d  & 0     & \cdots & 0            & 0 \\
0     & I_d  &             & 0            & 0 \\
\vdots &   & \ddots & \vdots    & \vdots \\
0     & 0     & \hdots & I_d         & 0
\end{bmatrix}}, \
m=1,...,M.
\end{equation}
The following assumption collects the sufficient conditions for establishing ergodicity, stationarity, and mixing properties of our structural STVAR model.
\begin{appendixassumption}\label{as:stat}
Suppose the following conditions hold:
\begin{enumerate}[label=(\Roman*)]
\item $\rho(\lbrace \boldsymbol{A}_1,...,\boldsymbol{A}_M \rbrace) < 1$,\label{cond:jsr}
\item The distribution of the $IID(0,I_d,\nu)$ random vector $e_t$ has a (Lebesgue) density that is bounded away from zero on compact subsets of $\mathbb{R}^d$.\label{cond:statshock}
\item The transition weights $\alpha_{m,t}$, $m=1,...,M$, follow either the threshold~\eqref{eq:alpha_mt_threshold} or logistic~\eqref{eq:alpha_mt_logistic} process with an endogenous switching variable $z_t\in\lbrace y_{it-j}, i=1,...,d, j=1,...,p \rbrace$, and\label{cond:statweights}
\item for logistic weights (with $M=2$ assumed), the matrix $B_1^{-1}B_2$ has no negative real eigenvalues.\label{cond:logB1B2}
\end{enumerate}
\end{appendixassumption}
Condition~\ref{cond:jsr} states that the JSR of the companion form AR matrices of the regimes is strictly less than one, and it is analogous to Condition~(19) of \cite{Saikkonen:2008}. Note that this condition is sufficient but not necessarily necessary for ergodic stationarity of the model. %(to the best of our knowledge, there does not currently exist results presenting a sufficient necessary condition for STVAR models).
Condition~\ref{cond:statshock} is analogous Assumption~1 of \cite{Saikkonen:2008}, and it is innocuous in practice but rules out bounded error distributions, for example. %(bounding the density away from zero on compact subsets is, however, stronger than the a.e. strict positivity  Assumption~\ref{as:shocks}\ref{cond:shockiid}). %Also, Assumption~\ref{as:statshock} does not require Assumption~\ref{as:shocks}, although the latter is required for our identification results stated in  Proposition~\ref{prop:Bm_ident}.
Condition~\ref{cond:statweights}, in turn, restricts the form of the regime weights to either the standard threshold or logistic processes with endogenous switching, ensuring piecewise constancy or continuity of the weights. %so that the arguments in the proof can rely on their regularity. 
Finally, Condition~\ref{cond:logB1B2} rules out singular convex combinations of the regime-specific impact matrices in the two-regime logistic case, which guarantees positive definiteness of the conditional covariance matrix for all values of the transition weights. In the threshold case, positive definiteness is automatically guaranteed since the impact matrix coincides with one of the regime-specific matrices, each assumed invertible.

The following theorem (proven in Appendix~\ref{sec:proofthmstat}), which is analogous to Theorem~1 in \cite{Saikkonen:2008} and Theorem~1 in \cite{Kheifets+Saikkonen:2020}, states the results. 
\begin{appendixtheorem}\label{thm:stat}
Consider the STVAR process $y_t$ defined in Equations~(\ref{eq:stvarstruct}) and (\ref{eq:Bt}), and suppose Assumptions~\ref{as:shocks} and \ref{as:stat} are satisfied. Then, the process $\boldsymbol{y}_{t}=(y_t,...,y_{t-p+1})$ is a $(1 + ||x||^2)$-geometrically ergodic Markov chain. Thus, there exists a choice of initial values $y_{-p+1},...,y_0$ such that the process $y_t$ is strictly stationary, second-order stationary, and $\beta$-mixing with geometrically decaying mixing numbers. 
\end{appendixtheorem}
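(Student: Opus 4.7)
The strategy is to verify the hypotheses of the standard Foster--Lyapunov / Meyn--Tweedie machinery for $V$-geometrically ergodic Markov chains, adapted to our structural parametrization $B_{y,t}=\sum_m\alpha_{m,t}B_m$ in place of a directly parametrized conditional covariance matrix. Because Condition~\ref{cond:statweights} makes the transition weights endogenous, depending only on $y_{t-1},\ldots,y_{t-p}$, the companion-form process $\boldsymbol{y}_t=(y_t,\ldots,y_{t-p+1})$ is a time-homogeneous Markov chain on $\mathbb{R}^{dp}$. It then suffices to establish (i) a geometric drift condition with Lyapunov function of the form $1+\|\cdot\|^2$, (ii) $\psi$-irreducibility with respect to Lebesgue measure, and (iii) that compact sets are small. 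Given these, $V$-geometric ergodicity, strict and second-order stationarity, and $\beta$-mixing at geometric rate follow from the standard Meyn--Tweedie theorems as in \cite{Saikkonen:2008} and \cite{Kheifets+Saikkonen:2020}.

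For the drift, I would write the process in companion form as $\boldsymbol{y}_t=\boldsymbol{\Phi}_t+\boldsymbol{A}_{y,t}\boldsymbol{y}_{t-1}+\boldsymbol{B}_t e_t$, where $\boldsymbol{A}_{y,t}=\sum_{m=1}^M\alpha_{m,t}\boldsymbol{A}_m$, and $\boldsymbol{\Phi}_t,\boldsymbol{B}_t$ are $\boldsymbol{y}_{t-1}$-measurable with $\boldsymbol{\Phi}_t$ uniformly bounded and $\|\boldsymbol{B}_t\|$ uniformly bounded, since $B_{y,t}$ is a convex combination of the fixed matrices $B_1,\ldots,B_M$. Condition~\ref{cond:jsr} and the standard JSR characterization (Berger--Wang / Rota--Strang) yield a matrix norm $\|\cdot\|_\ast$ and $\rho<1$ with $\max_m\|\boldsymbol{A}_m\|_\ast\le\rho$; by the triangle inequality and $\sum_m\alpha_{m,t}=1$, $\|\boldsymbol{A}_{y,t}\|_\ast\le\rho$ uniformly in $t$. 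Using the equivalent Lyapunov function $V(x)=1+\|x\|_\ast^2$ together with $E\|e_t\|^2=d$, a routine second-moment expansion gives $E[V(\boldsymbol{y}_t)\mid\boldsymbol{y}_{t-1}]\le\lambda V(\boldsymbol{y}_{t-1})+K$ for some $\lambda<1$ and $K<\infty$, which is equivalent, by norm equivalence on $\mathbb{R}^{dp}$, to the analogous drift for $1+\|\cdot\|^2$.

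For irreducibility and the small-set property I would argue via the transition density. Whenever $B_{y,t}$ is invertible, $y_t$ given $\boldsymbol{y}_{t-1}$ has conditional density $p(y\mid\boldsymbol{y}_{t-1})=|\det B_{y,t}|^{-1}\prod_{i=1}^d f_{e_i}\bigl(I_{d,i}'B_{y,t}^{-1}(y-\mu_{y,t})\bigr)$, and iterating $p$ steps yields a transition kernel of $\boldsymbol{y}_t$ that is absolutely continuous on $\mathbb{R}^{dp}$. The task is then to show that this $p$-step density is positive and bounded below on compacts, uniformly in the starting state over compacts, which yields Lebesgue-irreducibility, aperiodicity, and the small-set property of compact sets by standard arguments. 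The critical technical input is that $B_{y,t}^{-1}$ exists and $\|B_{y,t}^{-1}\|$ is bounded uniformly over the full range of admissible transition weights. For threshold weights this is vacuous since $B_{y,t}\in\{B_1,\ldots,B_M\}$. For the logistic case, writing $B_{y,t}=B_1\bigl[(1-\alpha_{2,t})I+\alpha_{2,t}B_1^{-1}B_2\bigr]$, the bracketed factor is invertible for every $\alpha_{2,t}\in[0,1]$ if and only if $B_1^{-1}B_2$ has no eigenvalue of the form $-(1-\alpha)/\alpha$ with $\alpha\in(0,1)$, i.e., no negative real eigenvalue---precisely Condition~\ref{cond:logB1B2}; compactness of $[0,1]$ and continuity then deliver the uniform bound on $\|B_{y,t}^{-1}\|$. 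Condition~\ref{cond:statshock} bounds the product of marginal densities of $e_t$ away from zero on compacts, completing the lower bound on $p(y\mid\boldsymbol{y}_{t-1})$.

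The main obstacle relative to \cite{Saikkonen:2008} and \cite{Kheifets+Saikkonen:2020} is that the conditional covariance in our parametrization~\eqref{eq:condcovmat} is not in the directly-parametrized positive-definite form covered there, so the full-range (not merely almost-everywhere, cf.\ Lemma~\ref{lemma:invertibility}) invertibility and uniform boundedness of $B_{y,t}^{-1}$ must be separately established through Conditions~\ref{cond:statweights} and~\ref{cond:logB1B2}. Once these are in hand, the drift and density arguments above combine with the standard Meyn--Tweedie results to deliver $(1+\|x\|^2)$-geometric ergodicity, and the statements about strict stationarity, second-order stationarity, and geometrically decaying $\beta$-mixing then follow by routine corollaries.
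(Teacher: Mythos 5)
Your proposal is correct and follows essentially the same route as the paper: the paper's proof reduces the theorem to Theorem~1 of \cite{Saikkonen:2008} (whose argument is exactly the drift-plus-irreducibility machinery you outline) and patches the two places where the conditional covariance specification enters, namely the Cline--Pu T-chain/irreducibility step and the moment bound behind the drift condition, using precisely your factorization $B_{y,t}=B_1[(1-\alpha_{2,t})I+\alpha_{2,t}B_1^{-1}B_2]$ and the negative-real-eigenvalue criterion of Condition~\ref{cond:logB1B2} to get invertibility of $B_{y,t}$ over the full range of weights. The only presentational difference is that you re-derive the drift and minorization conditions directly (via an extremal norm for the JSR and explicit transition densities) where the paper defers these steps to \cite{Saikkonen:2008} and \cite{Cline+Phu:1998}.
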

%Theorem~\ref{thm:stat} is proven in Appendix~\ref{thm:stat}. 
%Compared to Theorem~1 of \cite{Saikkonen:2008} and \cite{Kheifets+Saikkonen:2020}, our theorem differs in two respects. First, our result holds only almost everywhere in the parameter space $[B_1:...:B_M]$, whereas theirs holds everywhere. This distinction arises because invertibility of the impact matrix fails on certain measure zero sets, as shown in Lemma~\ref{lemma:invertibility}. Second, in the two-regime logistic case we need to exclude the possibility that B_1^{-1}B_2 has negative real eigenvalues, since otherwise a convex combination of B_1 and B_2 may be singular. This additional restriction has no analogue in the linear or threshold models studied by Saikkonen, but it again excludes only a measure-zero subset of the parameter space and therefore does not affect the almost-everywhere validity of our result.

Assumption~\ref{as:stat} is sufficient for ergodic stationarity of our structural STVAR model, but Assumption~\ref{as:stat}\ref{cond:jsr} is computationally demanding to verify in practice \citep[see, e.g.,][]{Chang+Blondel:2013}. This makes it poorly suitable for restricting the parameter space in the numerical estimation of parameters discussed in Section~\ref{sec:estimation}. During estimation, this condition would have to be verified repeatedly a very large number of times, making the estimation extremely tedious. Hence, it is useful to make use of the following condition in the estimation, which is necessary for Assumption~\ref{as:stat}\ref{cond:jsr}.
\begin{appendixcondition}\label{cond:necessary}
$\max\lbrace \rho(\boldsymbol{A}_1),...,\rho(\boldsymbol{A}_M)\rbrace<1$,
\end{appendixcondition}
where $\rho(\boldsymbol{A}_m)$ is the spectral radius of $\boldsymbol{A}_m$, $m=1,...,M$. 

Condition~\ref{cond:necessary} states that the usual stability condition is satisfied by each of the regimes. It is necessary for Assumption~\ref{as:stat}\ref{cond:jsr}, as $\max(\rho(\boldsymbol{A}_1),...,\rho(\boldsymbol{A}_M))\leq \rho(\lbrace \boldsymbol{A}_1,...,\boldsymbol{A}_M \rbrace)$ \citep[see][and the references therein]{Kheifets+Saikkonen:2020}. After estimation, Assumption~\ref{as:stat}\ref{cond:jsr} can be checked for the solutions of interest.

Several methods have been proposed for computing bounds for the JSR, many of which are discussed in \cite{Chang+Blondel:2013}. The accompanying R package sstvars \citep{sstvars} implements the branch-and-bound method of \cite{Gripenberg:1996}, but the implementation is computationally very demanding if the matrices are large and a relatively tight bound is required. The JSR toolbox in MATLAB \citep{Jungers:2023}, in turn, automatically combines various methods to substantially enhance computational efficiency.

\section{Proofs}

\subsection{Proof of Lemma~\ref{lemma:invertibility}}\label{sec:proofinvertibility}
Before proving Lemma~\ref{lemma:invertibility}, we establish the following lemma that is made use of in the proof:
\begin{appendixlemma}\label{lemma:polyeq}
Suppose $P(x_1,...,x_k)$ is a finite-degree polynomial in the scalar variables $x_i\in\mathbb{R}$, $i=1,...,k$, that is not identically zero. Then, the set $\lbrace (x_1,...,x_k)\in\mathbb{R}^k | P(x_1,...,x_k) = 0 \rbrace$ has Lebesgue measure zero in $\mathbb{R}^k$.
%Denote the set of vectors $(x_1,...,x_k)$ that satisfy the equation 
%\begin{equation}\label{eq:polyeq}
%P(x_1,...,x_k) = 0
%\end{equation}
%as $\mathcal{X}\equiv \lbrace (x_1,...,x_k)\in\mathbb{R}^k | P(x_1,...,x_k) = 0 \rbrace$. Then, the set $\mathcal{X}$ has Lebesgue measure zero in $\mathbb{R}^k$, unless Equation~(\ref{eq:polyeq}) is identically true. 
\end{appendixlemma}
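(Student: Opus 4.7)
The plan is to prove the lemma by induction on the number of variables $k$, using Fubini--Tonelli to reduce the $k$-variable case to the $(k-1)$-variable case together with the one-variable base case. The zero set $Z_P\equiv\lbrace (x_1,\dots,x_k)\in\mathbb{R}^k : P(x_1,\dots,x_k)=0\rbrace$ is closed (hence Borel measurable) since $P$ is continuous, so the measure $\lambda^k(Z_P)$ is well defined, and I can freely apply Tonelli to the indicator $\mathbbm{1}_{Z_P}$.

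For the base case $k=1$, a nonzero univariate polynomial of degree $n$ has at most $n$ real roots by the fundamental theorem of algebra, so $Z_P$ is a finite set and $\lambda^1(Z_P)=0$. For the inductive step, assume the claim for $k-1$ variables. Write $P$ as a polynomial in $x_k$ with coefficients that are polynomials in $(x_1,\dots,x_{k-1})$, namely $P(x_1,\dots,x_k)=\sum_{j=0}^{n}p_j(x_1,\dots,x_{k-1})\,x_k^{j}$. Since $P$ is not identically zero, at least one coefficient polynomial is not identically zero; let $j^\ast$ be the largest such index, so $p_{j^\ast}$ is a nonzero polynomial in $k-1$ variables. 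By the induction hypothesis, the set $E=\lbrace (x_1,\dots,x_{k-1})\in\mathbb{R}^{k-1} : p_{j^\ast}(x_1,\dots,x_{k-1})=0\rbrace$ has Lebesgue measure zero in $\mathbb{R}^{k-1}$.

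For any $(x_1,\dots,x_{k-1})\notin E$, the map $x_k\mapsto P(x_1,\dots,x_{k-1},x_k)$ is a nonzero univariate polynomial in $x_k$ (it has degree exactly $j^\ast$, since its leading coefficient $p_{j^\ast}(x_1,\dots,x_{k-1})$ is nonzero), so by the base case its zero set in $\mathbb{R}$ has $\lambda^1$-measure zero. Denote the $(x_1,\dots,x_{k-1})$-section of $Z_P$ by $S(x_1,\dots,x_{k-1})=\lbrace x_k\in\mathbb{R}: P(x_1,\dots,x_{k-1},x_k)=0\rbrace$. Then $\lambda^1(S(x_1,\dots,x_{k-1}))=0$ for every $(x_1,\dots,x_{k-1})\notin E$. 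Applying Tonelli's theorem to $\mathbbm{1}_{Z_P}$,
\begin{equation*}
\lambda^k(Z_P)=\int_{\mathbb{R}^{k-1}}\lambda^1(S(x_1,\dots,x_{k-1}))\,d\lambda^{k-1}(x_1,\dots,x_{k-1}) = \int_{E}(\cdots)\,d\lambda^{k-1} + \int_{E^c}0\,d\lambda^{k-1}.
\end{equation*}
The first integral vanishes because $\lambda^{k-1}(E)=0$ (integrating any nonnegative measurable function over a null set yields zero, using the convention $\infty\cdot 0=0$), and the second is clearly zero. Hence $\lambda^k(Z_P)=0$, completing the induction.

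I do not anticipate a genuinely hard step; the only subtlety worth flagging is the possibility that $\lambda^1(S(x_1,\dots,x_{k-1}))$ might be infinite on $E$ (namely, when the $x_k$-slice polynomial degenerates to zero), but since $E$ itself has $(k-1)$-dimensional Lebesgue measure zero, this contributes nothing to the outer integral under the standard extended-real convention. No additional tools beyond induction and Tonelli are needed.
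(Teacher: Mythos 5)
Your proof is correct and follows essentially the same route as the paper's: induction on $k$, viewing $P$ as a univariate polynomial in the last variable with polynomial coefficients, applying the induction hypothesis to a nontrivial coefficient (you use the leading one, the paper an arbitrary nontrivial one), and integrating the null slices via Fubini--Tonelli. The extra care you take with measurability of $Z_P$ and the $\infty\cdot 0$ convention on the null set is a welcome but inessential refinement.
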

\begin{proof}
Proof by induction on $k$. For $k=1$, the claim clearly holds, as a nontrivial univariate polynomial has finitely many zeros. For the induction step, suppose the set $\lbrace (x_1,...,x_{k-1})\in\mathbb{R}^{k-1} | P(x_1,...,x_{k-1}) = 0 \rbrace$ has Lebesgue measure zero in $\mathbb{R}^{k-1}$. Write a point $x\in\mathbb{R}^k$ as $x=(x^*,z)$ where $x^*=(x_1,...,x_{k-1})\in\mathbb{R}^{k-1}$ and $z\in\mathbb{R}$ is the last component. Then, we can view $P(x^*,z)$ as a polynomial in $z$, whose coefficients are polynomials in $x^*$:
\begin{equation}
 P(x^*,z)=\sum_{i=0}^s a_i(x^*) z^i,
\end{equation}
where $s$ is the degree of $P$ in the variable $z$ and each coefficient $a_i(x^*)$ is itself a polynomial in $k-1$ variables. 

Define $\mathcal{E}\equiv\lbrace x^* \in\mathbb{R}^{k-1} : a_0(x^*) = a_1(x^*) = \cdots = a_s(x^*) = 0\rbrace$. If $x^*\in \mathcal{E}$, then all coefficients vanish, so $P(x^*,z)=0$ for all $z$. % NOTE: Not necessarily zero for al x^*.
If $x^*\notin \mathcal{E}$, then $P(x^*,z)$ is a nontrivial univariate polynomial in $z$. Because $P$ is not identically zero, at least one coefficient $a_i(x^*)$, $i\in\{ 1,...,s\}$, is nontrivial. Then, by the induction hypothesis, the zero set of this nontrivial $a_i(x^*)$ (and hence $\mathcal{E}\subset \{x^*: a_i(x^*)=0\}$) has measure zero in $\mathbb{R}^{k-1}$. Also, for each fixed $x^*\in\mathbb{R}^{k-1}$, define the set $\mathcal{S}(x^*)\equiv\{z\in\mathbb{R}: P(x^*,z)=0\}$. If $x^*\notin \mathcal{E}$, then $P(x^*,z)$ is a nontrivial univariate polynomial, so $\mathcal{S}(x^*)$ is a finite set of points in $\mathbb{R}$ and thereby has measure zero. If $x^*\in \mathcal{E}$, then $\mathcal{S}(x^*)=\mathbb{R}$. 

Now, the zero set of $P$ is $\mathcal{Z}\equiv \{(x^*,z)\in\mathbb{R}^k : P(x^*,z)=0\}$. Denoting by $\lambda_k(\cdot)$ the $k$-dimensional Lebesgue measure, we have by Fubini's theorem:
\begin{equation}\label{eq:indfubini}
\lambda_k(\mathcal{Z}) = \int_{\mathbb{R}^{k-1}\times\mathbb{R}}\mathbbm{1}\{(x^*,z)\in\mathcal{Z} \} dx^* dz = \int_{\mathbb{R}^{k-1}}\left(\int_{\mathbb{R}}\mathbbm{1}\{(x^*,z)\in\mathcal{Z} \} dz\right) dx^*,
\end{equation}
where $\mathbbm{1}\{(x^*,z)\in\mathcal{Z} \}$ is an indicator function that takes the value one if $(x^*,z)\in\mathcal{Z}$ and the value zero otherwise. The inner integral on the right side of~\eqref{eq:indfubini} is the one-dimensional Lebesgue measure of the slice $\{z : (x^*,z)\in\mathcal{Z}\}=S(x^*)$, so $\lambda_k(\mathcal{Z}) = \int_{\mathbb{R}^{k-1}}\lambda_1(S(x^*)) dx^*$.
If $x^*\notin \mathcal{E}$, the slice $S(x^*)$ is finite, implying that $\lambda_1(S(x^*))=0$. On the other hand, if $x^*\in \mathcal{E}$, the slice equals $\mathbb{R}$, implying that $\lambda_1(S(x^*))=\infty$. However, since $\mathcal{E}$ has measure zero in $\mathbb{R}^{k-1}$, it follows that $\lambda_1(S(x^*))$ is zero almost everywhere, and hence, $\lambda_k(\mathcal{Z})=0$, concluding the proof. 
%If $k=1$, the proof is trivial. Suppose $k>1$. It then follows from Implicit Function Theorem that in the points $x^0=(x_{1,0},...,x_{k,0})\in\mathbb{R}^{k}$ where the gradient of $P(x_1,...,x_k)$ is nonzero, the set $\mathcal{X}$ is a $k-1$-dimensional manifold locally in a neighbourhood of $x^0$. Because this argument applies at all points $x^0$ where the gradient of $P(x_1,...,x_k)$ is nonzero, it follows that the set $\mathcal{X}$ is a $k-1$-dimensional manifold in regions where this condition holds. Therefore, Lebesgue measure of the set $\mathcal{X}$ is zero in $\mathbb{R}^{k}$ in the region where the gradient of $P(x_1,...,x_k)$ is nonzero.
% ALLA OLEVA EI OLE TOTTA, KOSKA Sardin teoreeman mukaan siis R:ssä niiden pisteiden joukko jossa $P(x_1,...,x_k)$ on nolla on nollamitallinen, eli ei siis että preimage olisi nollamitallinen R^k:ssa. 
%On the other hand, by Sard's theorem \citep{Sard:1942}, the set of points where the gradient of $P(x_1,...,x_k)$ is zero has Lebesgue measure zero in $\mathbb{R}^{k}$. Since the union of two sets with Lebesgue measure zero has Lebesgue measure zero, these results can be combined to conclude that the $\mathcal{X}$ has Lebesgue measure zero in $\mathbb{R}^{k}$. 
\end{proof}

%$\lambda_k(Z)=\int \lambda_1(S(x))\,dx$ and $\lambda_1(S(x))=0$

To prove Lemma~\ref{lemma:invertibility}, we start by showing that for all $\alpha_{1,t},...,\alpha_{M,t}$, the matrix $B_{y,t}=\sum_{m=1}^M \alpha_{m,t}B_m$ is invertible almost everywhere in $[B_1:...:B_M] \in \mathbb{R}^{d\times dM}$. %First, note that by the constraints $\sum_{m=1}\alpha_{m,t}=1$ and $\alpha_{m,t}\geq 0$, $m=1,...,M$, we have $\alpha_{m,t}>0$ for some $m$ for all $t$. Therefore, for all $\alpha_{1,t},...,\alpha_{M,t}$, there exists invertible matrices $B_1,...,B_M$ such that $B_{y,t}$ is invertible.
First, note that for all $\alpha_{1,t},...,\alpha_{M,t}$ there exists (invertible) matrices $B_1,...,B_M$ such that $B_{y,t}$ is invertible, e.g., $B_1=...=B_M=I_d$, so $B_{y,t}=I_d$ (as $\sum_{m=1}^M \alpha_{m,t}=1$ for all $t$). The set of points $[B_1:...:B_M] \in\mathbb{R}^{d\times dM}$ %(or equally $(\text{vec}(B_1),...,\text{vec}(B_M))\in\mathbb{R}^{Md^2}$)
for which $B_{y,t}$ is singular is $S_t\equiv\lbrace ([B_1:...:B_M] \in\mathbb{R}^{d\times dM} | \det (\sum_{m=1}^M \alpha_{m,t}B_m)=0 \rbrace$. Since the determinant of a matrix is polynomial in the entries of that matrix, and the entries of $\sum_{m=1}^M \alpha_{m,t}B_m$ are linear combinations of the entries of $B_1,...,B_M$, it follows that the determinant of $\sum_{m=1}^M \alpha_{m,t}B_m$ is a polynomial in the entries of $B_1,...,B_M$. Denote this polynomial as $P_t(B_1,...,B_M)\equiv \det (\sum_{m=1}^M \alpha_{m,t}B_m)$. Note that whenever $\alpha_{m,t}=0$, the impact matrices $B_m$ of the corresponding regimes drop out from $P_t(B_1,...,B_M)$, but in this case $P_t(B_1,...,B_M)$ can still be interpreted as a polynomial in the entries of $B_1,...,B_M$ with the multipliers of the dropped entries being zero. It then follows from Lemma~\ref{lemma:polyeq} that $S_t$ has Lebesgue measure zero in $\mathbb{R}^{d\times dM}$.

%It then follows from Implicit function theorem that in the points $B^0=(vec(B_{1,0}),...,vec(B_{M,0}))\in\mathbb{R}^{Md^2}$ where the gradient of $P_t(B_1,...,B_M)$ is nonzero, the set $S_t$ is an $Md^2-1$-dimensional manifold locally in a neighbourhood of $B^0$. Because this argument applies at all points $B^0$ where the gradient of $P_t(B_1,...,B_M)$ is nonzero, it follows that the set $S_t$ is an $Md^2-1$-dimensional manifold in regions where this condition holds. Therefore, Lebesgue measure of the set $S_t$ is zero in $\mathbb{R}^{Md^2}$ in the region where the gradient of $P_t(B_1,...,B_M)$ is nonzero. On the other hand, by Sard's theorem \citep{Sard:1942}, the set of points where the gradient of $P_t(B_1,...,B_M)$ is zero has Lebesgue measure zero in $\mathbb{R}^{Md^2}$. Since the union of two sets with Lebesgue measure zero has Lebesgue measure zero, these results can be combined to conclude that the $S_t$ has Lebesgue measure zero in $\mathbb{R}^{Md^2}$.

Finally, note that since the above result holds for all $\alpha_{1,t},...,\alpha_{M,t}$, the set $S_t$ has Lebesgue measure zero in $\mathbb{R}^{d\times dM}$ for each $t$. Hence, the union $\cup_{t\in\mathbb{N}}S_t$ is a union of a countable number of measure zero sets and thus itself has Lebesgue measure zero. It follows that the matrix $B_{y,t}$ is invertible for all $t$ almost everywhere in $ [B_1:...:B_M] \in \mathbb{R}^{d\times dM}$, concluding the proof.\qed

%%%%%%%%%%%%%%%%%%%%%%%%%%%%%%%%%%%%%%%5

\subsection{Proof of Lemma~\ref{lemma:Bt_at_each_t}}\label{sec:prooflemmaBteach}
The first part of our proof is largely similar to the proof of Proposition~1 in \cite{Lanne+Meitz+Saikkonen:2017} building on results from independent component analysis \citep{Comon:1994}. First, note that from Equation~(\ref{eq:stvarstruct}), we can rewrite the model as 
\begin{equation}\label{eq:stvarsimp}
y_t - \phi_{y,t} - \boldsymbol{A}_{y,t}\boldsymbol{y}_{t-1} = B_{y,t}e_t,
\end{equation}
where $\phi_{y,t} \equiv \sum_{m=1}^M \alpha_{m,t}\phi_{m}$ $(d\times 1)$; $\boldsymbol{A}_{y,t}\equiv[A_{y,t,1}:...:A_{y,t,p}]$ $(d\times dp)$; $A_{y,t,i}\equiv \sum_{m=1}^M\alpha_{m,t}A_{m,i}$, $i=1,...,p$, are the time-varying AR matrices; and $\boldsymbol{y}_{t-1}=(y_{t-1},...,y_{t-p})$. Then, suppose Equation~(\ref{eq:stvarsimp}) holds also for some other coefficients, transition weights, impact matrix, and shocks $\tilde{\phi}_{m},\tilde{A}_{m,1},...,\tilde{A}_{m,p},\tilde{\alpha}_{m,t}$, $m=1,...,M$, $\tilde{B}_{y,t}$, and $\tilde{e}_t$ (and  define $\tilde{\phi}_{y,t}$ and $\tilde{\boldsymbol{A}}_{y,t}$ analogously), i.e., $y_t - \tilde{\phi}_{y,t} - \tilde{\boldsymbol{A}}_{y,t}\boldsymbol{y}_{t-1} = \tilde{B}_{y,t}\tilde{e}_t$. By subtracting this identify from (\ref{eq:stvarsimp}), we obtain
\begin{equation}\label{eq:tildeproof1}
\tilde{\phi}_{y,t} - \phi_{y,t} + (\tilde{\boldsymbol{A}}_{y,t} - \boldsymbol{A}_{y,t})\boldsymbol{y}_{t-1} = B_{y,t}e_t - \tilde{B}_{y,t}\tilde{e}_t.
\end{equation}
Taking conditional expectation conditional on $\mathcal{F}_{t-1}$ gives
\begin{equation}\label{eq:tildeproof2}
\tilde{\phi}_{y,t} - \phi_{y,t} + (\tilde{\boldsymbol{A}}_{y,t} - \boldsymbol{A}_{y,t})\boldsymbol{y}_{t-1} = 0
\end{equation}
almost surely, as $E[B_{y,t}e_t|\mathcal{F}_{t-1}] = E[\tilde{B}_{y,t}\tilde{e}_t|\mathcal{F}_{t-1}]=0$ (by $e_t \ind \mathcal{F}_{t-1}$ and $E[e_t]=0$). By substituting this back to~(\ref{eq:tildeproof1}), we get 
\begin{equation}
B_{y,t}e_t = \tilde{B}_{y,t}\tilde{e}_t.
\end{equation}

Since $B_{y,t}$ is invertible almost everywhere in $[B_1:...:B_M] \in \mathbb{R}^{d\times dM}$ by Lemma~\ref{lemma:invertibility}, we can solve $e_t$ as $e_t = C\tilde{e}_t$, where $C=B_{y,t}^{-1}\tilde{B}_{y,t}$ is $\mathcal{F}_{t-1}$-measurable. By Assumption~\ref{as:shocks}, the random variables $\tilde{e}_{1t},...,\tilde{e}_{dt}$ are mutually independent and at most one of them is Gaussian. Also the random variables $e_{1t},...,e_{dt}$ are mutually independent and at most one of them is Gaussian. Therefore, conditionally on $\mathcal{F}_{t-1}$, by Darmois-Skitovich theorem \citep[Lemma~A.1 in][]{Lanne+Meitz+Saikkonen:2017}, at most one column of $C$ may contain more than one nonzero element. Suppose, say, the $k$th column of $C$ has at least two nonzero elements, $c_{ik}$ and $c_{jk}$, $i\neq j$. Then, $e_{it}=c_{ik}\tilde{e}_{kt} + \sum_{l=1,l\neq k}^{d}c_{il}\tilde{e}_{lt}$ and $e_{jt}=c_{jk}\tilde{e}_{kt} + \sum_{l=1,l\neq k}^{d}c_{jl}\tilde{e}_{lt}$ with $\tilde{e}_{kt}$ Gaussian (by Darmois-Skitovich theorem). This implies that $E[e_{it}e_{jt}]=c_{ik}c_{jk}\neq 0$ (since $E[\tilde{e}_{kt}^2]=1$), so $e_{it}$ and $e_{jt}$ are not independent, which is a contradiction. Hence, each column of $C$ has at most one nonzero element. 

Now, from the invertibility of $C$ it follows that each column of $C$ has exactly one nonzero element, and for the same reason, also that each row of $C$ has exactly one nonzero element. Therefore, there exists a permutation matrix $P$ and a diagonal matrix $D=\text{diag}(d_1,...,d_d)$ with nonzero diagonal elements such that $C=DP$. Together with $C=B_{y,t}^{-1}\tilde{B}_{y,t}$ this implies that $\tilde{B}_{y,t}=B_{y,t}DP$ and $\tilde{e}_t=P'D^{-1}e_t$, concluding that $B_{y,t}$ is unique up to ordering of its columns and multiplying each column by a constant. 

To complete the first part of the proof, it needs to be shown that the diagonal elements of $D$ are either $1$ or $-1$. By the normalization of the unit variances of $e_t$ and $\tilde{e}_t$, $\text{Cov}(\tilde{e}_t|\mathcal{F}_{t-1})=\text{Cov}(Ce_t|\mathcal{F}_{t-1})=I_d$, which is equivalent to $C\text{Cov}(e_t|\mathcal{F}_{t-1})C'=I_d$ and thereby $CC'=I_d$. By substituting $C=DP$ to this expression, we obtain $DPP'D'=I_d$ and by the orthogonality of permutation matrix $DD'=I_d$. That is, $d_{ii}^2=1$ for all $i=1,...,d$, so $d_{ii}=\pm 1$. Since diagonal elements of $D$ are, hence, either $1$ or $-1$, it follows that $B_{y,t}$ is unique up to ordering of its columns and changing all signs in a column almost everywhere in $[B_1:...:B_M] \in \mathbb{R}^{d\times dM}$. %\qed

%%%%%%%%%%%%%%%%%%%%%%%%%%%%%%

\textbf{Logistic weights}. Next, we show the "moreover" part for the logistic transition weights defined in~(\ref{eq:alpha_mt_logistic}) with $M=2$ and endogenous $z_t$, in particular, a lagged endogenous variable obtained as $z_t = a'\boldsymbol{y}_{t-1}$ with a fixed $a\neq 0$ (where $a$ has exactly one element equal to one and others equal to zero). Consider Equation~(\ref{eq:tildeproof2}) with two sets of parameters $\phi_{m},A_{m,1},...,A_{m,p}$, $m=1,...,M$, $c,\gamma$ and $\tilde{\phi}_{m},\tilde{A}_{m,1},...,\tilde{A}_{m,p}$, $m=1,...,M$, $\tilde{c},\tilde{\gamma}$. The proof proceeds in the following steps: (1) we show that the AR parameters $\phi_{m},A_{m,1},...,A_{m,p}$, $m=1,...,M$ are identified (without relying on possible identification of the transition weights), and then, (2) given identification of the AR parameters, the identification of the transition weights and the parameters $c,\gamma$ is established. %(1) we show that $\tilde{\alpha}_{m,t}=\alpha_{m,t}$ for each $t$ almost everywhere in $z_t\in\mathbb{R}$, (2) we use this identity to show that the parameters $c,\gamma$ are uniquely identified (almost everywhere), and then (3) we use these results to show that the intercepts and autoregression matrices are uniquely identified. 

% u in S is a vector yielded by difference of two such vectors y^(1) and y^(2), such that the transition weight does not change (i.e., the switching variable stays constant), right? So in B.3, what do you mean when you choose different levels v_a and v_b and then restrict to u in S?

\textbf{Part 1.} Denote $\boldsymbol{A}_m \equiv [A_{m,1}:...:A_{m,p}]$ $(d\times dp)$ and $\tilde{\boldsymbol{A}}_m \equiv [\tilde{A}_{m,1}:...:\tilde{A}_{m,p}]$ $(d\times dp)$, $m=1,2$. By making use of the identity $\alpha_{1,t} = 1 - \alpha_{2,t}$, write~(\ref{eq:tildeproof2}) as
\begin{align}
\begin{aligned}
\alpha_{1,t}\big[(\boldsymbol{A}_1-\boldsymbol{A}_2)\boldsymbol{y}_{t-1}+(\phi_1-\phi_2)\big]
&-\tilde{\alpha}_{1,t}\big[(\tilde{\boldsymbol{A}}_1-\tilde{\boldsymbol{A}}_2)\boldsymbol{y}_{t-1}+(\tilde{\phi}_1-\tilde{\phi}_2)\big]  \\
&+ \big[(\boldsymbol{A}_2-\tilde{\boldsymbol{A}}_2)\boldsymbol{y}_{t-1}+(\phi_2-\tilde{\phi}_2)\big]=0.\label{eq:2reg}
\end{aligned}
\end{align}

%Since $\alpha_{1,t}$ and $\tilde{\alpha}_{1,t}$ depend on $\boldsymbol{y}_{t-1}$, Lemma~\ref{lemma:polyeq} cannot be directly applied to Equation~(\ref{eq:tildeproof2}) to conclude that it holds almost everywhere if and only if it is identically true.
Since by assumption the transition weights depend only on the scalar switching variable $z_t = a'\boldsymbol{y}_{t-1}$ (a single component of $\boldsymbol{y}_{t-1}$), we can analyze~\eqref{eq:2reg} on level sets of $z_t$ where the weights become constant. On each such set, Equation~\eqref{eq:2reg} reduces to an affine relation in $\boldsymbol{y}_{t-1}$ with fixed coefficients, which allows us to exploit linear-algebraic arguments to identify the AR matrices and intercepts. Because Equation~\eqref{eq:2reg} holds with probability one for $\boldsymbol{y}_{t-1}$, and Assumption~\ref{as:shocks} implies that the distribution of $\boldsymbol{y}_{t-1}$ has a density strictly positive almost everywhere in $\mathbb{R}^{dp}$, it follows that Equation~\eqref{eq:2reg} holds almost everywhere in $\boldsymbol{y}_{t-1}\in\mathbb{R}^{dp}$. Below, we drop the time index from $\boldsymbol{y}_{t-1}$ and write simply $\boldsymbol{y}\in\mathbb{R}^{dp}$ for the generic lag vector when considering Equation~\eqref{eq:2reg} as a functional equation holding for almost every $\boldsymbol{y}\in\mathbb{R}^{dp}$.

For $v\in\mathbb{R}$, let $H_v \equiv \lbrace\boldsymbol{y} \in \mathbb{R}^{dp} : a'\boldsymbol{y} = v\rbrace$ be the level set of $z_t$. Since \eqref{eq:2reg} holds almost everywhere in $\boldsymbol{y} \in \mathbb R^{dp}$, it follows that for almost every value of $v=a'\boldsymbol{y}$, \eqref{eq:2reg} holds for almost every $\boldsymbol y \in H_v$. % This holds because v is just a single element of y, and thus changes variables linearly from R^{dp} to (v,w) \in R \times R^{dp-1}. Under this transformation, the null set in R^{dp} becomes a null set in the product measure \lamda^1 \times \lambda^{dp-1} (eli change of variables jhossa integroidaan indikaattorifunktiota Lebesgue-nollajoukosta). By the definition of product measure, this means that for almost every v, the "fiber" in w has \lambda^{dp-1}-measure zero (Fubini theoremilla irrota integraalit $\lamda^1$ ja $\lambda^{dp-1}$ suhteen erikseen, jälkimmäisempi sisällä, ja siten jotta tuplaintegraali ois nollaa, pitää sisemmän integraalin olla nollaa melkein kaikilla v.
%1.	Start with the fact that (C.6) fails only on some null set $N\subset \mathbb R^{dp}$.
%That means $\int 1_N(y)\,dy = 0$.
%2.	Do the linear change of variables: $\boldsymbol y \mapsto (v,w) with v=a’\boldsymbol y \in \mathbb R, w \in \mathbb R^{dp-1}$.
%This is an invertible linear map, so Lebesgue measure transforms into a constant multiple of $\lambda^1 \times \lambda^{dp-1}$.
%3.	Apply Fubini:
%$0 = \int_{\mathbb R^{dp}} 1_N(y)\,dy = \int_{\mathbb R} \left(\int_{\mathbb R^{dp-1}} 1_N(v,w)\,dw\right) dv.$
%4.	Therefore, for almost every v, the inner integral $\int 1_N(v,w)\,dw = 0$.
% Selitys jätetty pois, koska v niin simppelisti vain fiksaa yhen y:n elementin joten tuntuu intuitiivisesti selvältä, ja lyhyet fubini referenssit voi lähinnä sekoittaa lukijaa.
%
On $H_v$, the weights $\alpha_{1,t}=\alpha_1(v)$ and $\tilde{\alpha}_{1,t}=\tilde{\alpha}_1(v)$ are constants, so \eqref{eq:2reg} becomes
\begin{align}
\begin{aligned}
\alpha_1(v)\big[(\boldsymbol{A}_1-\boldsymbol{A}_2)\boldsymbol{y}+(\phi_1-\phi_2)\big]
&-\tilde{\alpha}_1(v)\big[(\tilde{\boldsymbol{A}}_1-\tilde{\boldsymbol{A}}_2)\boldsymbol{y}+(\tilde{\phi}_1-\tilde{\phi}_2)\big] \\
&+ (\boldsymbol{A}_2-\tilde{\boldsymbol{A}}_2)\boldsymbol{y}+(\phi_2-\tilde{\phi}_2) = 0. \label{eq:Hv}
\end{aligned}
\end{align}
Since the left side of \eqref{eq:Hv} is affine in $\boldsymbol{y}$, if it vanishes almost everywhere in $H_v$, it must vanish for all $\boldsymbol{y}\in H_v$. Thus, for almost every $v\in\mathbb{R}$, Equation~\eqref{eq:Hv} holds for all $\boldsymbol{y}\in H_v$, i.e., for all values of the $dp-1$ free entries orthogonal to $a$. %Next, we show that \eqref{eq:Hv} holds for all $v$.

To make the dependence on $v$ explicit, define
\begin{align}
L(v)& \equiv\alpha_1(v)(\boldsymbol{A}_1-\boldsymbol{A}_2)
-\tilde\alpha_1(v)(\tilde{\boldsymbol A}_1-\tilde{\boldsymbol{A}}_2)
+(\boldsymbol A_2-\tilde{\boldsymbol A}_2),\\
b(v)& \equiv\alpha_1(v)(\phi_1-\phi_2)
-\tilde\alpha_1(v)(\tilde\phi_1-\tilde\phi_2)
+(\phi_2-\tilde\phi_2),
\end{align}
so that \eqref{eq:Hv} reads $L(v)\boldsymbol{y}+b(v)=0$ for all $\boldsymbol{y}\in H_v$. Every $\boldsymbol{y} \in H_v$ can be written as
\begin{equation}
\boldsymbol{y} = \boldsymbol{y}_v(v) + U\boldsymbol{y}_{\smallsetminus v},
\end{equation}
where $\boldsymbol{y}_{\smallsetminus v}\in\mathbb{R}^{dp-1}$, $U\in\mathbb{R}^{dp\times(dp-1)}$ has columns spanning $\mathcal{S}\equiv\{u:a'u=0\}$ (so $a'U=0$), and $\boldsymbol{y}_v(v)$ is any particular vector satisfying $a'\boldsymbol{y}_v(v)=v$, e.g., $\boldsymbol{y}_v(v)\equiv \frac{v}{\|a\|^2}a$. Substituting gives, for almost every $v$, 
\begin{equation}
L(v)U\boldsymbol{y}_{\smallsetminus v} + (L(v)\boldsymbol{y}_v(v) + b(v)) = 0
\end{equation}
for all $\boldsymbol{y}_{\smallsetminus v}$. Hence, for almost every $v$, $L(v)U=0$ and $L(v)\boldsymbol{y}_v(v) + b(v)=0$. Each of  entry of $L(v)U$ and $L(v)\boldsymbol{y}_v(v) + b(v)$ is a real-analytic function of $v$ (since $\alpha_1(v),\tilde{\alpha}_1(v)$ are logistic and hence real-analytic, and $\boldsymbol{y}_v(v)$ is affine in $v$). It follows that by the Identity Theorem, both $L(v)U=0$ and $L(v)\boldsymbol{y}_v(v) + b(v)=0$ hold for all $v\in\mathbb{R}$. Consequently, \eqref{eq:Hv} holds for all $v$ and $\boldsymbol{y}\in H_v$. % Ajatus on vaan se, että eq:Hv:n vasen puoli v:n real-analytic funktio kaikilla y:llä niissä suunnissa joita a'y=v ei fiksaa, joten kun se pätee a.e. v niin sen täytyy päteä kaikilla v. 

Subtracting \eqref{eq:Hv} for two (distinct) points $\boldsymbol{y}^{(1)},\boldsymbol{y}^{(2)}\in H_v$ eliminates the intercepts, giving
\begin{equation}\label{eq:Lv}
L(v)u = 0 \quad\text{for all }u \equiv \boldsymbol{y}^{(1)}-\boldsymbol{y}^{(2)} \in \mathcal{S},
\end{equation}
where $\mathcal{S} = \lbrace u : a'u = 0\rbrace$ is the $(dp-1)$-dimensional subspace of vectors orthogonal to $a$, i.e., the set of displacements in $\mathbb{R}^{dp}$ that keep $a'\boldsymbol{y}$ (and hence $z_t$) fixed. Note that the difference $u$ does not depend on the level $v$ of switching variable $z_t$, since the related component of $u$ is always zero by construction.

Now, observe that if $M\in\mathbb{R}^{d\times dp}$ satisfies $Mu=0$ for all $u\in \mathcal{S}$, then there exists a unique $\beta\in\mathbb{R}^d$ such that\footnote{To see this, pick any $w$ with $a'w=1$ and decompose $\boldsymbol{y} = u + (a'\boldsymbol{y})w$ with $u\in \mathcal{S}$. Then $M\boldsymbol{y} = (a'\boldsymbol{y})Mw$, and setting $\beta \equiv Mw$ gives $M\boldsymbol{y} = \beta a'\boldsymbol{y}$ for all $\boldsymbol{y}$, i.e., $M=\beta a'$. To show uniqueness, suppose $M=\tilde{\beta} a'$ holds also for some $\tilde{\beta}\in\mathbb{R}^d$. Then, $(\beta - \tilde{\beta}) a'=0$, and multiplying this by $w$ from the right gives $\beta - \tilde{\beta} = Mw - \tilde{\beta} = 0$, showing $\tilde{\beta} = Mw$ and hence $\tilde{\beta} =\beta$.} % Huom, skalaarit a'y jne voi liikkua transitiivisesti ympäriinsä tuolla
\begin{equation}\label{eq:linfact}
M\boldsymbol{y} = \beta a'\boldsymbol{y} \ \ \text{for all} \ \ \boldsymbol{y}.
\end{equation}
%Applying this to $M=L(v)$ yields a vector $\beta(v)$ with
%\begin{equation}\label{eq:rank1}
%L(v)\,\boldsymbol{y} = \beta(v)a'\boldsymbol{y}. %\quad (\forall\,\boldsymbol{y}).
%\end{equation} % Illustratiivinen, ei käytetä mihinkään

%%%%%%%%%%%%%%%%%%%%%%%%
%Next, pick $v_a\neq v_b$ so that $\alpha_1(v_a) \neq \alpha_1(v_b)$ (and $\tilde{\alpha}_1(v_a) \neq \tilde{\alpha}_1(v_b)$) by strict monotonicity of the logistic weight function. %, with the existence of two such levels given by Condition~\ref{cond:linind} of Lemma~\ref{lemma:Bt_at_each_t}. 
For any $u\in \mathcal{S}$, %subtracting \eqref{eq:Hv} at two points of $H_{v}$ with the difference $u$ gives, for each $v\in\{v_a,v_b\}$,
Equation~\eqref{eq:Lv} can be expressed as
\begin{equation}\label{eq:lin-u-v}
\alpha_1(v)(\boldsymbol{A}_1-\boldsymbol{A}_2)u
-\tilde{\alpha}_1(v)(\tilde{\boldsymbol{A}}_1-\tilde{\boldsymbol{A}}_2)u
+(\boldsymbol{A}_2-\tilde{\boldsymbol{A}}_2)u
=0.
\end{equation} % Jos samoja, niin tulee [(A1-A2)-(tildeA1 - tildeA2)]u=0.
For each component $k=1,...,d$, define
\begin{equation}
u^{(1)}_k \equiv [(\boldsymbol{A}_1-\boldsymbol{A}_2)u]_k,\quad
u^{(2)}_k \equiv [(\tilde{\boldsymbol{A}}_1-\tilde{\boldsymbol{A}}_2)u]_k,\quad
u^{(0)}_k \equiv [(\boldsymbol{A}_2-\tilde{\boldsymbol{A}}_2)u]_k.
\end{equation}
Then, the $k$th component of (\ref{eq:lin-u-v}) is 
\begin{equation}\label{eq:comp-linuv}
\alpha_1(v)u^{(1)}_k - \tilde{\alpha}_1(v)u^{(2)}_k + u^{(0)}_k = 0 \quad \forall v\in\mathbb{R}.
\end{equation}
Let $r(v) \equiv (\alpha_1(v), -\tilde{\alpha}_1(v), 1) \in \mathbb{R}^3$ and $w_k \equiv (u^{(1)}_k, u^{(2)}_k, u^{(0)}_k) \in \mathbb{R}^3$, so that~(\ref{eq:comp-linuv}) can be expressed as $r(v)'w_k=0$, i.e., $w_k$ is orthogonal to the set $\{r(v): v\in\mathbb{R}\}$. 

If $\alpha_1(v)=\tilde{\alpha}_1(v)$ for all $v$, then $r(v) = (\alpha_1(v), -\alpha_1(v), 1)$, so the span of $r(v)$ is only $2$-dimensional. Then, evaluating~(\ref{eq:comp-linuv}) at two distinct values of $v$ yields linear conditions that force $w_k=0$. If $\alpha_1(v)=\tilde{\alpha}_1(v)$ does not hold identically, the set $\{r(v): v\in\mathbb{R}\}$ spans the whole $\mathbb{R}^3$.\footnote{If not, then there exists a vector $(c_1,c_2,c_3)\neq 0$ with $c_1\alpha_1(v)-c_2\tilde{\alpha}_1(v)+c_3 = 0$ for all $v$. Taking the limit $v\rightarrow \infty$ yields $c_3=0$, as $\underset{v\rightarrow \infty}{\lim}\alpha_{1}(v)=\underset{v\rightarrow \infty}{\lim}\tilde{\alpha}_{1}(v)=0$. Then $c_1\alpha_1(v) = c_2\tilde{\alpha}_1(v)$ for all $v$, and taking the limit $v\rightarrow -\infty$ yields $c_1=c_2$, as $\underset{v\rightarrow -\infty}{\lim}\alpha_{1}(v)=\underset{v\rightarrow -\infty}{\lim}\tilde{\alpha}_{1}(v)=1$. This forces $\alpha_1(v) = \tilde{\alpha}_1(v)$, which is a contradiction.}
Therefore the only vector $w_k$ orthogonal to all $r(v)$ is the zero vector, so $w_k=0$. 

Since $w_k=0$ for every $k=1,...,d$, we have
\begin{equation}
(\boldsymbol{A}_1-\boldsymbol{A}_2)u=0, \ \ (\tilde{\boldsymbol{A}}_1-\tilde{\boldsymbol{A}}_2)u=0, \ \ \text{and} \ \ (\boldsymbol{A}_2-\tilde{\boldsymbol{A}}_2)u=0
\end{equation}
for all $u\in\mathcal S$. 
% Se pätee näille kahdelle eri pisteelle v samalla u koska u ei riipu v:stä, eli sama erotus voi tuotta eri levelit (se y:n elementti mikä muuttaa leveliä liikkuu nollan molemmissa u-vektoreissa) 
% Huom tuosta ei tule A_1=A_2, koska pätee kaikilla u in S, jolla on pienempi dimensio kuin y:n avaruudella
%
Applying the result given in~(\ref{eq:linfact}) then shows
\begin{equation}\label{eq:adiffbeta}
(\boldsymbol{A}_1-\boldsymbol{A}_2)\boldsymbol{y} = \beta_1a'\boldsymbol{y}, \ \ 
(\tilde{\boldsymbol{A}}_1-\tilde{\boldsymbol{A}}_2)\boldsymbol{y} = \beta_2a'\boldsymbol{y}, \ \ \text{and} \ \
(\boldsymbol{A}_2-\tilde{\boldsymbol{A}}_2)\boldsymbol{y} = \beta_0a'\boldsymbol{y}
\end{equation}
for some $\beta_0,\beta_1,\beta_2\in\mathbb{R}^d$ and all $\boldsymbol{y}$. Substituting these identities into~\eqref{eq:Hv} and taking $\boldsymbol{y}\in H_v$ (so $a'\boldsymbol{y}=v$) gives
\begin{equation}\label{eq:realanalv1}
\alpha_1(v)(\beta_1 v+\phi_1-\phi_2)
 - \tilde{\alpha}_1(v)(\beta_2 v+\tilde{\phi}_1-\tilde{\phi}_2)
 + (\beta_0 v+\phi_2-\tilde{\phi}_2)
 = 0
\end{equation}
for all $v$. 

For $v\neq 0$, dividing~(\ref{eq:realanalv1}) by $v$ and taking the limit $v\rightarrow \infty$ yields $\beta_0=0$, as $\underset{v\rightarrow \infty}{\lim}\alpha_{1}(v)=\underset{v\rightarrow \infty}{\lim}\tilde{\alpha}_{1}(v)=0$. Plugging this into the last equation in~(\ref{eq:adiffbeta}) gives $(\boldsymbol{A}_2-\tilde{\boldsymbol{A}}_2)\boldsymbol{y} = 0$ for all $\boldsymbol{y}$, and thus, $\boldsymbol{A}_2=\tilde{\boldsymbol{A}}_2$. With $\beta_0=0$, dividing~(\ref{eq:realanalv1}) by $v$ and taking the limit $v\rightarrow -\infty$ yields $\beta_1=\beta_2$, as $\underset{v\rightarrow -\infty}{\lim}\alpha_{1}(v)=\underset{v\rightarrow -\infty}{\lim}\tilde{\alpha}_{1}(v)=1$. Plugging this and $\boldsymbol{A}_2=\tilde{\boldsymbol{A}}_2$ in the first two equations in~(\ref{eq:adiffbeta}) yields $(\boldsymbol{A}_1-\tilde{\boldsymbol{A}}_1)\boldsymbol{y} = 0$ for all $\boldsymbol{y}$, and hence, $\boldsymbol{A}_1=\tilde{\boldsymbol{A}}_1$. 

Also, $\alpha_1(v) v \rightarrow 0$ as $v\rightarrow\infty$, %which can be seen $0\leq \alpha_1(v)=\frac{1}{1 + e^{\gamma (v - c)}}\leq e^{-\gamma (v - c)}$, 
since for $v>0$ we have $0\leq v\alpha_1(v) \leq ve^{-\gamma (v - c)} = \frac{v}{e^{\gamma v}}e^{\gamma c} \rightarrow 0$ as $v\rightarrow\infty$ by L'Hôpital's rule. Consequently, with $\beta_0=0$, taking the limit $v\rightarrow\infty$ in (\ref{eq:realanalv1}) gives $\phi_2=\tilde{\phi}_2$. 
Since the denominator of $v(\alpha_1(v) - \tilde{\alpha}_1(v)) = \frac{v\exp\lbrace\tilde{\gamma}(v - \tilde{c})\rbrace - v\exp\lbrace\gamma(v - c)\rbrace}{(1 + \exp\lbrace\gamma(v - c)\rbrace)(1 + \exp\lbrace\tilde{\gamma}(v - \tilde{c})\rbrace)}$ tends to one as $v\rightarrow -\infty$, it follows that $\underset{v\rightarrow -\infty}{\lim}v(\alpha_1(v) - \tilde{\alpha}_1(v)) = e^{-\tilde{\gamma}\tilde{c}}\underset{v\rightarrow -\infty}{\lim} ve^{\tilde{\gamma}v} - e^{-\gamma c}\underset{v\rightarrow -\infty}{\lim} ve^{\gamma v} = 0$ by applying L'Hôpital's rule. Therefore, with $\beta_0=0$, $\beta_1=\beta_2$, and $\phi_2=\tilde{\phi}_2$, taking the limit $v\rightarrow -\infty$ in (\ref{eq:realanalv1}) gives $\phi_1=\tilde{\phi}_1$. Thus, the AR parameters $\phi_m,A_{m,1},...,A_{m,p}$, $m=1,2$, are uniquely identified.%\footnote{\label{footnote:arsampleident}
\textbf{Part 2.} Given identification of $\phi_m$ and $A_{m,i}$, we can substitute $\phi_m=\tilde{\phi}_m$ and $A_{m,i} =\tilde{A}_{m,i}$, $m=1,2$, $i=1,...,p$, to Equation~(\ref{eq:Hv}) to obtain that, for every $v\in\mathbb{R}$,
\begin{equation}\label{eq:Hv3}
(\alpha_{1}(v) - \tilde{\alpha}_{1}(v))\big[\mu_1(\boldsymbol{y}) - \mu_2(\boldsymbol{y})  \big] = 0 \ \ \text{for all} \ \ \boldsymbol{y}\in H_v=\{\boldsymbol{y} \in \mathbb{R}^{dp} : a'\boldsymbol{y} = v\},
\end{equation}
where $\mu_m(\boldsymbol{y})\equiv \phi_m + \boldsymbol{A}_m\boldsymbol{y}$, $m=1,2$. Hence, for each fixed $v$, either $\alpha_{1}(v) = \tilde{\alpha}_{1}(v)$ or $\mu_1(\boldsymbol{y}) = \mu_2(\boldsymbol{y})$ for all $\boldsymbol{y}\in H_v$.
Considering the latter case, the equality $\mu_1(\boldsymbol{y}) = \mu_2(\boldsymbol{y})$ can be written as 
\begin{equation}\label{eq:aidenteq}
(\phi_1 - \phi_2) + (\boldsymbol{A}_1 - \boldsymbol{A}_2)\boldsymbol{y} = 0.
\end{equation}
Subtracting for two (distinct) points $\boldsymbol{y}^{(1)},\boldsymbol{y}^{(2)}\in H_v$ eliminates the intercepts and gives
\begin{equation}
(\boldsymbol{A}_1 - \boldsymbol{A}_2)u=0 \ \ \text{for all} \ \ \boldsymbol{u}=\boldsymbol{y}^{(1)}-\boldsymbol{y}^{(2)}\in \mathcal{S},
\end{equation}
where $\mathcal{S} = \{u : a'u = 0 \}$ as before. Then, applying~(\ref{eq:linfact}) with $M=\boldsymbol{A}_1 - \boldsymbol{A}_2$ yields
\begin{equation}\label{eq:aidenteq2}
(\boldsymbol{A}_1 - \boldsymbol{A}_2)\boldsymbol{y} = \beta_5a'\boldsymbol{y} 
\end{equation}
for some $\beta_5\in\mathbb{R}^d$ and all $\boldsymbol{y}$. Substituting this to~(\ref{eq:aidenteq}) and taking $\boldsymbol{y}\in H_v$ so that $a'\boldsymbol{y}=v$ implies $(\phi_1 - \phi_2) + \beta_5v = 0.$

Thus, if $\beta_5\neq 0$, the equality $\mu_1(\boldsymbol{y}) = \mu_2(\boldsymbol{y})$ can hold only at a single level $v_0$ that satisfies $(\phi_1 - \phi_2) + \beta_5v_0 = 0$. If $\beta_5=0$, it follows from~(\ref{eq:aidenteq2}) that $\boldsymbol{A}_1 = \boldsymbol{A}_2$ and thus from~(\ref{eq:aidenteq}) that also $\phi_1 = \phi_2$, contradicting Condition~\ref{cond:different_arpars} of Lemma~\ref{lemma:Bt_at_each_t}. Therefore, it must be that $\alpha_{1}(v) = \tilde{\alpha}_{1}(v)$ for all $v\neq v_0$. However, since $\alpha_{1}(v),\tilde{\alpha}_{1}(v)$ are real-analytic in $v$ and coincide on $\mathbb{R}\setminus\{v_0\}$ that has a positive Lebesgue measure, they must agree everywhere by the Identity Theorem, concluding that $\alpha_{1,t} = \tilde{\alpha}_{1,t}$ identically.

To show identification of the parameters $\gamma$ and $c$, observe that the identity $\alpha_{1}(v)=\tilde{\alpha}_{1}(v)$ can be expressed as $(\tilde{\gamma}-\gamma) v = \tilde{\gamma}\tilde{c} - \gamma c$. Since this holds for all $v$, it holds for two distinct values of $v$, implying that $\gamma=\tilde{\gamma}$ and $c=\tilde{c}$, concluding the identification proof logistic weights.%\footnote{\label{footnote:weightsampleident}

\textbf{Exogenous nonrandom weights}. Suppose the transition weights $\alpha_{1,t},...,\alpha_{M,t}$ are exogenous (nonrandom), i.e., they are given. Consider Equation~(\ref{eq:tildeproof2}) with the two sets of parameters $\phi_{m},A_{m,1},$ $...,A_{m,p}$, $m=1,...,M$, and $\tilde{\phi}_{m},\tilde{A}_{m,1},...,\tilde{A}_{m,p}$, $m=1,...,M$ (and $\alpha_{m,t}=\tilde{\alpha}_{m,t}$ for all $m=1,...,M$ and $t$). Because Equation~\eqref{eq:tildeproof2} holds with probability one for $\boldsymbol{y}_{t-1}$, and Assumption~\ref{as:shocks} implies that the distribution of $\boldsymbol{y}_{t-1}$ has a density strictly positive almost everywhere in $\mathbb{R}^{dp}$, it follows that Equation~\eqref{eq:tildeproof2} holds almost everywhere in $\boldsymbol{y}_{t-1}\in\mathbb{R}^{dp}$. Therefore, 
\begin{equation}\label{eq:exocondmeanequal}
\sum_{m=1}^M\alpha_{m,t}(\tilde{\phi}_m - \phi_m) + \sum_{m=1}^M\alpha_{m,t}(\tilde{\boldsymbol{A}}_m - \boldsymbol{A}_m)\boldsymbol{y}_{t-1}=0
\end{equation}
almost everywhere in $\boldsymbol{y}_{t-1}\in\mathbb{R}^{dp}$, where $\boldsymbol{A}_m = [A_{m,1}:...:A_{m,p}]$ $(d\times dp)$ and $\tilde{\boldsymbol{A}}_m = [\tilde{A}_{m,1}:...:\tilde{A}_{m,p}]$ $(d\times dp)$, $m=1,...,M$ as before. 

Because the left side of~\eqref{eq:exocondmeanequal} is affine in $\boldsymbol{y}_{t-1}$, if it vanishes almost everywhere, it must vanish for all $\boldsymbol{y}_{t-1}\in\mathbb{R}^{dp}$. % Esim myös voi nähdä ihan vain niin, että jos tuo pätee kaikkia positiivisen mitan joukossa, niin siinä joukossa noiden pitää olla nollaa, joten ne on nollaa (koska a_mt ei riipu y_t-1:stä). Eli riittää että joillain y kertoimet on nollaa niin niiden pitää olla nollaa. 
Thus, for each $t$,
\begin{equation}\label{eq:exocondmeanequal2}
\sum_{m=1}^M\alpha_{m,t}(\tilde{\phi}_m - \phi_m) = 0 \ \ \text{and} \ \ \sum_{m=1}^M\alpha_{m,t}(\tilde{\boldsymbol{A}}_m - \boldsymbol{A}_m)=0.
\end{equation}
By Condition~\ref{cond:linind} of Lemma~\ref{lemma:Bt_at_each_t}, there exists $M$ indices $t_1,...,t_M$ for which the $(M\times M)$ matrix of weights
\begin{equation}
W\equiv \begin{bmatrix}
\alpha_{1,t_1} & \cdots & \alpha_{M,t_1} \\
\vdots         & \ddots & \vdots \\
\alpha_{1,t_M} & \cdots & \alpha_{M,t_M}
\end{bmatrix}
\end{equation}
is invertible. Stacking the equations in~(\ref{eq:exocondmeanequal2}) for $t\in\{t_1,...,t_M\}$ yields two full-rank systems of linear equations $(W\otimes I_{d^2p})(\text{vec}(\tilde{\boldsymbol{A}}_1-\boldsymbol{A}_1),...,\text{vec}(\tilde{\boldsymbol{A}}_M-\boldsymbol{A}_M))=0$ and $(W\otimes I_{d})(\tilde{\phi}_1-\phi_1,...,\tilde{\phi}_M-\phi_M)=0$. Hence, $\tilde{\boldsymbol{A}}_m=\boldsymbol{A}_m$ and $\tilde{\phi}_m=\phi_m$ for all $m=1,...,M$, showing that the AR parameters are uniquely identified and concluding the proof.\qed
%$[\tilde{\boldsymbol{A}}_1-\boldsymbol{A}_1:...:\tilde{\boldsymbol{A}}_M-\boldsymbol{A}_M]$ $(d \times Mdp)$
%$[\tilde{\phi}_1-\phi_1:...:\tilde{\phi}_M-\phi_M]$ $(d \times M)$

%%%%%%%%%%%%%%%%%%%%%%%%%%%%%%%%%%%%%%%%%%%%%%%%%%%%%%%%%%%%%%%%%%%%%%%%%%%%%%%%%%%%%%%%%%%%%%%%%%%%%%%%%%%%%%%%%%%%%%%%%%%%%%%%%%%%%%%
\subsection{Proof of Proposition~\ref{prop:stvar_ident}}\label{sec:proofstvarident}

By Lemma~\ref{lemma:Bt_at_each_t}, the transition weights $\alpha_{1,t},...,\alpha_{M,t}$ are uniquely identified regardless of whether $B_1,...,B_M$ are identified. Hence, the transition weights can be treated as uniquely identified in this proof, which consists of two parts. \textbf{Part~1}. We show that if the ordering and signs of the columns of $B_1$ are fixed, then changing the ordering or signs of the columns of any of $B_2,...,B_M$ leads to an impact matrix $B_{y,t}$ that, at some $t$, cannot be obtained by reordering or changing the signs of the columns of the impact matrix associated with the original (i.e., unmodified) matrices $B_1,...,B_M$.
%then at some $t$, the impact matrices $B_{y,t}$ before and after changing the ordering or signs of the columns of $B_2,...,B_M$ cannot be obtained by reordering or changing the signs 
%That is, we show that fixing the ordering and signs of the columns of $B_1$ effectively fixes the ordering and signs of the columns of $B_2,...,B_M$ and thereby of $B_{y,t}$. 
\textbf{Part~2}. We combine this result with Lemma~\ref{lemma:Bt_at_each_t} %(that $B_{y,t}$ is identified up ordering and signs of its columns at each $t$) 
to show that the matrices $B_1,...,B_M$ are uniquely identified. %For models with logistic transition weights~(\ref{eq:alpha_mt_logistic}), the result holds almost everywhere in $[B_1:\cdots :B_M]\in\mathbb{R}^{d\times dM}$, $(y_{-p+1},...,y_0,y_{1},...,y_{T}) \in \mathbb{R}^{d(T+p)}$ and, if the switching variable $z_t$ is exogenous, $(z_1,...,z_T)\in \mathbb{R}^{T}$. For models with nonrandom exogenous transition weights, the result holds almost everywhere in $[B_1:\cdots :B_M]\in\mathbb{R}^{d\times dM}$.

\textbf{Part 1}. Denote by $\boldsymbol{\tau}_m=(\tau_{m1},...,\tau_{md})$ the permutation $\tau_{m1},...,\tau_{md}$ and $\boldsymbol{\iota}_m=(\iota_{m1},...,\iota_{md})$ the sign changes $\iota_{m1},...,\iota_{md}$ of the columns of $B_m$, respectively. Moreover, denote by $B_m(\boldsymbol{\tau}_m,\boldsymbol{\iota}_m)$ the invertible impact matrix $B_m$ corresponding to the permutation $\boldsymbol{\tau}_m$ and sign changes $\boldsymbol{\iota}_m$ of its columns. Then, consider the permutations $\boldsymbol{\tau}_1,\boldsymbol{\tau}_2,...,\boldsymbol{\tau}_M$ and $\boldsymbol{\tau}_1,\tilde{\boldsymbol{\tau}}_2,...,\tilde{\boldsymbol{\tau}}_M$ as well as the sign changes $\boldsymbol{\iota}_1,\boldsymbol{\iota}_2,...,\boldsymbol{\iota}_M$ and $\boldsymbol{\iota}_1,\tilde{\boldsymbol{\iota}}_2,...,\tilde{\boldsymbol{\iota}}_M$. We start by showing that 
$B_{y,t}(\boldsymbol{\tau},\boldsymbol{\iota})\equiv \alpha_{1,t}B_1(\boldsymbol{\tau}_1,\boldsymbol{\iota}_1) + \sum_{m=2}^M\alpha_{m,t}B_m(\boldsymbol{\tau}_m,\boldsymbol{\iota}_m)$ cannot be obtained by reordering or changing the signs of the columns of $B_{y,t}(\tilde{\boldsymbol{\tau}},\tilde{\boldsymbol{\iota}})\equiv\alpha_{1,t}B_1(\boldsymbol{\tau}_1,\boldsymbol{\iota}_1) + \sum_{m=2}^M\alpha_{m,t}B_m(\tilde{\boldsymbol{\tau}}_m,\tilde{\boldsymbol{\iota}}_m)$ almost everywhere in $[B_1,...,B_M]\in\mathbb{R}^{d\times dM}$, whenever $\boldsymbol{\tau}_m\neq \tilde{\boldsymbol{\tau}}_m$ or $\boldsymbol{\iota}_m\neq \tilde{\boldsymbol{\iota}}_m$ for any $m=2,...,M$.

Suppose $\boldsymbol{\tau}_m\neq \tilde{\boldsymbol{\tau}}_m$ or $\boldsymbol{\iota}_m\neq \tilde{\boldsymbol{\iota}}_m$ for any $m=2,...,M$. Since by invertibility none of the columns of $B_m$ can be equal or negatives of each other, this implies that $B_m(\boldsymbol{\tau}_m,\boldsymbol{\iota}_m) \neq B_m(\tilde{\boldsymbol{\tau}}_m,\tilde{\boldsymbol{\iota}}_m)$ for some $m=2,...,M$. At each time period $t$, the impact matrix $B_{y,t}(\tilde{\boldsymbol{\tau}},\tilde{\boldsymbol{\iota}})$ is obtained by reordering or changing the signs of the columns of $B_{y,t}(\boldsymbol{\tau},\boldsymbol{\iota})$ if and only if 
\begin{equation}
\alpha_{1,t}B_1(\boldsymbol{\tau}_1,\boldsymbol{\iota}_1) + \sum_{m=2}^M\alpha_{m,t}B_m(\boldsymbol{\tau}_m,\boldsymbol{\iota}_m) = (\alpha_{1,t}B_1(\boldsymbol{\tau}_1,\boldsymbol{\iota}_1) + \sum_{m=2}^M\alpha_{m,t}B_m(\tilde{\boldsymbol{\tau}}_m,\tilde{\boldsymbol{\iota}}_m))D_tP_t
\end{equation}
for some $(d\times d)$ diagonal matrix $D_t$ with $\pm 1$ diagonal elements and some $(d\times d)$ permutation matrix $P_t$ (that may depend on $t$). Rearranging the above equation gives
\begin{equation}\label{eq:stvar_indeq1}
\sum_{m=2}^M\alpha_{m,t}(B_m(\boldsymbol{\tau}_m,\boldsymbol{\iota}_m) - B_m(\tilde{\boldsymbol{\tau}}_m,\tilde{\boldsymbol{\iota}}_m)D_tP_t) = -\alpha_{1,t}B_1(\boldsymbol{\tau}_1,\boldsymbol{\iota}_1)(I_d - D_tP_t).
\end{equation}

In the logistic case with $\gamma\in (0,\infty)$, $\alpha_{2,t}=[1+\exp\{-\gamma(z_t-c)\}]^{-1}\in(0,1)$ for all $t$, and hence also $\alpha_{1,t}=1-\alpha_{2,t}\in(0,1)$ for all $t$. From this, and in the case of exogenous weights from Condition~\ref{cond:stvar_alpha} of Proposition~\ref{prop:stvar_ident}, it follows that the weight vector $(\alpha_{1,t},...,\alpha_{M,t})$ takes a value for some $t$, say, $t^*$, such that none of its entries is zero. If $D_{t^*}P_{t^*}=I_d$, Equation~(\ref{eq:stvar_indeq1}) implies that
\begin{equation}\label{eq:stvar_indeq2}
\sum_{m=2}^M\alpha_{m,t^*}(B_m(\boldsymbol{\tau}_m,\boldsymbol{\iota}_m) - B_m(\tilde{\boldsymbol{\tau}}_m,\tilde{\boldsymbol{\iota}}_m)) = 0.
\end{equation}
%When $M=2$, it directly follows that $B_2(\boldsymbol{\tau}_2,\boldsymbol{\iota}_2) = B_2(\tilde{\boldsymbol{\tau}}_2,\tilde{\boldsymbol{\iota}}_2)$.
Because $B_m(\boldsymbol{\tau}_m,\boldsymbol{\iota}_m) \neq B_m(\tilde{\boldsymbol{\tau}}_m,\tilde{\boldsymbol{\iota}}_m)$ for some $m=2,...,M$, the set of matrices $B_2,...,B_M$ that satisfy Equation~(\ref{eq:stvar_indeq2}) is the solution space of a nontrivial homogeneous linear system in the entries of $B_2,...,B_M$. This solution space is a proper linear subspace of $\mathbb{R}^{d\times d(M-1)}$ and therefore has Lebesgue measure zero. Consequently, the set of $[B_1:...:B_M]$ satisfying~(\ref{eq:stvar_indeq2}) has Lebesgue measure zero in $\mathbb{R}^{d\times dM}$. % Jos M=2, niin tulee vain suoraan tulos että taut ja iotat on samoja eli ei a.e.:ta muuten kuin kääntyvyyden suhteen.

Suppose then $D_{t^*}P_{t^*}\neq I_d$ and denote the left side of Equation~(\ref{eq:stvar_indeq1}) as $C_{t}\in\mathbb{R}^{d\times d}$. At time period $t^*$, the right side of Equation~(\ref{eq:stvar_indeq1}) is nonzero (as $B_1$ is invertible), thus, implying that also $C_{t^*}$ is not a zero matrix. Equation~(\ref{eq:stvar_indeq1}) is thereby nontrivial and can be expressed as 
\begin{equation}\label{eq:stvar_indeq3}
B_1(\boldsymbol{\tau}_1,\boldsymbol{\iota}_1)(I_d - D_{t^*}P_{t^*}) = \alpha_{1,t^*}^{-1}C_{t^*}
\end{equation}
Since $\text{rank}(I_d - D_{t^*}P_{t^*})\geq 1$, the set of matrices $B_1$ that satisfy Equation~(\ref{eq:stvar_indeq3}) defines either an empty set or an affine subspace in $\mathbb{R}^{d\times d}$ with the dimension of this subspace being less than $d^2$. It follows that the set of matrices $B_1$ that satisfy Equation~(\ref{eq:stvar_indeq3}) has Lebesgue measure zero in $\mathbb{R}^{d\times d}$, and therefore, the set of matrices $B_1,...,B_M$ that satisfy Equation~(\ref{eq:stvar_indeq3}) has Lebesgue measure zero in $[B_1:...:B_M]\in\mathbb{R}^{d\times dM}$. 

%By the above discussion, for any single transformation $D_tP_t$ of $B_{y,t}$, the set of matrices $B_1,...,B_M$ that satisfy Equation~(\ref{eq:stvar_indeq1}) for any time period when all the entries of $(\alpha_{1,t},...,\alpha_{M,t})$ are nonzero has Lebesgue measure zero in $[B_1:\cdots : B_M]\in\mathbb{R}^{d\times dM}$.
By the above discussion, for any single transformation $D_{t^*}P_{t^*}$ of $B_{y,t^*}$ that satisfy Equation~(\ref{eq:stvar_indeq1}) at the time period $t^*$ has Lebesgue measure zero in $[B_1:\cdots : B_M]\in\mathbb{R}^{d\times dM}$. Since there exists a finite number of such transformations and a finite union of measure zero sets has measure zero, it follows that the set of matrices $B_1,...,B_M$ that satisfy Equation~(\ref{eq:stvar_indeq1}) for $t^*$, and thereby for all $t$, has Lebesgue measure zero in $[B_1:\cdots :B_M]\in\mathbb{R}^{d\times dM}$. This concludes that, almost everywhere in $[B_1,...,B_M]\in\mathbb{R}^{d\times dM}$, the matrix $B_{y,t}(\boldsymbol{\tau},\boldsymbol{\iota})$ cannot be obtained at all $t$ by reordering or changing the signs of the columns of $B_{y,t}(\tilde{\boldsymbol{\tau}},\tilde{\boldsymbol{\iota}})$ whenever $\boldsymbol{\tau}_m\neq \tilde{\boldsymbol{\tau}}_m$ or $\boldsymbol{\iota}_m\neq \tilde{\boldsymbol{\iota}}_m$ for any $m=2,...,M$. %On the other hand, if $\boldsymbol{\tau}_m= \tilde{\boldsymbol{\tau}}_m$ or $\boldsymbol{\iota}_m = \tilde{\boldsymbol{\iota}}_m$ all $m=2,...,M$, we have $B_{y,t}(\boldsymbol{\tau},\boldsymbol{\iota})=B_{y,t}(\tilde{\boldsymbol{\tau}},\tilde{\boldsymbol{\iota}})$ for all $t$.

\textbf{Part 2}. Next, we make use of the above result together with Lemma~\ref{lemma:Bt_at_each_t} to show that $B_1,...,B_M$ are uniquely identified almost everywhere in $[B_1:\cdots :B_M]\in\mathbb{R}^{d\times dM}$. Consider the impact matrix $B_{y,t}=\sum_{m=1}^M \alpha_{m,t}B_m$ with some invertible $(d\times d)$ matrices $B_1,...,B_M$. By Lemma~\ref{lemma:Bt_at_each_t}, $B_{y,t}$ is uniquely identified at each $t$ up to ordering and signs of its columns (conditionally on $\mathcal{F}_{t-1}$). Since the ordering and signs of the columns of $B_1$ are fixed (Condition~\ref{cond:stvar_B_1} of Proposition~\ref{prop:stvar_ident}), we then obtain from the above results that almost everywhere in $[B_1:...:B_M] \in \mathbb{R}^{d\times dM}$, reordering or changing the signs of the columns of $B_2,...,B_M$ would, at some $t$, lead to an impact matrix that cannot be obtained by reordering or changing the signs of the columns of $B_{y,t}$. Hence, models before and after reordering or changing the signs of the columns of any of $B_2,...,B_M$ are not observationally equivalent. The ordering and signs of the columns of $B_1,...,B_M$ are thereby effectively fixed under the conditions of Proposition~\ref{prop:stvar_ident}, implying that the ordering and signs of the columns of $B_{y,t}$ are also fixed for all $t$. Thus, $B_{y,t}$ is uniquely identified at all $t$ almost everywhere in $[B_1:...:B_M] \in \mathbb{R}^{d\times dM}$. 

%Since $B_{y,t}$ is uniquely identified at all $t$, %and two models are observationally equivalent if and only if they are observationally equivalent at all $t$, 
%a unique identification of the matrices $B_1,...,B_M$ is achieved if an impact matrix that is identical to $B_{y,t}$ for all $t$ cannot be obtained with any other invertible $(d\times d)$ matrices $\tilde{B}_1,...,\tilde{B}_M$. Suppose an impact matrix identical to $B_{y,t}$ for all $t$ is obtained with some invertible $(d\times d) $ matrices $\tilde{B}_1,...,\tilde{B}_M$.
Consider the impact matrix $B_{y,t}$ based on invertible $(d\times d)$  matrices $B_1,...,B_M$, and suppose an impact matrix identical to $B_{y,t}$ for all $t$ is obtained with some other invertible $(d\times d)$ matrices $\tilde{B}_1,...,\tilde{B}_M$. Then,
\begin{equation}\label{eq:stvar_indeq4}
\sum_{m=1}^M \alpha_{m,t}\Delta B_m = 0,
\end{equation}
for all $t$, where $\Delta B_m \equiv B_m - \tilde{B}_m$.

In the case of logistic weights, $M=2$ so $\alpha_{1,t} = 1 - \alpha_{2,t}$ and Equation~(\ref{eq:stvar_indeq4}) reads
\begin{equation}\label{eq:stvar_indeq5}
(1 - \alpha_{2,t})\Delta B_1 + \alpha_{2,t}\Delta B_2 = 0.
\end{equation}
Because $0<\gamma<\infty$ and (endogenous) $z_t$ is real-valued and non-constant, the logistic map $\alpha_{2,t}$ attains at least two distinct values in $(0,1)$. Evaluating~(\ref{eq:stvar_indeq5}) at two such values, $\alpha^a\neq\alpha^b$, and subtracting gives $(\alpha^{a}-\alpha^{b})(\Delta B_{2}-\Delta B_{1})=0$, implying that $\Delta B_{2}=\Delta B_{1}$. Substituting back to~(\ref{eq:stvar_indeq5}) then yields $\Delta B_{1}=\Delta B_{2}=0$, i.e., $\tilde{B}_1=B_1$ and $\tilde{B}_2=B_2$. % Yhtälö (\ref{eq:stvar_indeq4}) pätee kaikilla weight function arvoilla under conditional law of y_t, eli kahdella erillisellä alphalla.

In the exogenous case (with any $M\geq 2$), by Condition~\ref{cond:linind} of Lemma~\ref{lemma:Bt_at_each_t} there exist $M$ indices $t_1,...,t_M$ such that the $(M\times M)$ weight matrix $W$
\begin{equation}
W\equiv \begin{bmatrix}
\alpha_{1,t_1} & \cdots & \alpha_{M,t_1} \\
\vdots         & \ddots & \vdots \\
\alpha_{1,t_M} & \cdots & \alpha_{M,t_M}
\end{bmatrix}
\end{equation}
is invertible. Writing Equation~(\ref{eq:stvar_indeq4}) at these $M$ indices gives the linear system 
\begin{equation}\label{eq:stvar_indeq6}
\sum_{m=1}^M\alpha_{m,t_i}\Delta B_m = 0, \quad i=1,...,M,
\end{equation}
which can be expressed as $(W\otimes I_{d^2})(\text{vec}(\Delta B_1),...,\text{vec}(\Delta B_M))=0$. Since $W$ is invertible, $W\otimes I_{d^2}$ has full rank, so $\Delta B_m = 0$, i.e., $\tilde{B}_m=B_m$, for all $m=1,...,M$. This concludes that the matrices $B_1,...,B_M$ are uniquely identified almost everywhere in $[B_1:...:B_M]\in\mathbb{R}^{d\times dM}$.\qed

%%%%%%%%%%%%%%%%%%%%%%%%%%%%%%%%%%%%%%%%%%%%%%%%%%%%%%%%%%%%%%%%%%%%%%%%%%%%%%%%%%%%%%%%%%%%%%%%%%%%%%%%%%%%5

\subsection{Proof of Proposition~\ref{prop:tvar_ident}}\label{sec:proofpropotvar}

We start by showing that under the conditions of Proposition~\ref{prop:tvar_ident}, the AR and threshold parameters are identified. Then, we combine this result with Lemma~\ref{lemma:Bt_at_each_t} to show that $B_1,...,B_M$ are identified up to ordering and signs of their columns. 

In the case of exogenous binary weights, unique identification of $\phi_{m},A_{m,1},...,A_{m,p}$, $m=1,...,M$, is given by Lemma~\ref{lemma:Bt_at_each_t} (whose conditions for exogenous weights are satisfied under conditions of Proposition~\ref{prop:tvar_ident}). Consider the threshold transition weights defined in~(\ref{eq:alpha_mt_threshold}) with the switching variable $z_t=a'\boldsymbol{y}_{t-1}$, where $\boldsymbol{y}_{t-1}=(y_{t-1},...,y_{t-p})$ and $a$ is a $(dp\times 1)$ vector with exactly one of its elements equal to one and others equal to zero. At each $t$, the model reduces to a linear VAR $y_t = \phi_{m} + \sum_{i=1}^pA_{m,i}y_{t-1} + B_me_t$ corresponding to the active regime $m$ for which $r_{m-1}<z_t\leq r_m$. Consider Equation~(\ref{eq:tildeproof2}) with two sets of parameters $\phi_{m},A_{m,1},...,A_{m,p}$, $m=1,...,M$, $r_1,...,r_{M-1}$ and $\tilde{\phi}_{m},\tilde{A}_{m,1},...,\tilde{A}_{m,p}$, $m=1,...,M$, $\tilde{r}_1,...,\tilde{r}_{M-1}$. Note that $r_0=\tilde{r}_0=-\infty, r_M=\tilde{r}_M=\infty$ and $-\infty<r_1< ... <r_{M-1}<\infty$ by assumption.

Denote $J_m\equiv\{z_t\in\mathbb{R}: r_{m-1}<z_t\leq r_m\}, \tilde{J}_n\equiv\{z_t\in\mathbb{R}:\tilde{r}_{n-1}<z_t\leq \tilde{r}_n\}$, and $\dot{J}_{m,n}\equiv J_m\cap \tilde{J}_n$ for $m,n=1,...,M$. Because $z_t$ has strictly positive density almost everywhere (by Assumption~\ref{as:shocks} and invertibility of the impact matrix in the active regime), each $J_m$ has positive Lebesgue measure. Consequently, since $\{\tilde{J}_n \}_{n=1}^M$ partition $\mathbb{R}$, for every $m=1,...,M$, there exists at least one $n$ where $\dot{J}_{m,n}$ has positive Lebesgue measure. In this overlap event $\alpha_{m,t}=\tilde{\alpha}_{n,t}=1$, so Equation~(\ref{eq:tildeproof2}) gives
\begin{equation}
(\tilde{\phi}_{n} - \phi_{m}) + (\tilde{\boldsymbol{A}}_n - \boldsymbol{A}_m)\boldsymbol{y}_{t-1}=0 \ \ \text{a.s. in} \ \dot{J}_{m,n},
\end{equation}
where $\boldsymbol{A}_m = [A_{m,1}:...:A_{m,p}]$ $(d\times dp)$ and $\tilde{\boldsymbol{A}}_n = [\tilde{A}_{n,1}:...:\tilde{A}_{n,p}]$ $(d\times dp)$. 

By Assumption~\ref{as:shocks} and invertibility of the impact matrix in the active regime, the lag vector $\boldsymbol{y}_{t-1}$ has strictly positive density almost everywhere in $\mathbb{R}^{dp}$, implying that it has strictly positive density also a.e. in $\{ \boldsymbol{y}_{t-1}\in \mathbb{R}^{dp}: z_t\in \dot{J}_{m,n} \}$. Since an affine function that equals zero a.e. on a set with positive Lebesgue measure must have all coefficients zero, we have $\tilde{\phi}_{n} = \phi_{m}$ and $\tilde{\boldsymbol{A}}_n = \boldsymbol{A}_m$ for $\dot{J}_{m,n}$ with positive measure. 

Now, if there are two positive measure sets $\dot{J}_{m,n_1}$ and $\dot{J}_{m,n_2}$ with $n_1\neq n_2$, it follows that $(\tilde{\phi}_{n_1},\tilde{\boldsymbol{A}}_{n_1})=(\tilde{\phi}_{n_2},\tilde{\boldsymbol{A}}_{n_2})$ violating Condition~\ref{cond:tvar_different_arpars} of Proposition~\ref{prop:tvar_ident}. Hence, for each $m$ there exists a unique $n=n(m)$ with $\dot{J}_{m,n(m)}$ that has positive measure. % Ei voida päätellä että m=n(m) suoraan siitä että muutoin cond:different_arpars ei toteudu, koska voihan m=1 regiimillä yhdessä mallissa olla samat parametrit kuin m=2 regiimillä toisessa mallissa (ar-parametreja ei vielä identifioitu).
Since $\bigcup_{n=1}^M\dot{J}_{m,n}=J_m$ and only single $\dot{J}_{m,n(m)}$ can have positive Lebesgue measure, it follows that $J_m\subseteq \tilde{J}_{n(m)}$ almost everywhere. By symmetric arguments, we get that also $\tilde{J}_{n(m)} \subseteq J_m$ almost everywhere, so $J_m = \tilde{J}_{n(m)}$ almost everywhere. Because the symmetric difference between the half-open intervals $J_m$ and $\tilde{J}_{n(m)}$ %(r_{m-1},r_m]\cap (\tilde{r}_{n(m)-1},\tilde{r}_{n(m)}]$
has thereby Lebesgue measure zero, the end points must coincide, as otherwise the symmetric difference would contain a non-degenerate interval. 
Thus, $r_{m-1}=\tilde{r}_{n(m)-1}$ and $r_{m}=\tilde{r}_{n(m)}$. Since the thresholds are ordered by assumption so that regimes cannot be 'relabelled', it follows that $n(m)=m$. Hence, $(\tilde{\phi}_{m},\tilde{A}_{1,m},...,\tilde{A}_{p,m}) = (\phi_{m},A_{1,m},...,A_{p,m})$ for all $m=1,...,M$ and $\tilde{r}_{m}=r_{m}$ for all $m=1,...,M-1$, concluding that the AR and threshold parameters are uniquely identified. 

%To show identification of $B_1,...,B_M$ up to column ordering and signs of their columns, observe that b
By Lemma~\ref{lemma:Bt_at_each_t}, $B_{y,t}$ is identified up to ordering and signs of its columns at each $t$, given that it is invertible. Since the transition weights are binary, $B_{y,t}=B_m$ for some $m$ for all $t$. Since each $B_m$ is assumed invertible, it follows that $B_{y,t}$ is invertible for all $t$. Hence, for each regime $m$, the associated impact matrix $B_m$ is identified up to ordering and signs of its columns, concluding the proof.\qed

\subsection{Proof of Theorem~\ref{thm:stat}}\label{sec:proofthmstat}
%\cite{Saikkonen:2008} establishes stationarity, ergodicity, and mixing properties of a STVEC model by making use of the theory of Markov chains. \cite{Kheifets+Saikkonen:2020}, in turn, show that the results of  \cite{Saikkonen:2008} are applicable to their STVAR model, as it can be obtained as a special case of the STVEC model of \cite{Saikkonen:2008}. However, the results of \cite{Saikkonen:2008} and \cite{Kheifets+Saikkonen:2020} are not directly applicable to our structural STVAR model, because our model defines the conditional covariance matrix differently to the STVEC model of \cite{Saikkonen:2008} and STVAR model of \cite{Kheifets+Saikkonen:2020}. Hence, certain steps need to be taken show the applicability of the results of \cite{Saikkonen:2008} to our structural STVAR model. 

As explained in Appendix~\ref{sec:stat}, our STVAR model is not nested to the STVEC model of \cite{Saikkonen:2008}, and hence, additional steps are needed to show the applicability of the results of \cite{Saikkonen:2008} to our structural STVAR model. Note that by Assumption~\ref{as:stat}\ref{cond:statshock}, Assumption~1 in \cite{Saikkonen:2008} is satisfied. Moreover, by the constraint $\sum_{m=1}^M\alpha_{m,t}=1$, Assumption~2(a) in \cite{Saikkonen:2008} satisfied, and as we assume that the set $\text{l}_2$ in \cite{Saikkonen:2008} is an empty set, also Assumption~2(b) is satisfied. Consequently, Condition~(10) in \cite{Saikkonen:2008} is satisfied and by our Assumption~\ref{as:stat}\ref{cond:jsr} also Condition~(19). The difference of our setup to the conditions of Theorem~1 of \cite{Saikkonen:2008} is that our STVAR model specifies the conditional covariance matrix differently to the STVEC model defined in Equation~(8) of \cite{Saikkonen:2008}. 

Besides the conditional covariance matrix, our structural STVAR model defined in Equations~(\ref{eq:stvarstruct}) and (\ref{eq:Bt}) is obtained from the stochastic process $z_t$ defined in Equation~(17) (or equally (15)) of \cite{Saikkonen:2008} as follows ($z_t$ of \cite{Saikkonen:2008} should not be confused with the switching variable of this paper). First, we assume that the number of unit roots in the model is zero, so $n=r$ in \cite{Saikkonen:2008}. Consequently, $z_t=y_t$ and, in the definition of the matrix $J$ defined above Equation~(17) of \cite{Saikkonen:2008}, $\beta=c=I_d$ \citep[see][Note~2]{Saikkonen:2008}. The matrix $J$ is thereby nonsingular, and following \cite{Kheifets+Saikkonen:2020}, we can choose the matrix $S$ in Equation~(17) to be the inverse of $J'$. Second, we assume the transition functions $h_s(S'JZ_{t-1},\eta_t)$ of \cite{Saikkonen:2008} are independent of the random variable $\eta_t$. Hence, the transition functions are functions of $Z_{t-1}=(y_{t-1},...,y_{t-p})$, and we can write $h_s(S'JZ_{t-1},\eta_t)=\alpha_{m,t}$. Using the notation of this paper, the first term on the right side of Equation~(17) in \cite{Saikkonen:2008} then reduces to $\sum_{m=1}^M\alpha_{m,t}\sum_{i=1}^pA_{m,i}y_{t-i}$. Third, assume the set $\text{l}_2$ is empty, which implies that the second term on the right side of Equation~(17), defined in Equation~(11) of \cite{Saikkonen:2008}, reduces to $\sum_{m=1}^M\alpha_{m,t}\phi_{m}$ (with the notation of this paper).

That is, the first two terms on the right side of Equation~(17) in \cite{Saikkonen:2008}, simplify to the first term on the right hand side of Equation~(\ref{eq:stvarstruct}) defining our structural STVAR model. The third term on the right side of Equation~(17) in \cite{Saikkonen:2008}, however, %, also expressed as $H(S'JZ_{t-1},\eta_t)^{1/2}$ in \cite{Saikkonen:2008},
simplifies to $(\sum_{m=1}^M\alpha_{m,t}\Omega_m)^{1/2}e_t$, where $\Omega_1,...,\Omega_M$ are positive definite $(d\times d)$ covariance matrices of the regimes. This is different to the second term of Equation~(\ref{eq:stvarstruct}), which takes the form $\sum_{m=1}^M\alpha_{m,t}B_me_t$ when the impact matrix~(\ref{eq:Bt}) is substituted to it. Nonetheless, it turns out that with only slight modifications, the proof of Theorem~1 in \cite{Saikkonen:2008} applies to our model as well. 

The conditional covariance matrix is relevant in only a few places in the proof of Theorem~1 in \cite{Saikkonen:2008}. Firstly, on Page~312, \cite{Saikkonen:2008} makes use of the positive definiteness of the conditional covariance matrix $H(S'JZ_{t-1})$ (or $\sum_{m=1}^M\alpha_{m,t}\Omega_m$ in our notation) to conclude that the smallest eigenvalue of $H(S'Jx)$, $x\in\mathbb{R}^{dp}$, is bounded away from zero in compact subsets of $\mathbb{R}^{dp}$. \cite{Saikkonen:2008} argues that thereby by Theorem~2.2(ii) of \cite{Cline+Phu:1998}, the Markov chain $Z_t$ is an irreducible and aperiodic T-chain on $\mathbb{R}^{dp}$, which is then used to conclude further results. 

Concerning our model, note that by Assumption~\ref{as:stat}\ref{cond:statweights}, the regime weights are either of the threshold or logistic form. In the threshold case, denoting a generic lag vector as $\boldsymbol{y}\in\mathbb{R}^{dp}$, the impact matrix $B_{y,t}(\boldsymbol{y})=\sum_{m=1}^M\alpha_{m,t}(\boldsymbol{y})B_m$ is exactly $B_m$ within each regime, and since each $B_m$ is assumed invertible, the conditional covariance matrix $B_{y,t}(\boldsymbol{y})B_{y,t}(\boldsymbol{y})’$ is positive definite. In the logistic two-regime case, Condition~\ref{cond:logB1B2} rules out singular convex combinations of $B_1$ and $B_2$, which likewise guarantees that $B_{y,t}(\boldsymbol{y})$ is invertible for all $\boldsymbol{y}$.\footnote{Indeed, %With invertible $B_1,B_2\in\mathbb{R}^{d\times d}$, 
 denoting $C\equiv B_1^{-1}B_2$ and $B(\alpha)=(1-\alpha)B_1+\alpha B_2 = B_1((1-\alpha)I+\alpha C)$, the following conditions are equivalent: (1) $B(\alpha)$ is singular for some $\alpha^*\in(0,1)$ and (2) $C$ has a negative real eigenvalue. To prove this, observe first that $\det B(\alpha)=\det(B_1)\det((1-\alpha)I+\alpha C)$. As $B_1$ is invertible, it follows that $B(\alpha)$ is singular iff $(1-\alpha)I+\alpha C$ is singular, i.e., iff there exists $v\neq 0$ with $((1-\alpha)I+\alpha C)v=0$, which is equivalent to $Cv = -\frac{1-\alpha}{\alpha}v$. For $\alpha\in(0,1)$, the scalar $-\frac{1-\alpha}{\alpha}$ is real and negative. Hence, if $B(\alpha^*)$ is singular for some $\alpha^*\in(0,1)$, then $C$ has a negative real eigenvalue $-\tfrac{1-\alpha^*}{\alpha^*}$. Conversely, if $Cv=\lambda v$ with $\lambda<0$, choose $\alpha^*=\frac{1}{1-\lambda}\in(0,1)$. Then, $(1-\alpha^*)+\alpha^*\lambda=0$, so $(1-\alpha^*)I+\alpha^* C$ (and hence $B(\alpha^*))$ is singular, establishing the equivalence.}
Therefore, and because in the logistic case $B_{y,t}(\boldsymbol{y})$ is continuous in $\boldsymbol{y}$, the smallest eigenvalue of $B_{y,t}(\boldsymbol{y})B_{y,t}(\boldsymbol{y})’$ attains a strictly positive minimum on every compact subset of $\mathbb{R}^{dp}$. Thus, the conditional covariance matrix is uniformly bounded away from singularity on compact subsets as required. It follows from this, Equation~(\ref{eq:stvarstruct}), and the invertibility of $B_{y,t}$ that we can apply Theorem~2.2(ii) of \cite{Cline+Phu:1998} to conclude that the Markov chain $\boldsymbol{y}_{t}$ is an irreducible and aperiodic T-chain on $\mathbb{R}^{dp}$, similarly to the process $Z_{t-1}$ in \cite{Saikkonen:2008}. %, thus, making this part of the proof of \cite{Saikkonen:2008} applicable for our model as well. 

Secondly, the conditional covariance matrix appears in the proof of \cite{Saikkonen:2008} on Page~313, %(Equations~(A.4), (A.5), and in an unnumbered equation at the bottom of page 313),
where \cite{Saikkonen:2008} shows that an appropriate version of Condition (15.3) of \citet[][p. 358]{Meyn+Tweedie:1993} is satisfied. To show that this condition holds for our model, we need to show that an appropriate version of Equation~(A.5) in \cite{Saikkonen:2008} holds for our model, i.e., that the term $E[\sup_{A\in \mathcal{A}^j}||AB_{y,t}(\boldsymbol{y})e_t||^2]$ is bounded from above by a finite real number, where $j$ is a positive integer, $\mathcal{A}=\lbrace \boldsymbol{A}_1,...,\boldsymbol{A}_M\rbrace$, and $\boldsymbol{A}_1,...,\boldsymbol{A}_M$ as well as the set $\mathcal{A}^j$ are defined in Appendix~\ref{sec:stat}. 
% Matcal Y valitsee ekat n alkiota siitä koko höskästä; A on A_i^j:n alkio? Mathcal A_1 on I_1 joukkoon kuuluvien AR-matriisien joukko; ja Mathcal A_1^j on niiden j:n kappaleen tulojen joukko.
% % Saikkosessa \mathcal{A_1} on niiden regiimien companion form AR-matriisien joukko, jossa regiimi kuuluu joukkoon indeksijoukkoon I_1. Eli kaikkien companion form AR-matriisien joukkoa. 

Similarly to \citet[][p. 313]{Saikkonen:2008}, clearly,
\begin{equation}\label{eq:supbound}
\sup\lbrace ||A||:A\in \cup_{j=1}^{N}\mathcal{A}^j  \rbrace < c,
\end{equation}
where $||\cdot||$ is Euclidean norm, $N$ is a positive integer, and $c$ a finite real number. By Jensen's inequality and submultiplicativity of norms, we have
\begin{equation}
E\left[\sup_{A\in \mathcal{A}^j}||AB_{y,t}(\boldsymbol{y})e_t||^2\middle| \boldsymbol{y}=x \right] \leq \sup_{A\in \mathcal{A}^j}||A||^2||B_{y,t}(x)||^2E||e_t||^2, 
%\sup_{A\in \mathcal{A}^j}||A||^2E[||B_{y,t}(\boldsymbol{y})e_t||^2],
\end{equation} % Supremum otetaan ei-randomin indeksijoukon yli ja sitä boundaa tuo ei-random boundi c, joten supremumin voi ottaa ulos odostusarvosta.
%and by Jensen's inequality and submultiplicativity of norms
%\begin{align}
%\sup_{A\in \mathcal{A}^j}||A||^2E[||B_{y,t}(x)e_t||^2] &\leq \sup_{A\in \mathcal{A}^j}||A||^2(E[||B_{y,t}%(x)e_t||])^2\\
%& \leq \sup_{A\in \mathcal{A}^j}||A||^2(E[||B_{y,t}(x)|| ||e_t||])^2 \\ 
%& = \sup_{A\in \mathcal{A}^j}||A||^2||B_{y,t}(x)||^2E[||e_t||]^2,
%\end{align}
where the right side is bounded from above by some finite constant (by~(\ref{eq:supbound}) and the finite second moment of $e_t$ given by Assumption~\ref{as:shocks}). Thus, an appropriate version of Equation~(A.5) in \cite{Saikkonen:2008} holds for our model. Since the rest of the proof of Theorem~1 in \cite{Saikkonen:2008} thereby applies to our model as well, this concludes the proof.\qed

\section{Penalized nonlinear least squares estimation of STVAR models}\label{sec:nls}
This appendix describes how the parameters of the structural STVAR model discussed in Section~\ref{sec:structstvar} can be estimated by the method of nonlinear least squares (NLS) as explained in \citet[][Section~4.2]{Hubrich+Terasvirta:2013} for logistic STVAR models. After describing the NLS estimation, penalized nonlinear least squares (PNLS) estimation is considered by introducing a penalization term to the sum of squares of residuals function. 

\subsection{Nonlinear least squares estimation}\label{sec:subnls}
Consider the following sum of squares of residuals function
\begin{equation}\label{eq:nlm_Q}
Q(\alpha) = \sum_{t=1}^T u_t(\alpha)'u_t(\alpha),
\end{equation}
where $u_t(\alpha) = y_t - \sum_{m=1}^M\alpha_{m,t}\mu_{m,t}$ and the parameter $\alpha$ contains the weight function parameters.\footnote{With threshold weights~(\ref{eq:alpha_mt_threshold}), $\alpha=(r_1,...,r_{M-1})$, and with logistic weights~(\ref{eq:alpha_mt_logistic}) weights, $\alpha=(c,\gamma)$. If the transition weights are exogenous, we simply treat them as known constants.} The NLS estimates of the parameters $\phi_{m},A_{m,1},...,A_{m,p}$, $m=1,...,M$, corresponding to the weight function parameter $\alpha$ are obtained by minimizing $Q(\alpha)$ with respect to these parameters. %, which can be solved analytically. 

If the transition weights $\alpha_{1,t},...,\alpha_{M,t}$ are not exogenous, also the parameter $\alpha$ they depend on needs to be estimated. Since the NLS estimates of the AR parameters are analytically obtained for given $\alpha$, the optimization can be performed with respect to $\alpha$ only. However, this optimization problem is multimodal, and thereby standard gradient based optimization methods that are not able escape from local maxima are insufficient. Since we only need approximate estimates to serve as starting values for Step~\ref{step:GA} and~\ref{step:VA} estimation in the three-step procedure described in Section~\ref{sec:three-step}, the following procedure is adopted. 

First, determine the values of the transition weights $\alpha_{m,t}$, $m=1,...,M$, $t=1,...,T$, that can be deemed reasonable for estimation, i.e., transition weights that allocate a sufficient enough contribution to each regime in the model. As an extreme example, transition weights with $\alpha_{1,t}=0$ for all $t=1,...,T$ are not reasonable, as Regime~1 would not be involved in the model at all. Thus, each regime should contribute to the model to an extend that is reasonable enough for estimation. 

One way to assess the contribution of each regime is by summing over the transition weights, i.e., calculating the sums $\dot{\alpha}_m \equiv \sum_{t=1}^T\alpha_{m,t}$, $m=1,...,M$. In the special case of discrete transition weights such as the threshold weights~(\ref{eq:alpha_mt_threshold}), the sums $\dot{\alpha}_m$ express the numbers of observations from each regime. It can then be determined whether each regime contributes enough to the model by checking, for all $m=1,...,M$, whether the sums $\dot{\alpha}_m$ are large enough compared to the number of parameters in that regime. Denoting the large enough sum as $\tilde{T}$, transition weights that satisfy $\dot{\alpha}_m\geq \tilde{T}$ for all $m=1,...,M$ are deemed reasonable enough.\footnote{In our empirical application (Section~\ref{sec:empapp}), we assume that there should be at least three times as many observations from all the $d$ variables combined, implying that $\tilde{T}=3kd^{-1}$, where $k$ is the number of parameters in a regime. We acknowledge that this assumption %our assumption of at least three times as many observations from all $d$ variables combined as there are parameters in a regime 
is somewhat arbitrary, but it attempts to deter obvious cases of overfitting, while still allowing flexibility when the model is large compared to the number of observations.}

Second, specify a range of reasonable parameter values for each scalar parameter in the $(s \times 1)$ vector $\alpha$, and specify a grid of parameter values within that range. For instance, for a logistic STVAR model, the location parameter should, at least, lie between the lowest and greatest value in the observed series of the switching variable, whereas the scale parameter can range from a small positive number to some large positive number that implies almost discrete regime-switches. With $n$ grid points for each $s$ scalar parameters in $\alpha$, the number of different parameter vectors $\alpha$ with values in the grid points is $n^s$. Hence, we recommend selecting a sufficiently small number of grid points $n$, so that estimation can be performed in a reasonable computation time. Denote the $i$th vector of $\alpha$ containing parameter values in the grid points as $\alpha^{(i)}$, $i=1,...,n^s$. Then, obtain the subset of the grid points $\alpha^{(i)}$ that satisfy $\dot{\alpha}_m\geq \tilde{T}$ for all $m=1,...,M$. Denote this subset as $\mathcal{W}$. 

Finally, calculate the NLS estimates of the AR parameters for all $\alpha^{(i)}\in\mathcal{W}$, and the estimates yielding the smallest sum of squares of residuals $Q(\alpha)$ are the approximate NLS estimates of the parameters $\phi_{m},A_{m,1},...,A_{m,p}$, $m=1,...,M$, and $\alpha$. Notably, the NLS estimation does not necessarily produce estimates that satisfy the usual stability condition for all or any of the regimes.

\subsection{Penalized nonlinear least squares estimation}

Penalized nonlinear least squares estimation of STVAR models works similarly to the NLS estimation, except that instead of the sum of squares function~(\ref{eq:nlm_Q}), the penalized sum of squares function is minimized. We define the penalized sum of squares function as
\begin{equation}\label{eq:penQ}
PQ(\alpha) = Q(\alpha) - P(\alpha),
\end{equation}
where the penalization term $P(\alpha)$ increases as the parameter values approach the boundary of, or move further into, an uninteresting region of the parameter space. To focus on avoiding estimates that do not satisfy the usual stability condition for all regimes, we define the penalty term as 
\begin{equation}\label{eq:pentermLS}
P(\alpha) = \kappa \widehat{RSS} \sum_{m=1}^M\sum_{i=1}^{dp} \text{max}\lbrace 0, |\rho(\boldsymbol{A}_m(\boldsymbol{\theta}))_i| - (1 - \eta) \rbrace^2,
\end{equation}
where 
\begin{equation}
\widehat{RSS}=\underset{\alpha\in\mathcal{W}}{\min}Q(\alpha)
\end{equation}
is the minimized residual sum of squares, $|\rho(\boldsymbol{A}_m(\boldsymbol{\theta}))_i|$ is the modulus of the $i$th eigenvalue of the companion form AR matrix of regime~$m$, the tuning parameter $\eta\in (0, 1)$ determines how close to the boundary of the stability region the penalization starts, and $\kappa>0$ determines the strength of the penalization. Whenever the companion form AR matrix of a regime has eigenvalues greater than $1 - \eta$ in modulus, the penalization term~(\ref{eq:pentermLS}) is greater than zero, and it increases in the modulus of these eigenvalues. The penalty term~(\ref{eq:pentermLS}) is multiplied by $\widehat{RSS}$ to standardize the strength of the penalization with respect to the fit of the model.  

The penalized NLS estimates of $\phi_{m},A_{m,1},...,A_{m,p}$, $m=1,...,M$, and $\alpha$, are obtained by first minimizing $Q(\alpha)$ for all $\alpha\in\mathcal{W}$ (where $\mathcal{W}$ is the set of grid point defined in Section~\ref{sec:subnls}), then calculating the penalty term $P(\alpha)$ for all $\alpha\in\mathcal{W}$, and finally selecting the estimates that give the smallest minimized the penalized sum of squares~(\ref{eq:penQ}). In our empirical analysis in Section~\ref{sec:empapp}, we use the tuning parameter values $\eta=0.05$ and $\kappa=0.2$, in line with the tuning parameter values used in the PML estimation (see Section~\ref{sec:loglik}).

\section{A Monte Carlo study}\label{sec:montecarlo}
This appendix present a Monte Carlo study that assesses the performance of the proposed penalized maximum likelihood estimator and the three-step estimation procedure discussed in Section~\ref{sec:estimation}. 
%
%To then deal with the issue of weak identification identification of the columns of $B_2,...,B_M$ (here $M=2$), discussed in Section~\ref{sec:identtvar}, \ref{sec:identstvar}, and \ref{sec:labellingshocks}, we utilize the local solution yielding the highest penalized log-likelihood in which the elements of the second column of the estimated $B_2$ are positive and in ascending order, in line with the true parameter values presented in Table~\ref{tab:monteparams}. In context of blended identification discussed in Section~\ref{sec:labellingshocks} \citep[see also][]{Carriero+Marcellino+Tornese:2024}, this effectively combines identification by non-Gaussianity with short-run restrictions on $B_2$ that we know the second shock satisfies in the second regime. If such constraints are not found among the local solutions (due to estimation error), the local solution yielding the greatest penalized log-likelihood (hereafter "first-best solution") is used. In our Monte Carlo experiment, on average, FILL IN percent of the estimates are based on a local-only solution due to the first-best solution not satisfying the above constraints in the second column of $B_2$. In these cases, the difference in the penalized log-likelihoods of the first-best solution and the utilized local-only solution is, on average, FILL IN percentages of the penalized log-likelihood of the first-best solution. In FILL IN percent of the estimates, none of the obtained local solutions satisfied the above constraints in the second column of $B_2$. 
%
We consider two specifications of a simple bivariate (structural) LSTVAR model with $p=1$ and $M=2$, with the first lag of the first variable as the switching variable. To maintain some constancy, we use the same impact matrices across both specifications, $\text{vec}(B_1)=(0.6, -0.3, 0.2, 0.4)$ and $\text{vec}(B_2)=(0.7, 0.1, 0.3, 0.8)$, the same location and scale parameter values $c=0.8$ and $\gamma=5.0$, as well as the same degrees-of-freedom and skewness parameter values, $\nu_1=2.5$, $\nu_2=12.0$, $\lambda_1=-0.5$, and $\lambda_2=0.2$.

We consider two specifications of the AR parameters, presented in Table~\ref{tab:monteparams}. In the first specification (LSTVAR~1), the AR matrices are inside the stability region (see Section~\ref{sec:ml_stat}). In the second specification (LSTVAR~2), the AR matrices are in the region where penalization term is strictly positive, allowing us to assess the performance of our estimation method when the true parameter value lies in the penalization region. To illustrate the closeness to the boundary, the modulus of the roots of the companion form AR matrix are presented in Table~\ref{tab:monteparams}. As in our empirical application, we use the tuning parameter values $\eta = 0.05$ and $\kappa=0.2$ for the penalization term, so the penalization term is strictly positive when at least one of the roots of the companion form AR matrix is larger than $0.95$ in modulus. The intercept parameter values are adjusted so that the unconditional means of the regimes are approximately equal in both of the specifications (see Table~\ref{tab:monteparams}), so that the main difference between the specifications is that the second one includes true parameter values from inside the penalization region. 

\begin{table}[t]
\footnotesize
\centering
\begin{tabular}{c c c@{\hspace{3pt}}c c c c c@{\hspace{3pt}}c c c}
 & \multicolumn{5}{c}{Regime 1} & \multicolumn{5}{c}{Regime 2} \\
 & $\phi_{1}$ & \multicolumn{2}{c}{$A_{1,1}$} & $\mu_1$ & $||\boldsymbol{A}_1||$ & $\phi_{2}$ &  \multicolumn{2}{c}{$A_{2,1}$} & $\mu_2$ & $||\boldsymbol{A}_2||$ \\ 
\hline\\[-1.0ex]
\multirow{2}{*}{LSTVAR~1} & $0.30$ & $0.70$ & $-0.30$           & $0.00$ & $0.58$ & $\phantom{-}1.20$ & $0.50$ & $0.20$ & $\phantom{-}2.00$ & $0.74$ \\
                                    & $0.60$ & $0.20$ & $\phantom{-}0.40$ & $1.00$ & $0.58$ & $-1.10$           & $0.30$ & $0.50$ & $-1.00$ & $0.26$ \\
\hline\\[-1.0ex]
\multirow{2}{*}{LSTVAR~2} & $0.30$ & $1.10$ & $-0.30$           & $0.00$ & $0.97$ & $\phantom{-}0.72$ & $0.74$ & $0.20$ & $\phantom{-}2.00$ & $0.98$ \\
                                    & $0.20$ & $0.20$ & $\phantom{-}0.80$ & $1.00$ & $0.97$ & $-0.87$           & $0.30$ & $0.73$ & $-1.00$ & $0.49$ \\
\end{tabular}
\caption{The parameters values of $\phi_{1},\phi_{2},A_{1,1}$, and $A_{2,1}$ used for the LSTVAR models LSTVAR~1 and LSTVAR~2 in the Monte Carlo study. Also, the corresponding unconditional means of the regimes $\mu_m=(I_d - A_{m,1})^{-1}\phi_{m}$, $m=1,2$, and modulus of the eigenvalues of the companion form AR matrices of the regimes $||\boldsymbol{A}_m||$, $m=1,2$.}
\label{tab:monteparams}
\end{table}

For each of the two model specifications, we generate $500$ samples of length $250,500,1000,2000$, and $10000$. Then, we fit the first-order two-regime LSTVAR model to each of the generated sample using the three-step estimation procedure based on the penalized log-likelihood function, proposed in Section~\ref{sec:estimation}. 
%With skewed $t$ shocks, models corresponding to different ordering and signs of all $B_1,...,B_M$ are observationally equivalent if the degrees of freedom and skewness parameters are ordered accordingly and the signs of the skewness parameter values are swapped for the columns of $B_1,...,B_M$ whose signs have been swapped. 
After estimation, the ordering and signs of the columns of $B_1$ and $B_2$ are standardized by assuming, without loss of generality, that the degrees-of-freedom parameters are in an ascending order and that the first skewness parameter is negative and the second one positive, in line with the true parameter values. %
%\footnote{We also considered dealing with the issue of (potential) weak identification with respect to the ordering and signs of $B_2$, discussed in Section~\ref{sec:identtvar}, \ref{sec:identstvar}, and \ref{sec:labellingshocks}, by utilizing the local solution yielding the highest penalized log-likelihood in which the elements of the second column of the estimated $B_2$ are positive and in ascending order, in line with the true parameter values presented in Table~\ref{tab:monteparams}. If such constraints are not found among the local solutions (due to estimation error), the local solution yielding the greatest penalized log-likelihood would be used (hereafter the "first-best solution"). However, since a local solution (other than the first-best solution) would have been utilized in only $3$ out of the total of $10000$ models estimated in the Monte Carlo study, we opted to simply use the first-best solution.}
%
To estimate the bias of the PML estimator, we calculate the average estimation errors $\hat{\boldsymbol{\theta}} - \boldsymbol{\theta}$ over the $500$ Monte Carlo repetitions, and to estimate the spread of the estimates, we calculate their sample standard deviations.% The results are presented Table~\ref{tab:monteresults}.

\begin{table}[p]
\footnotesize
%\scriptsize
\renewcommand{\arraystretch}{0.85}
\centering
\begin{tabular}{c c @{\hspace{8pt}} d{0}@{\hspace{0pt}}d{0} @{\hspace{14pt}} d{0}@{\hspace{0pt}}d{0} @{\hspace{14pt}} d{0}@{\hspace{0pt}}d{0} @{\hspace{14pt}} d{0}@{\hspace{8pt}}d{0} @{\hspace{14pt}} d{0}@{\hspace{8pt}}d{0}}
& & \multicolumn{2}{c}{$T=250$}\hspace{4pt} & \multicolumn{2}{c}{$T=500$}\hspace{8pt} & \multicolumn{2}{c}{$T=1000$}\hspace{10pt} & \multicolumn{2}{c}{$T=2000$}\hspace{12pt} & \multicolumn{2}{c}{$T=10000$}\\ 
\hline\\[-1.7ex]
LSTVAR~1 & $\phi_{1}$   & $0.01$ & $(0.17)$ & $0.01$ & $(0.08)$ & $0.01$ & $(0.06)$ & $0.00$ & $(0.04)$ & $0.00$ & $(0.02)$ \\
         &              & $0.00$ & $(0.25)$ & $0.00$ & $(0.12)$ & $0.00$ & $(0.07)$ & $0.00$ & $(0.05)$ & $0.00$ & $(0.02)$ \\
         & $\phi_{2}$   & $-0.01$ & $(0.17)$ & $-0.01$ & $(0.11)$ & $0.00$ & $(0.08)$ & $0.00$ & $(0.06)$ & $0.00$ & $(0.03)$ \\
         &              & $-0.02$ & $(0.39)$ & $-0.02$ & $(0.24)$ & $-0.01$ & $(0.15)$ & $0.00$ & $(0.11)$ & $0.00$ & $(0.05)$ \\
         & $A_{1,1}$    & $0.00$ & $(0.15)$ & $0.01$ & $(0.07)$ & $0.00$ & $(0.04)$ & $0.00$ & $(0.03)$ & $0.00$ & $(0.01)$ \\
         &              & $-0.02$ & $(0.26)$ & $-0.01$ & $(0.13)$ & $0.00$ & $(0.06)$ & $0.00$ & $(0.04)$ & $0.00$ & $(0.01)$ \\
         &              & $0.02$ & $(0.08)$ & $0.01$ & $(0.05)$ & $0.00$ & $(0.03)$ & $0.00$ & $(0.02)$ & $0.00$ & $(0.01)$ \\
         &              & $-0.01$ & $(0.10)$ & $-0.01$ & $(0.06)$ & $0.00$ & $(0.04)$ & $0.00$ & $(0.03)$ & $0.00$ & $(0.01)$ \\
         & $A_{2,1}$    & $0.00$ & $(0.06)$ & $0.00$ & $(0.04)$ & $0.00$ & $(0.03)$ & $0.00$ & $(0.02)$ & $0.00$ & $(0.01)$ \\
         &              & $0.00$ & $(0.14)$ & $0.00$ & $(0.09)$ & $0.00$ & $(0.06)$ & $0.00$ & $(0.04)$ & $0.00$ & $(0.02)$ \\
         &              & $-0.01$ & $(0.04)$ & $0.00$ & $(0.03)$ & $0.00$ & $(0.02)$ & $0.00$ & $(0.01)$ & $0.00$ & $(0.01)$ \\
         &              & $-0.01$ & $(0.08)$ & $-0.01$ & $(0.05)$ & $0.00$ & $(0.04)$ & $0.00$ & $(0.02)$ & $0.00$ & $(0.01)$ \\
         & $B_1$        & $0.24$ & $(0.60)$ & $0.10$ & $(0.37)$ & $0.04$ & $(0.20)$ & $0.02$ & $(0.12)$ & $0.01$ & $(0.04)$ \\
         &              & $-0.09$ & $(0.46)$ & $-0.05$ & $(0.24)$ & $-0.02$ & $(0.13)$ & $-0.01$ & $(0.08)$ & $0.00$ & $(0.03)$ \\
         &              & $-0.03$ & $(0.12)$ & $-0.02$ & $(0.08)$ & $-0.01$ & $(0.04)$ & $0.00$ & $(0.03)$ & $0.00$ & $(0.01)$ \\
         &              & $-0.05$ & $(0.14)$ & $-0.03$ & $(0.08)$ & $-0.01$ & $(0.05)$ & $0.00$ & $(0.03)$ & $0.00$ & $(0.01)$ \\
         & $B_2$        & $0.44$ & $(0.72)$ & $0.19$ & $(0.42)$ & $0.08$ & $(0.22)$ & $0.04$ & $(0.12)$ & $0.01$ & $(0.04)$ \\
         &              & $0.05$ & $(0.17)$ & $0.02$ & $(0.09)$ & $0.01$ & $(0.05)$ & $0.01$ & $(0.03)$ & $0.00$ & $(0.01)$ \\
         &              & $-0.01$ & $(0.07)$ & $0.00$ & $(0.02)$ & $0.00$ & $(0.01)$ & $0.00$ & $(0.01)$ & $0.00$ & $(0.00)$ \\
         &              & $-0.03$ & $(0.19)$ & $0.00$ & $(0.03)$ & $0.00$ & $(0.02)$ & $0.00$ & $(0.02)$ & $0.00$ & $(0.01)$ \\
         & $c$          & $0.03$ & $(0.19)$ & $0.00$ & $(0.07)$ & $0.00$ & $(0.04)$ & $0.00$ & $(0.02)$ & $0.00$ & $(0.01)$ \\
         & $\gamma$     & $28.86$ & $(121.66)$ & $2.90$ & $(25.52)$ & $0.09$ & $(0.75)$ & $0.05$ & $(0.47)$ & $0.01$ & $(0.19)$ \\
         & $\nu_1$      & $-0.02$ & $(0.49)$ & $-0.02$ & $(0.32)$ & $-0.02$ & $(0.21)$ & $-0.01$ & $(0.15)$ & $0.00$ & $(0.07)$ \\
         & $\nu_2$      & $152.80$ & $(597.75)$ & $43.27$ & $(266.76)$ & $8.00$ & $(52.02)$ & $1.46$ & $(5.55)$ & $0.11$ & $(1.37)$ \\
         & $\lambda_1$  & $-0.04$ & $(0.09)$ & $-0.02$ & $(0.05)$ & $-0.01$ & $(0.04)$ & $0.00$ & $(0.02)$ & $0.00$ & $(0.01)$ \\
         & $\lambda_2$  & $0.02$ & $(0.10)$ & $0.01$ & $(0.07)$ & $0.00$ & $(0.05)$ & $0.00$ & $(0.03)$ & $0.00$ & $(0.01)$ \\
\hline\\[-1.7ex]
LSTVAR~2 & $\phi_{1}$   & $-0.01$ & $(0.18)$ & $-0.01$ & $(0.11)$ & $-0.01$ & $(0.08)$ & $-0.01$ & $(0.05)$ & $0.00$ & $(0.02)$ \\
         &              & $-0.10$ & $(0.19)$ & $-0.06$ & $(0.13)$ & $-0.03$ & $(0.08)$ & $-0.02$ & $(0.06)$ & $-0.02$ & $(0.02)$ \\
         & $\phi_{2}$   & $0.07$  & $(0.67)$ & $0.01$  & $(0.11)$ & $0.00$  & $(0.07)$ & $0.00$  & $(0.05)$ & $0.00$  & $(0.02)$ \\
         &              & $0.12$  & $(1.18)$ & $0.02$  & $(0.19)$ & $0.01$  & $(0.13)$ & $0.01$  & $(0.08)$ & $0.01$  & $(0.04)$ \\
         & $A_{1,1}$    & $-0.05$ & $(0.13)$ & $-0.02$ & $(0.07)$ & $-0.01$ & $(0.03)$ & $-0.01$ & $(0.02)$ & $0.00$  & $(0.01)$ \\
         &              & $-0.03$ & $(0.11)$ & $-0.02$ & $(0.08)$ & $-0.01$ & $(0.04)$ & $0.00$  & $(0.02)$ & $0.00$  & $(0.01)$ \\
         &              & $0.03$  & $(0.13)$ & $0.01$  & $(0.08)$ & $0.00$  & $(0.04)$ & $0.00$  & $(0.02)$ & $0.00$  & $(0.01)$ \\
         &              & $-0.03$ & $(0.08)$ & $-0.03$ & $(0.06)$ & $-0.02$ & $(0.03)$ & $-0.01$ & $(0.02)$ & $-0.01$ & $(0.01)$ \\
         & $A_{2,1}$    & $-0.02$ & $(0.11)$ & $-0.01$ & $(0.03)$ & $0.00$  & $(0.02)$ & $0.00$  & $(0.01)$ & $0.00$  & $(0.01)$ \\
         &              & $-0.02$ & $(0.21)$ & $-0.01$ & $(0.06)$ & $-0.01$ & $(0.04)$ & $0.00$  & $(0.03)$ & $0.00$  & $(0.01)$ \\
         &              & $-0.01$ & $(0.06)$ & $0.00$  & $(0.03)$ & $0.00$  & $(0.02)$ & $0.00$  & $(0.01)$ & $0.00$  & $(0.01)$ \\
         &              & $-0.03$ & $(0.13)$ & $0.00$  & $(0.05)$ & $0.00$  & $(0.03)$ & $0.00$  & $(0.02)$ & $0.00$  & $(0.01)$ \\
         & $B_1$        & $0.44$  & $(0.74)$ & $0.16$  & $(0.41)$ & $0.08$  & $(0.24)$ & $0.03$  & $(0.12)$ & $0.01$  & $(0.05)$ \\
         &              & $-0.14$ & $(0.55)$ & $-0.06$ & $(0.27)$ & $-0.04$ & $(0.18)$ & $-0.02$ & $(0.09)$ & $0.00$  & $(0.03)$ \\
         &              & $-0.04$ & $(0.13)$ & $-0.02$ & $(0.08)$ & $-0.01$ & $(0.06)$ & $-0.01$ & $(0.03)$ & $0.00$  & $(0.01)$ \\
         &              & $-0.03$ & $(0.19)$ & $-0.02$ & $(0.09)$ & $-0.02$ & $(0.05)$ & $-0.01$ & $(0.04)$ & $0.00$  & $(0.02)$ \\
         & $B_2$        & $0.64$  & $(0.84)$ & $0.25$  & $(0.44)$ & $0.12$  & $(0.25)$ & $0.05$  & $(0.12)$ & $0.01$  & $(0.04)$ \\
         &              & $0.08$  & $(0.19)$ & $0.03$  & $(0.09)$ & $0.01$  & $(0.05)$ & $0.01$  & $(0.03)$ & $0.00$  & $(0.01)$ \\
         &              & $-0.01$ & $(0.09)$ & $0.00$  & $(0.02)$ & $0.00$  & $(0.01)$ & $0.00$  & $(0.01)$ & $0.00$  & $(0.00)$ \\
         &              & $-0.04$ & $(0.23)$ & $-0.01$ & $(0.03)$ & $0.00$  & $(0.02)$ & $0.00$  & $(0.02)$ & $0.00$  & $(0.01)$ \\
         & $c$          & $0.23$  & $(0.80)$ & $0.05$  & $(0.31)$ & $0.00$  & $(0.06)$ & $0.00$  & $(0.04)$ & $0.00$  & $(0.01)$ \\
         & $\gamma$     & $79.82$ & $(349.54)$ & $22.77$ & $(90.88)$ & $5.71$  & $(44.05)$ & $0.39$  & $(4.28)$ & $0.10$  & $(0.30)$ \\
         & $\nu_1$      & $-0.11$ & $(0.57)$ & $-0.07$ & $(0.30)$ & $-0.05$ & $(0.21)$ & $-0.03$ & $(0.15)$ & $0.00$  & $(0.07)$ \\
         & $\nu_2$      & $192.21$& $(889.55)$ & $38.30$ & $(233.50)$& $9.58$  & $(79.39)$ & $1.37$  & $(4.71)$ & $0.15$  & $(1.42)$ \\
         & $\lambda_1$  & $-0.05$ & $(0.09)$ & $-0.02$ & $(0.06)$ & $-0.01$ & $(0.04)$ & $-0.01$ & $(0.03)$ & $0.00$  & $(0.01)$ \\
         & $\lambda_2$  & $0.00$  & $(0.11)$ & $0.00$  & $(0.07)$ & $0.00$  & $(0.05)$ & $0.00$  & $(0.03)$ & $0.00$  & $(0.01)$ \\
\end{tabular}
\caption{Results for the structural LSTVAR models from the Monte Carlo study assessing the performance of our penalized log-likelihood based three-step estimation method based on the PML estimator discussed in Section~\ref{sec:estimation}. The average estimation error is reported for the samples of length $250$, $500$, $1000$, $2000$, and $10000$ in each column first, and next to the biases are the standard deviations of the estimates in parentheses. The results are based on $500$ Monte Carlo repetitions.}
\label{tab:monteresults_lstvar}
\end{table}

The results of the Monte Carlo study are presented in Table~\ref{tab:monteresults_lstvar}. They show that there is some but mostly small bias, which diminishes when the sample size increases. Since also the standard deviation of the estimates diminishes with the sample size, our results are in line with (possible) consistency of the PML estimator. Nonetheless, for certain parameters, there appears to be a notable bias in small samples. 

First, the estimator of the second degrees-of-freedom parameter (true parameter value $\nu_2=12$) is substantially biased upwards in samples of length $T=250,500$, and $1000$, and also the standard deviation of the estimates is very large. This can explained by the property of the skewed $t$-distribution that when the degrees-of-freedom parameter value is already large, increasing it more has virtually no effect on the shape of the distribution. Consequently, if a large degrees-of-freedom parameter fits the data, the numerical optimization algorithms are incapable of appropriately discriminating between large and very large parameter values. Hence, possibly very large degrees-of-freedom parameter values drawn by random in Step~\ref{step:GA} estimation may remain large through Step~\ref{step:VA} estimation. However, the large upward bias does not seem concerning, as there is virtually no difference in the economic interpretations of large and very large degrees-of-freedom parameter values (with the latter amplifying the bias). %If only the Monte Carlo repetitions in which the estimate of $\nu_2$ is less than $100$ are considered, then for the specification TVAR~1 (for example), the estimated bias reduces to $5.37$ and the standard deviation to $17.63$ when $T=250$. 

Second, there is a substantial upward bias also for the scale parameter $\gamma$ when the sample size is small. A large scale parameter value indicates almost discrete regime-switches, suggesting that the PML estimator is, in small samples, biased towards estimating too fast regime-switches when they are relatively smooth in the true data generating process ($\gamma=5$). The upward bias is possibly heightened by the property that when $\gamma$ is large, the regime-switches are already nearly discrete, making it difficult for numerical estimation algorithms to properly distinguish between large and very large values, particularly in small samples. 

Third, the estimates of the first element of the first row of $B_1$ and $B_2$ exhibit a notable upward bias when the sample is small. A possible explanation is that the bias is related to the heavy tails of the skewed $t$-distribution, as the true degrees-of-freedom parameter value is small ($\nu_1=2.5$ for the related shock). %, and also the estimates of the first element of the second row of $B_1$ and $B_2$ are biased more than the estimates of most of the other parameters.
In particular, the estimates may be sensitive to extreme values drawn for the heavy-tailed skewed $t$ shocks, potentially amplifying small-sample bias. 

In general, the small-sample bias of the PML estimator seems to be slightly larger the specification where the true AR matrices are in the penalization region (LSTVAR~2) than the specification where the true AR matrices lie well inside the stability region (LSTVAR~1). However, the differences are quite small, and even in small samples, there does not appear to be notably issues with the estimation accuracy of the AR matrices lying inside the penalization region. %Also, the TVAR models seem to be slightly more accurately estimated than the LSTVAR models, possibly because, unlike in TVAR models, in LSTVAR models also the smoothness of regime-switches needs to be estimated.

%By the above-discussion, the overall performance of the proposed PML estimator and the three-step estimation procedure seems quite reasonable. Although there is notable small-sample bias in the estimators of certain parameters, explainable by the statistical properties of the true data generating process, the bias is very small for most of the parameters, and it diminishes for all of the parameters when the sample size increases. Since also the standard deviation of the estimates diminishes with the sample size, our results are in line with consistency of the PML estimator.

%-kakkos spesifikaatiosta biaset yleisesti vähän suurempia, mutta AR-matriisien osalta ei erityistä biasta tai eroa.
% -LSTVARissa suuremmat biakset
% -Mainitse biaksen katoaminen + sd:n pienentyminen in line with consistentcy
% Ja loppuun vielä yhteenvedoksi overall tulkinta. 

\section{Details on the empirical application}\label{sec:adequacy}
The estimation of the models, residual diagnostics, and computation of the GIRFs are carried out with the accompanying R package sstvars \citep{sstvars}, which implements the introduced methods. The R package sstvars also contains the dataset studied in the empirical application. The upper bound for the joint spectral radius (see Appendix~\ref{sec:stat}), in turn, is computed with the JSR toolbox \citep{Jungers:2023} in MATLAB. 
% As discussed in Section~\ref{sec:estimation}, we only consider solutions whose stationarity can be verified (see also Appendix~\ref{sec:stat}). 

%In the below residual based model diagnostics, we have not labelled the residual series by any of the variables, because identification with respect to the ordering of the columns of $B_2$ is weak (see Section~\ref{sec:labellingshocks}). One the other hand, changing the ordering of the columns of $B_{y,t}$ changes the ordering of the variables in the density function $t_1(I_{d,i}'B_{y,t}^{-1}(y_t - \mu_t);0,1,\nu_i)$, thus, changing the ordering of the residuals in the given regime. The residual time series can, therefore, be labelled only for the variables whose structural shocks have been labelled. Since we labelled the CPU shock as the first shock in Section~\ref{sec:labellingcpu}, the first residual series constitutes the residuals of CPUI. Since the other structural shocks were not labelled, the other residual series cannot be labelled for any of the variables.

To study whether our model adequately captures the autocorrelation structure of the data, we depict the sample autocorrelation functions (ACF) and cross-correlation functions (CCF) of the (non-standardized) residuals for the first $24$ lags in Figure~\ref{fig:resacf}. There is not much autocorrelation left in the residuals, but there are large correlation coefficients (CC) in the lag zero CCFs between the residuals of CPUI and EPUI. In addition, there are a few moderately sized CCs in the ACFs and CCFs of some of the residual series, but the vast majority of the CCs are small. Hence, our LSTVAR model appears to capture the autocorrelation structure of the data reasonably well.

To study our model's adequacy to capture the conditional heteroskedasticity in the data, we depict the sample ACFs and CCFs of squared standardized residuals for the first $24$ lags in Figure~\ref{fig:res2acf}. There are particularly large CC in the lag one CCFs between the IPI and EPUI, in the lag one CCF between the RATE and EPUI, and the lag zero CCF between RATE and EPUI. In addition, there are several sizeable CCs in the ACFs and CCFs of some of the residual series. Thus, there are clearly some inadequacies in capturing the conditional heteroskedasticity, but the vast majority of the CCs are small.

%The series of standardized residuals, presented in the top panels of Figure~\ref{fig:serqq}, also show some remaining heteroskedasticity and several outliers, many of which are accommodated by the heavy tailed $t$-distributions. %There is a particularly large negative and a particularly large positive residual in the second residual series, presumably related to the huge drop and subsequent increase of IPI in the beginning of the COVID-19 crisis. 
Finally, the quantile-quantile-plots based on the independent skewed $t$-distributions with mean zero, variance one, and degrees-of-freedom and skewness parameter values given by their estimates are presented in Figure~\ref{fig:serqq} for each residual series. They show that the marginal distributions of the series are reasonably well captured, although there is a single very large outlier in the IPI's residuals (from March 2020), which is related to the vast and fast drop in production caused by the COVID-19 lockdown. Therefore, the overall adequacy of the model seems, in our view, reasonable.

\begin{figure}[t]
    \centerline{\includegraphics[width=\textwidth - 2cm]{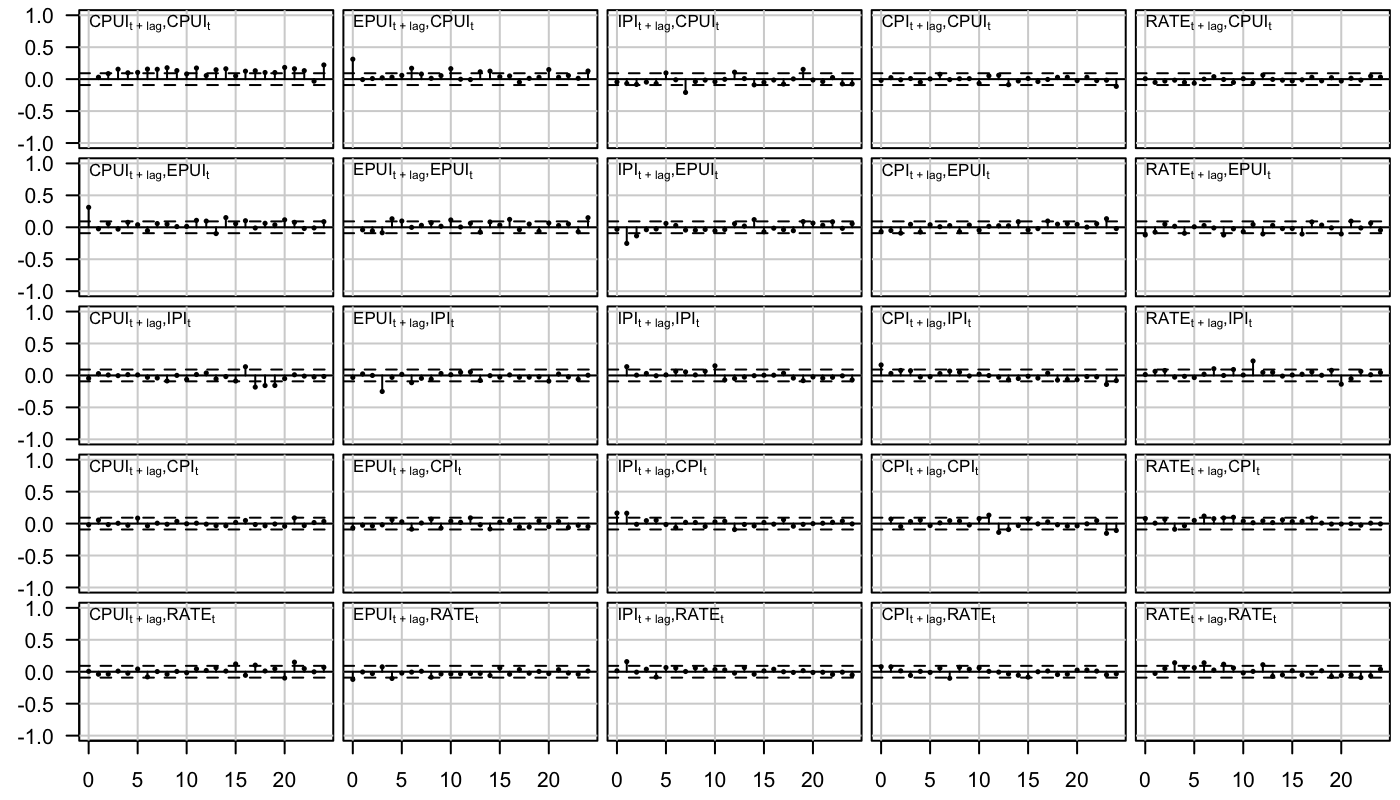}}
    \caption{Auto- and crosscorrelation functions of the residuals of the fitted two-regime second-order LSTVAR model for the lags $0,1,...,24$. The lag zero autocorrelation coefficients are omitted, as they are one by convention. The dashed lines are the $95\%$ bounds $\pm 1.96/\sqrt{T}$ for autocorrelations of IID observations.}
\label{fig:resacf}
\end{figure}

\begin{figure}[p]
    \centerline{\includegraphics[width=\textwidth - 2cm]{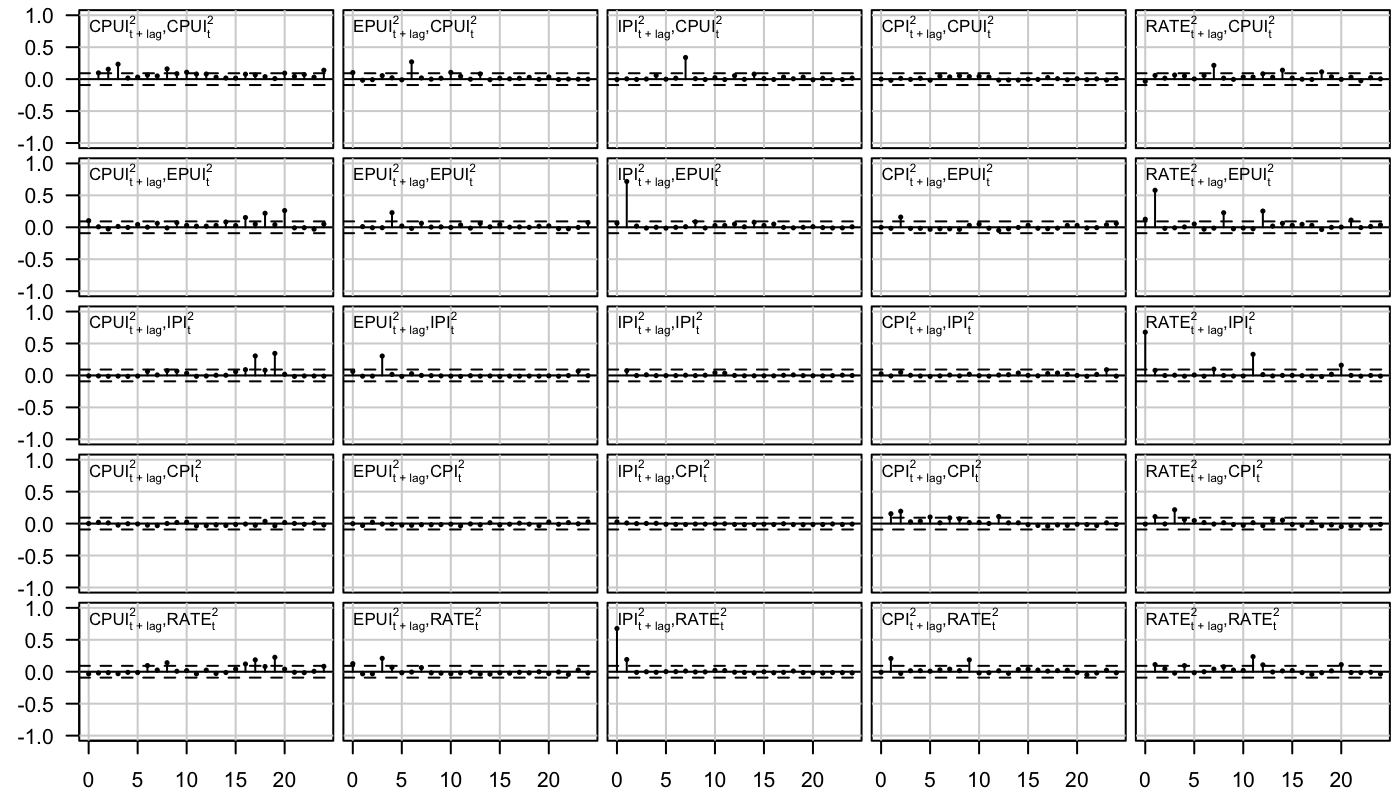}}
    \caption{Auto- and crosscorrelation functions of the squared standardized residuals of the fitted two-regime second-order LSTVAR model for the lags $0,1,...,24$. The lag zero autocorrelation coefficients are omitted, as they are one by convention. The dashed lines are the $95\%$ bounds $\pm 1.96/\sqrt{T}$ for autocorrelations of IID observations.}
\label{fig:res2acf}
\end{figure}

\begin{figure}[p]
    \centerline{\includegraphics[width=\textwidth - 2cm]{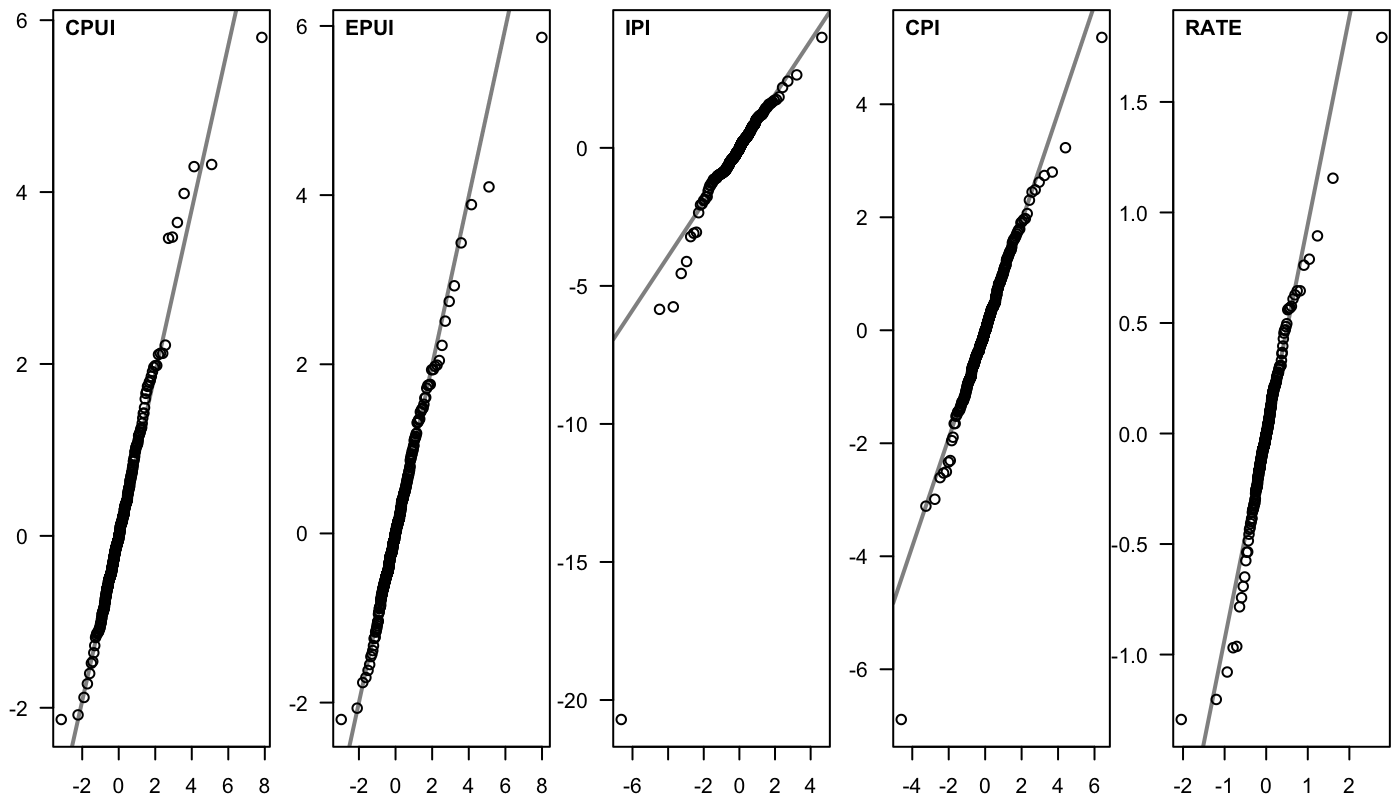}}
    \caption{Standardized residual time series and quantile-quantile-plots of the fitted two-regime second-order LSTVAR model. The quantile-quantile plots are based on $t$–distributions with zero mean, variance one, and degrees-of-freedom and skewness parameter values given by their estimates.}
\label{fig:serqq}
\end{figure}

\end{appendices}

\end{document}